\newcommand{\doublewidetilde}[1]{{%
		\mathpalette\double@widetilde{#1}}}
\newcommand{\double@widetilde}[2]{%
		\sbox\z@{$\m@th#1\widetilde{#2}$}%
		\ht\z@=.5\ht\z@
		\widetilde{\box\z@}}
\newtheorem{theorem}{Theorem}
\newtheorem{lemma}{Lemma}
\newtheorem{corollary}{Corollary}
\newtheorem{remark}{Remark}
\newtheorem{assumption}{Assumption}
\newcounter{eqnback}
\newcounter{eqncnt}
\begin{document}
%
\title{\huge Reconfigurable Intelligent Surface-Assisted Cell-Free Massive MIMO Systems Over Spatially-Correlated Channels \vspace*{-0.1cm}}

\author{\normalsize Trinh~Van~Chien, \textit{Member}, \textit{IEEE}, Hien~Quoc~Ngo, \textit{Senior Member}, \textit{IEEE},  Symeon~Chatzinotas, \textit{Senior Member}, \textit{IEEE}, Marco~Di~Renzo, \textit{Fellow}, \textit{IEEE}, and Bj\"{o}rn~Ottersten, \textit{Fellow}, \textit{IEEE} \vspace*{-0.4cm}
\thanks{The work of T. V. Chien, S. Chatzinotas, and B. Ottersten was supported by
	RISOTTI - Reconfigurable Intelligent Surfaces for Smart Cities under project
	FNR/C20/IS/14773976/RISOTTI. The work of H. Q. Ngo was supported by
	the UK Research and Innovation Future Leaders Fellowships under Grant
	MR/S017666/1. The work of M. Di Renzo was supported in part by the
	European Commission through the H2020 ARIADNE project under grant
	agreement number 871464 and through the H2020 RISE-6G project under
	grant agreement number 101017011. Parts of this paper were presented at IEEE GLOBECOM 2021 \cite{Chien2021a}. The associate editor coordinating the review of this paper and approving it for publication was Z. Zhang. (Corresponding author: Trinh Van Chien.)}
\thanks{T. V. Chien, S. Chatzinotas, B. Ottersten are with the Interdisciplinary Centre for Security, Reliability and Trust (SnT), University of Luxembourg, L-1855 Luxembourg, Luxembourg (email: vanchien.trinh@uni.lu, symeon.chatzinotas@uni.lu, and bjorn.ottersten@uni.lu).}
\thanks{H. Q. Ngo is with the School of Electronics, Electrical Engineering and Computer Science, Queen's University Belfast, Belfast BT7 1NN, United Kingdom (email: hien.ngo@qub.ac.uk).}
\thanks{M.  Di  Renzo  is  with  Universit\'e  Paris-Saclay,  CNRS,  CentraleSup\'elec, Laboratoire  des  Signaux  et  Syst\`emes, 3 Rue
	Joliot-Curie, 91192  Gif-sur-Yvette,  France (email: marco.di-renzo@universite-paris-saclay.fr).}
}

\maketitle

\begin{abstract}
Cell-Free Massive multiple-input multiple-output (MIMO) and reconfigurable intelligent surface (RIS) are two promising technologies for application to beyond-5G networks. This paper considers Cell-Free Massive MIMO systems with the assistance of an RIS for enhancing the system performance under the presence of spatial correlation among the engineered scattering elements of the RIS. Distributed maximum-ratio processing is considered at the access points (APs). We introduce an \textit{aggregated channel} estimation approach that provides sufficient information for data processing with the main benefit of reducing the overhead required for channel estimation. The considered system is studied by using asymptotic analysis which lets the number of APs and/or the number of RIS elements grow large. A lower bound for the channel capacity is obtained for a finite number of APs and engineered scattering elements of the RIS, and closed-form expressions for the uplink and downlink ergodic net throughput are formulated in terms of only the channel statistics. Based on the obtained analytical frameworks, we unveil the impact of channel correlation, the number of RIS elements, and the pilot contamination on the net throughput of each user. In addition, a simple control scheme for optimizing the configuration of the engineered scattering elements of the RIS is proposed, which is shown to increase the channel estimation quality, and, hence, the system performance. Numerical results demonstrate the effectiveness of the proposed system design and performance analysis. In particular, the performance benefits of using RISs in Cell-Free Massive MIMO systems are confirmed, especially if the direct links between the APs and the users are of insufficient quality with high probability. 
\end{abstract}

\begin{IEEEkeywords}
Cell-free Massive MIMO, reconfigurable intelligent surface, maximum ratio processing, ergodic net throughput.
\end{IEEEkeywords}

%
\IEEEpeerreviewmaketitle

\vspace*{-0.5cm}
\section{Introduction}
\vspace*{-0.1cm}
In the last few decades, we have witnessed an exponential growth of the demand for wireless communication systems that provide reliable communications and ensure ubiquitous coverage, high spectral efficiency and low latency \cite{giordani2020toward}. To meet these requirements, several new technologies have been  incorporated  in $5$G communication standards, which include Massive multiple-input multiple-output (MIMO) \cite{Chien2017a}, millimeter-wave communications \cite{rappaport2017overview}, and network densification \cite{Bjornson2016c}. Among them, Massive MIMO has gained significant attention since it can offer a good service to many users in the network. Moreover, the net throughput offered by a Massive MIMO system is close to the Shannon capacity, in many scenarios, by only employing simple linear processing techniques, such as maximum ratio (MR) or zero forcing (ZF) processing. Since the net throughput can be computed in a closed-form expression that only depends on the channel statistics, the optimized designs are applicable for a long period of time \cite{van2020power}. The colocated Massive MIMO architecture has the advantage of low backhaul requirements since the base station antennas are installed in a compact array. Conventional cellular networks, however, are impaired by  intercell interference. In particular, the users at the cell boundaries are impaired by high intercell interference and path loss, and hence, they may experience insufficient performance. More advanced signal processing methods are necessary to overcome the inherent intercell interference that characterizes conventional cellular network deployments. 

Cell-Free Massive MIMO has recently been introduced to reduce the intercell interference that characterizes colocated Massive MIMO architectures. Cell-Free Massive MIMO is a network deployment where a large number of access points (APs) are located in a given coverage area to serve a small number of users \cite{interdonato2019ubiquitous,ngo2017cell,9136914,interdonato2021enhanced}. All APs collaborate with each other via a backhaul network and serve all the users in the absence of cell boundaries. The system performance is enhanced in Cell-Free Massive MIMO systems because they inherit the benefits of the distributed MIMO and network MIMO architectures, but the users are also close to the APs. When each AP is equipped with a single antenna, MR processing results in a good net throughput for every user, while ensuring a low computational complexity and offering a distributed implementation that is convenient for scalability purposes \cite{9064545}. However, Cell-Free Massive MIMO cannot guarantee a good quality of service under harsh propagation conditions, such as in the presence of poor scattering environments or high attenuation due to the presence of large obstacles. 

Reconfigurable intelligent surface (RIS) is an emerging technology that is capable of shaping the radio waves at the electromagnetic level without applying digital signal processing methods and without requiring power amplifiers \cite{wu2019intelligent,le2020robust, 9140329}. Each element of the RIS scatters (e.g., reflects) the incident signal without using radio frequency chains and power amplification \cite{9326394}. Integrating an RIS into  wireless networks introduces digitally controllable links that scale up with the number of engineered scattering elements of the RIS, whose estimation is, however, challenged by the lack of digital signal processing units at the RIS \cite{9198125,9200578,abrardo2020intelligent,8937491,wei2021channel}. For simplicity, the main attention has so far been concentrated on designing the phase shifts under the assumption of perfect channel state information (CSI) \cite{9198125,perovic2021achievable} and the references therein. In \cite{wei2021channel}, the authors have recently discussed the fundamental issues of performing channel estimation in RIS-assisted wireless systems. The impact of the channel estimation overhead and reporting on the spectral efficiency, energy efficiency, and their tradeoff has recently been investigated in \cite{9200578}. In \cite{9198125} and \cite{abrardo2020intelligent}, to reduce the impact of the channel estimation overhead, the authors have investigated the design of RIS-assisted communications in the presence of statistical CSI. As far as the integration of Cell-Free Massive MIMO and RIS is concerned, recent works have formulated and solved optimization problems with different communication objectives under the assumption of perfect (and instantaneous) CSI  \cite{zhou2020achievable, zhang2020capacity, bashar2020performance, 9286726, zhang2021beyond}. Recent results in the context of single-input single-output (SISO) and multi-user MIMO systems have, however, shown that designs for the engineered scattering
	elements of the RIS that are based on statistical CSI may be of practical interest and provide good performance  \cite{abrardo2020intelligent,9195523,hou2020reconfigurable,van2021outage}.

In the depicted context, no prior work has analyzed the performance of RIS-assisted Cell-Free Massive MIMO systems in the presence of spatially-correlated channels. In this work, motivated by these considerations, we introduce an analytical framework for analyzing and optimizing the uplink and downlink transmissions of RIS-assisted Cell-Free Massive MIMO systems under spatially correlated channels
and in the presence of direct links subject to the presence of blockages. In particular, the main contributions made by this paper can be summarized as follows:
\begin{itemize}
    \item We consider an RIS-assisted Cell-Free Massive MIMO under spatially correlated channels. All APs estimate the instantaneous channels in the uplink pilot training phase. We exploit a channel estimation scheme that estimates the aggregated channels including both the direct and indirect links, instead of every individual channel coefficient as in previous works \cite{bashar2020performance,wei2021channel}. For generality, the pilot contamination is assumed to originate from an arbitrary pilot reuse pattern. 
    \item We analytically show that, even by using a low complexity MR technique, the non-coherent interference, small-scale fading effects, and additive noise are averaged out when the number of APs and RIS elements increases. The received signal includes, hence, only the desired signal and the coherent interference. In addition, we show that the indirect links become dominant if the number of engineered scattering elements of the RIS increases.
    \item We derive a closed-form expression of the net throughput for both the uplink and downlink data transmissions. The impact of the array gain, coherent joint transmission, channel estimation errors, pilot contamination, spatial correlation, and phase shifts of the RIS, which determine the system performance, are explicitly observable in the obtained analytical expressions.
    \item With the aid of numerical simulations, we verify the effectiveness of the proposed channel estimation scheme and the accuracy of the closed-form expressions of the net throughput. The obtained numerical results show that the use of RISs  significantly enhances the net throughput per user, especially when the direct links are blocked with high probability.   
\end{itemize}
The rest of this paper is organized as follows: Section~\ref{Sec:SysModel} presents the system model, the channel model, and the channel estimation protocol. The uplink data transmission protocol and the asymptotic analysis by assuming a very large number of APs and engineered scattering
elements at the RIS are discussed in Section~\ref{Sec:UL}. A similar analysis for the downlink data transmission is reported in  Section~\ref{Sec:Downlink}. Finally, Section~\ref{Sec:NumRes} illustrates several numerical results, while the main conclusions are drawn in Section~\ref{Sec:Conclusion}. 

\textit{Notation}: Upper and lower bold letters are used to denote matrices and vectors, respectively. The identity matrix of size $N \times N$ is denoted by $\mathbf{I}_N$. The imaginary unit of a complex number is denoted by $j$ with $\sqrt{j} = -1$. The superscripts $(\cdot)^{\ast},$ $(\cdot)^T,$ and $(\cdot)^H$ denote the complex conjugate, transpose, and Hermitian transpose, respectively. $\mathbb{E}\{ \cdot\}$ and $\mathsf{Var} \{ \cdot \}$ denote the expectation and variance of a random variable. The circularly symmetric Gaussian distribution is denoted by  $\mathcal{CN}(\cdot, \cdot)$ and  $\mathrm{diag} (\mathbf{x})$ is the diagonal matrix whose main diagonal is given by $\mathbf{x}$. $\mathrm{tr}(\cdot)$ is the trace operator. The Euclidean norm of vector $\mathbf{x}$ is $\| \mathbf{x}\|$, and $\| \mathbf{X} \|_2$ is the spectral norm of matrix $\mathbf{X}$. Finally, $\mathrm{mod}(\cdot,\cdot)$ is the modulus operation and $\lfloor \cdot \rfloor$ denotes the truncated argument.
\begin{figure}[t]
	\centering
	\includegraphics[trim=0.0cm 0.1cm 0.00cm 0.1cm, clip=true, width=3.0in]{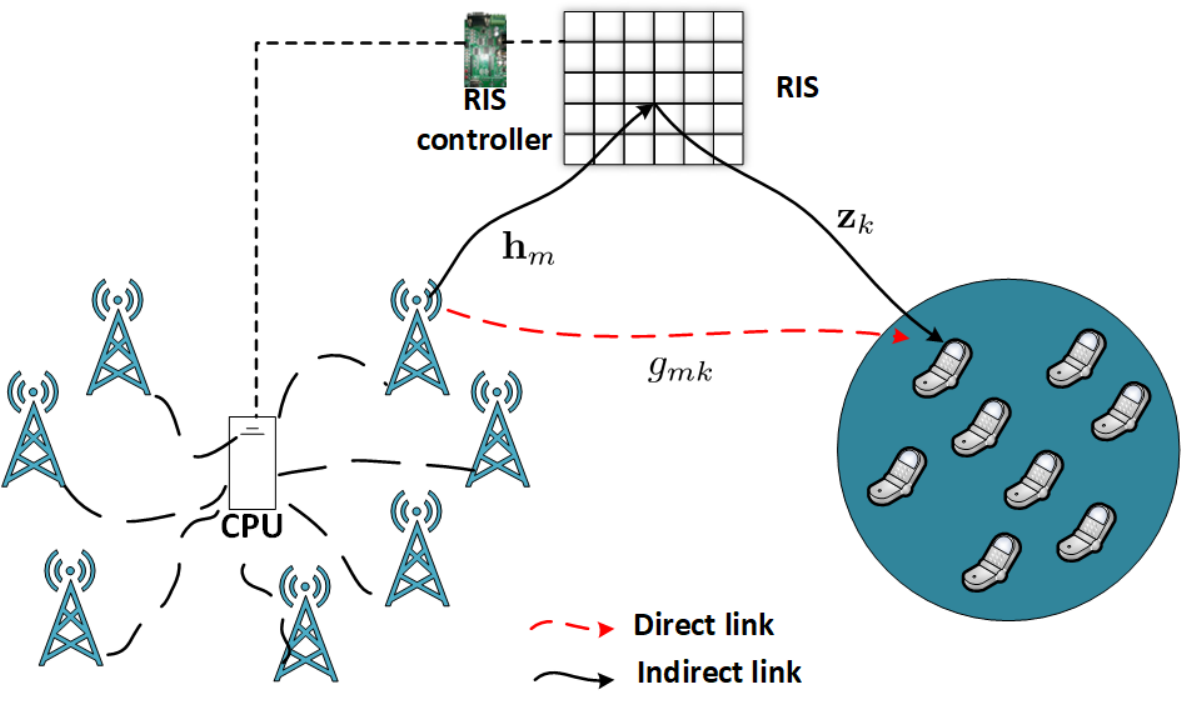} \vspace*{-0.2cm}
	\caption{An RIS-assisted Cell-Free Massive MIMO system where $M$ APs collaborate with each other to serve $K$ distant users.}
	\label{FigSysModel}
	\vspace*{-0.5cm}
\end{figure}
\vspace*{-0.2cm}
\section{System Model, Channel Estimation, and RIS Phase Shift Control} \label{Sec:SysModel}
\vspace*{-0.1cm}
We consider an RIS-assisted Cell-Free Massive MIMO system, where $M$ APs connected to a central processing unit (CPU) serve $K$  users on the same time and frequency resource, as schematically illustrated in Fig.~\ref{FigSysModel}. All APs and users are equipped with a single antenna and they are randomly located in the coverage area. Since the considered users are far away from the APs, the communication is assisted by an RIS that comprises $N$ engineered scattering elements that can modify the phases of the incident signals.\footnote{In general, a completed system should include many users randomly located in the coverage area. There are some favorable users where their links to some APs are strong. But there are also some unfavorable users where the their links to the APs are weak. This may come from the large path loss (long distances) or heavy shadowing. In our work, we consider the cases where RISs are deployed to improve the performance of these unfavorable users, and hence, the coverage of the whole system can be increased.} The matrix of phase shifts of the RIS is denoted by $\pmb{\Phi} = \mathrm{diag} \left( [ e^{j\theta_1}, \ldots, e^{j\theta_N}]^T \right)$, where $\theta_n \in [-\pi, \pi]$ is the phase shift applied by the $n$-th element of the RIS. The phase shifts are adjusted by a controller which exchanges information with the APs via a backhaul link (see Fig.~\ref{FigSysModel}). As a canonical form of Cell-Free Massive MIMO systems, we assume that the system operates in time-division duplexing (TDD) mode. Thus, we assume that channel reciprocity holds in the consisted system model.
\vspace*{-0.3cm}
\subsection{Channel Model}
\vspace*{-0.1cm}
We assume a quasi-static block fading model where the channels are static and frequency flat in each coherence interval comprising $\tau_c$ symbols. We assume that the APs estimate the channel during the uplink pilot training phase. Thus, $\tau_p$ symbols ($\tau_p < \tau_c$) in each coherence interval are dedicated to the channel estimation phase and the remaining $(\tau_c - \tau_p)$ symbols are utilized for the uplink and downlink data transmission phases.

The following notation is used: $g_{mk}$ is the channel between the user~$k$ and the AP~$m$, which is the direct link \cite{wu2019intelligent};  $\mathbf{h}_m \in \mathbb{C}^{N}$ is the channel between the AP~$m$ and the RIS; and $\mathbf{z}_{k} \in \mathbb{C}^{N}$ is the channel between the RIS and the user~$k$. The pair $\mathbf{h}_m$ and $\mathbf{z}_{k}$, which constitutes the cascaded channel, results in an indirect link (virtual line-of-sight link), which enhances the communication reliability between the AP~$m$ and the user~$k$ \cite{di2019smart}. The majority of existing works assume that the wireless channels undergo uncorrelated Rayleigh fading. In this paper, we consider a more realistic channel model by taking into account  the spatial correlation among the engineered scattering elements of the RIS, which is due to their sub-wavelength size, sub-wavelength inter-distance, and geometric layout. In an isotropic propagation environment, in particular, $g_{mk}$, $\mathbf{h}_{m}$, and $\mathbf{z}_{k}$ can be modeled as follows
\begin{equation} \label{eq:Channels}
g_{mk} \sim \mathcal{CN}(0, \beta_{mk}), \mathbf{h}_{m} \sim \mathcal{CN}(\mathbf{0}, \mathbf{R}_{m}), \mathbf{z}_{k}  \sim \mathcal{CN}(\mathbf{0}, \widetilde{\mathbf{R}}_{k}),
\end{equation}
where $\beta_{mk}$ is the large-scale fading coefficient; $\mathbf{R}_{m} \in \mathbb{C}^{N \times N}$ and $\widetilde{\mathbf{R}}_{k} \in \mathbb{C}^{N \times N}$ are the covariance matrices that characterize the spatial correlation among the channels of the RIS elements. The covariance matrices in \eqref{eq:Channels} correspond to a general model, which can be further particularized for application to typical RIS designs and propagation environments. For example, a simple exponential model was used to describe the spatial correlation among the engineered scattering elements of the RIS in \cite{alwazani2020intelligent}. Another recent model that is applicable to isotropic scattering with  uniformly distributed multipath components in the half-space in front of the RIS was recently reported in \cite{bjornson2020rayleigh}, whose covariance matrices are\footnote{This paper considers a spatial correlation model between engineered scattering elements in the far-field scenarios that is applicable and when the covariance matrices among the users differ only in terms of large-scale channel coefficients. Other scenarios where the users have covariance matrices with different phases \cite{clerckx2008correlated} are of interest and are left for future work. } 
\begin{equation}\label{eq:CovarMa}
\mathbf{R}_m = \alpha_{m} d_Hd_V\mathbf{R} \mbox{ and } \widetilde{\mathbf{R}}_{k} = \tilde{\alpha}_{k} d_Hd_V\mathbf{R},
\end{equation}
where $\alpha_{m}, \tilde{\alpha}_{k} \in \mathbb{C}$ are the large-scale channel coefficients, which, for example, model the signal attenuation due to large objects and due to the transmission distance. The matrices in \eqref{eq:CovarMa} assume that the size of each element of the RIS is $d_H \times d_V$, with $d_H$ being the horizontal width and $d_V$ being the vertical height of each RIS element. In particular, the $(m',n')-$th element of the spatial correlation matrix $\mathbf{R} \in \mathbb{C}^{N \times N }$ in \eqref{eq:CovarMa} is
$[\mathbf{R}]_{m'n'}= \mathrm{sinc} (2 \|\mathbf{u}_{m'} - \mathbf{u}_{n'} \|/ \lambda)$,
where $\lambda$ is the wavelength and $\mathrm{sinc}(x) = \sin(\pi x) / (\pi x)$ is the $\mathrm{sinc}$ function. The vector $\mathbf{u}_{x}, x \in \{ m',n'\}$ is given by
$\mathbf{u}_{x} = [0, \mod(x-1,N_H)d_H, \lfloor (x-1)/N_H\rfloor d_V]^T$,
where $N_H$ and $N_V$ denote the total number of RIS elements in each row and column, respectively. The channel model in \eqref{eq:Channels}  is significantly distinct from related works since the small-scale fading and the spatial correlation matrices are included in both links of the virtual line-of-sight link that comprises the RIS. In \cite{alwazani2020intelligent}, by contrast, the channels between the transmitters and the RIS are assumed to be deterministic, for analytical tractability.
\vspace*{-0.2cm}
\subsection{Uplink Pilot Training Phase}
\vspace*{-0.1cm}
The channels are independently estimated from the $\tau_p$ pilot sequences transmitted by the $K$ users. All the users share the same $\tau_p$ pilot sequences. In particular, $\pmb{\phi}_k \in \mathbb{C}^{\tau_p}$ with $\| \pmb{\phi}_k \|^2 = 1$ is defined as the pilot sequence allocated to the user~$k$. We denote by $\mathcal{P}_k$ the set of indices of the users (including the user $k$) that share the same pilot sequence as the user $k$. The pilot sequences are assumed to be mutually orthogonal such that the pilot reuse pattern is 
\begin{equation}
\pmb{\phi}_{k'}^H \pmb{\phi}_k = \begin{cases}
1, & \mbox{if } k' \in \mathcal{P}_k,\\
0, & \mbox{if } k' \notin \mathcal{P}_k.
\end{cases}
\end{equation}
During the pilot training phase, all the $K$ users transmit the pilot sequences to the $M$ APs simultaneously. In particular, the user~$k$ transmits the pilot sequence $\sqrt{\tau_p} \pmb{\phi}_k$. The received training signal at the AP~$m$, $\mathbf{y}_{pm} \in \mathbb{C}^{\tau_p}$, can be written as  
\begin{equation} \label{eq:ReceivedPilot}
\mathbf{y}_{pm} = \sum_{k=1}^K \sqrt{p \tau_p}  g_{mk} \pmb{\phi}_k + \sum_{k=1}^K \sqrt{p \tau_p} \mathbf{h}_{m}^H \pmb{\Phi} \mathbf{z}_{k} \pmb{\phi}_k  + \mathbf{w}_{pm},
\end{equation}
where $p$ is the normalized signal-to-noise ratio (SNR) of each pilot symbol, and $\mathbf{w}_{pm} \in \mathbb{C}^{\tau_p}$ is the additive noise at the AP~$m$, which is  distributed as $\mathbf{w}_{pm} \sim \mathcal{CN} (\mathbf{0}, \mathbf{I}_{\tau_p})$. In order for the AP~$m$ to estimate the desired channels from the user~$k$, the received training signal in \eqref{eq:ReceivedPilot} is projected on $\pmb{\phi}_k^H$ as 
\begin{multline} \label{eq:ReceivedPilotv1}
	y_{pmk} = \pmb{\phi}_k^H \mathbf{y}_{pm} =  \sqrt{p \tau_p}  \left(g_{mk} +  \mathbf{h}_{m}^H \pmb{\Phi} \mathbf{z}_{k} \right) + \\ 
	\sum_{k' \in \mathcal{P}_k \setminus \{k\} } \sqrt{p\tau_p}  \left(g_{mk'} +  \mathbf{h}_{m}^H \pmb{\Phi} \mathbf{z}_{k'} \right) + w_{pmk},
\end{multline}
where $w_{pmk} = \pmb{\phi}_k^H \mathbf{w}_{pm} \sim \mathcal{CN}(0, 1)$. We emphasize that the co-existence of the direct and indirect channels due to the presence of the RIS results in a complicated channel estimation process. In particular, the cascaded channel in \eqref{eq:ReceivedPilotv1} results in a nontrivial procedure for applying the minimum mean-square error (MMSE) estimation method, as reported in previous works, for processing the projected signals \cite{ngo2017cell,9136914}. Based on the specific signal structure in \eqref{eq:ReceivedPilotv1}, we denote the channel between the AP~$m$ and the user~$k$ through the RIS as
\begin{equation} \label{eq:umk}
u_{mk} = g_{mk} + \mathbf{h}_{m}^H \pmb{\Phi} \mathbf{z}_{k}, 
\end{equation}
which is referred to as the \textit{aggregated channel} that comprises the direct and indirect link between the user~$k$ and the AP~$m$. In contrast to  previous works, where the matrix $\pmb{\Phi}$ of the RIS phase shifts is optimized based on instantaneous CSI, in this paper, $\pmb{\Phi}$ is optimized based on statistical CSI. This is detailed in Section~\ref{sec:PS_optimization}. 
 By capitalizing on the definition of the aggregated channel in \eqref{eq:umk}, the required channels can be estimated in an effective manner even in the presence of the RIS. In particular, the aggregated channel in \eqref{eq:umk} is given by the product of weighted complex Gaussian and spatially correlated random variables, as given in \eqref{eq:Channels}. Despite the complex analytical form, the following lemma gives information on the statistics of the aggregated channels.
\begin{lemma} \label{lemma:ChannelProperty}
The second and fourth moments of the aggregated channel $u_{mk}$ can be formulated as follows
\begin{align}
\mathbb{E} \{ |u_{mk}|^2 \} &= \delta_{mk}, \label{eq:2Order} \\
\mathbb{E} \{ |u_{mk}|^4 \}  &= 2 \delta_{mk}^2 + 2 \mathrm{tr} ( \pmb{\Theta}_{mk}^2),\label{eq:4Order}
\end{align}
where $\pmb{\Theta}_{mk} = \pmb{\Phi}^H \mathbf{R}_m \pmb{\Phi} \widetilde{\mathbf{R}}_{k} $ and $\delta_{mk} = \beta_{mk} + \mathrm{tr}(\pmb{\Theta}_{mk})$. Moreover, the aggregated channels are mutually independent for $m \neq m'$ and $k \neq k'$, i.e.,
\begin{equation}
 \mathbb{E} \{ u_{mk} u_{m'k'}^\ast \} = \mathbb{E} \{ u_{mk} \} \mathbb{E} \{ u_{m'k'}^\ast \} = 0.
\end{equation}
In addition, the aggregated channels $u_{mk}$ and $u_{m'k}, \forall m \neq m',$ and  the aggregated channels $u_{mk}$ and $u_{mk'}, \forall k \neq k',$ are mutually uncorrelated, i.e.,
\begin{align} 
& \mathbb{E} \{ u_{mk} u_{m'k}^\ast \} = 0 \mbox{ and } \mathbb{E} \{ u_{mk'} u_{mk}^\ast \} = 0.  \label{eq:Uncorrelated}
\end{align}
Besides, the aggregated channels $u_{mk}$, $u_{mk'}$, $u_{m'k}$, and $u_{m'k'},$ fulfill the following conditions
\begin{align}
& \mathbb{E} \{ |u_{mk} u_{m'k'}^\ast|^2 \} =\delta_{mk} \delta_{m'k'}, m \neq m', k \neq k', \label{eq:2IndeAggre}  \\
& \mathbb{E}\{ u_{mk}^\ast u_{mk'} u_{m'k'}^{\ast} u_{m'k}  \} = \mathrm{tr}( \pmb{\Theta}_{mk'} \pmb{\Theta}_{m'k}), m\neq m', k \neq k', \label{eq:4Chan} \\
& \mathbb{E} \{ | u_{mk} u_{m'k}^\ast|^2 \} = \delta_{mk} \delta_{m'k} + \mathrm{tr}( \pmb{\Theta}_{mk} \pmb{\Theta}_{m'k} ), m \neq m',  \label{eq:2UncorreChan}\\
& \mathbb{E} \{ | u_{mk} u_{mk'}^\ast|^2 \} = \delta_{mk} \delta_{mk'} + \mathrm{tr}( \pmb{\Theta}_{mk} \pmb{\Theta}_{mk'} ), k \neq k'.
\end{align}
\end{lemma}
\begin{proof}
See Appendix~\ref{appendix:ChannelProperty}.
\end{proof}
The moments in Lemma~\ref{lemma:ChannelProperty} are employed next for analyzing the channel estimation and the net throughput performance. We note, in addition, that the odd moments of $u_{mk}$, e.g., the first and third moments, are equal to zero. Conditioned on the phase shifts, we employ the linear MMSE method for estimating $u_{mk}$ at the AP. In spite of the complex structure of the RIS-assisted cascaded channel, Lemma~\ref{lemma:ChannelEst} provides analytical expressions of the estimated channels.
\begin{lemma} \label{lemma:ChannelEst}
By assuming that the AP~$m$ employs the linear MMSE estimation method based on the signal observation in \eqref{eq:ReceivedPilotv1}, the estimate of the aggregated channel ${u}_{mk}$ can be formulated as
\begin{equation} \label{eq:ChannelEst}
	\hat{u}_{mk} =  \big(\mathbb{E}\{ y_{pmk}^\ast u_{mk} \} y_{pmk} \big)/ \mathbb{E} \{ | y_{pmk} |^2 \}   = c_{mk} y_{pmk},
\end{equation}
where $c_{mk} =  \mathbb{E}\{ y_{pmk}^\ast u_{mk} \} / \mathbb{E} \{ | y_{pmk} |^2 \}$ has the following closed-form expression
\begin{equation} \label{eq:cmk}
c_{mk} =  \frac{\sqrt{p\tau_p} \delta_{mk} }{p\tau_p \sum_{k' \in \mathcal{P}_k} \delta_{mk'} + 1}.
\end{equation}
The estimated channel in \eqref{eq:ChannelEst} has zero mean and variance $\gamma_{mk}$ equal to
\begin{equation} \label{eq:gammamk}
\gamma_{mk} = \mathbb{E} \{ |\hat{u}_{mk}|^2 \} = \sqrt{p\tau_p}  \delta_{mk}c_{mk}.
\end{equation}
Also, the channel estimation error $e_{mk} = u_{mk} - \hat{u}_{mk}$ and the channel estimate $\hat{u}_{mk}$ are uncorrelated. The channel estimation error has zero mean and variance equal to
\begin{equation} \label{eq:EstError}
 \mathbb{E}\big\{ |e_{mk} |^2 \big\} = \delta_{mk}  - \gamma_{mk}.
\end{equation}
\end{lemma}
\begin{proof}
It is similar to the proof in \cite{Kay1993a}, and is obtained by applying similar analytical steps to the received signal in \eqref{eq:ReceivedPilotv1} and by taking into account the structure of the RIS-assisted cascaded channel and the spatial correlation matrices in \eqref{eq:Channels}.
\end{proof}
Lemma~\ref{lemma:ChannelEst} shows that, by assuming $\pmb{\Phi}$ fixed, the aggregated channel in \eqref{eq:umk} can be estimated without increasing the pilot training overhead, i.e., only $\tau_p$ symbols in each coherence interval are needed for channel estimation, which is the same as for conventional Cell-Free Massive MIMO systems. If the user~$k'$ uses the same pilot sequence as the user~$k$ does, then $\hat{u}_{mk'} = c_{mk'} y_{pmk}$ from \eqref{eq:ChannelEst}. Consequently, we obtain the relation $\hat{u}_{mk'}  = \frac{c_{mk'}}{c_{mk}}\hat{u}_{mk}$, which implies that, because of pilot contamination, it may be difficult to distinguish the signals of two generic users. In that regard it is worth noting that, to get rid of pilot contamination, one can assign mutually orthogonal pilot signals to all the users in the network (if  the coherence time is long enough so that $\tau_p \geq K$). Under mutually orthogonal pilot sequences, $c_{mk}$ and $\gamma_{mk}$ simplify to $c_{mk}^o$ and $\gamma_{mk}^o$, respectively, as follows
\begin{equation}
c_{mk}^o = \frac{\sqrt{p\tau_p} \delta_{mk}}{p\tau_p \delta_{mk} +1}, \, \gamma_{mk}^o =\sqrt{p\tau_p}\delta_{mk} c_{mk}^o.
\end{equation}
This implies that, in the absence of pilot contamination, we have $\gamma_{mk}^o \rightarrow \delta_{mk}$ as $\tau_p \rightarrow \infty$, i.e., the variance of the channel estimation error in \eqref{eq:gammamk} is equal to zero. The channel estimates given in Lemma~\ref{lemma:ChannelEst} can be applied to an arbitrary set of phase shifts and covariance matrices. To facilitate the performance analysis presented next, we introduce the following corollary that characterizes the correlation between the aggregated channels and their estimates.
\begin{corollary} \label{CorollaryReal}
Let us consider the two aggregated channels $u_{mk}$ and $u_{m'k}$ with $m\neq m'$, and let us denote
\begin{align}
	o_{mk} &= \sqrt{\omega_{mk}} \hat{u}_{mk}^\ast u_{mk} - \sqrt{\omega_{mk}} \mathbb{E}\{\hat{u}_{mk}^\ast u_{mk}\}, \\  o_{m'k} &= \sqrt{\omega_{m'k}} \hat{u}_{m'k}^\ast u_{m'k} -  \sqrt{\omega_{mk}} \mathbb{E}\{\hat{u}_{mk}^\ast u_{mk}\},
\end{align}
where $\omega_{mk}$ and $\omega_{m'k}$ are two non-negative deterministic scalars. Then, the following holds
\begin{equation} \label{eq:omkmk}
\mathbb{E}\{ o_{mk} o_{m'k}^{\ast} \} =   p \tau_p \sqrt{\omega_{mk} \omega_{m'k}} c_{mk} c_{m'k} \sum_{k' \in \mathcal{P}_k} \mathrm{tr}( \pmb{\Theta}_{mk} \pmb{\Theta}_{m'k'}  ).
\end{equation}
\end{corollary}
\begin{proof}
See Appendix~\ref{AppendixReal}.
\end{proof}
Both Lemma~\ref{lemma:ChannelProperty} and Corollary~\ref{CorollaryReal} indicate that the presence of an RIS makes the channel statistics more complex, compared to a conventional Cell-Free Massive MIMO system, due to the correlation among the propagation channels.  In the next sections, the analytical expression of the channel estimates in Lemma~\ref{lemma:ChannelEst} and the properties in Corollary~\ref{CorollaryReal} and Lemma~\ref{lemma:ChannelProperty} 
are employed for signal detection in the uplink and for beamforming in the downlink. Also, the same results are used to optimize the phase shifts of the RIS in order to minimize the channel estimation error and to evaluate the corresponding ergodic net throughput.
\vspace{-0.2cm}
\subsection{RIS Phase Shift Control and Optimization}\label{sec:PS_optimization}
\vspace{-0.1cm}
Channel estimation is a critical aspect in Cell-Free Massive MIMO. As discussed in previous text, in many scenarios, non-orthogonal pilots have to be used. This causes pilot contamination, which may significantly reduce the system performance. In this section, we design an RIS-assisted phase shift control scheme that is aimed to improve the quality of channel estimation. To this end, we introduce the normalized mean square error (NMSE) of the channel estimate of the user~$k$ at the AP~$m$, as follows
\begin{equation} \label{eq:NMSEmk}
\mathrm{NMSE}_{mk} = \frac{\mathbb{E}\{|e_{mk}|^2\}}{\mathbb{E}\{|u_{mk}|^2\}} =1 - \frac{p \tau_p \delta_{mk} }{p \tau_p \sum_{k' \in \mathcal{P}_{k}}\delta_{mk'} + 1},
\end{equation}
where the last equality is obtained from \eqref{eq:2Order} and \eqref{eq:EstError}. The NMSE is a suitable metric to evaluate the channel estimation quality and to measure the relative channel estimation error per AP. By definition, the NMSE lies in the range $[0, 1]$. In particular, the NMSE tends to zero if orthogonal pilot signals are used for every user and the pilot power is sufficiently large. In general, however, the NMSE is greater than zero if the $K$ users reuse the pilot signals, i.e., $\tau_p < K$, since $\mathrm{NMSE}_{mk} \rightarrow 1 - \delta_{mk}/(\sum_{k' \in \mathcal{P}_k} \delta_{mk'})$ as $p \rightarrow \infty$. We propose to optimize the phase shift matrix $\pmb{\Phi}$ of the RIS so as to minimize the total NMSE obtained from all the users and all the APs, as follows
\begin{equation} \label{Prob:NMSEk}
\begin{aligned}
& \underset{\{ \theta_n \} }{\mathrm{minimize}}
&&   \sum_{m=1}^M \sum_{k=1}^K \mathrm{NMSE}_{mk} \\
& \,\,\mathrm{subject \,to}
& & -\pi \leq \theta_n \leq \pi, \forall n.
\end{aligned}
\end{equation}
We emphasize that the optimal phase shifts obtained by solving the problem in~\eqref{Prob:NMSEk} are independent of the instantaneous CSI and depend only on the statistical CSI, i.e., the large-scale fading coefficients and the channel covariance matrices. Problem~\eqref{Prob:NMSEk} is a fractional program, whose globally-optimal solution is not simple to be obtained for an RIS with a large number of independently tunable elements. Nonetheless, in the special network setup where the direct links from the APs to the users
are weak enough to be negligible with respect to the RIS-assisted links, the optimal solution to problem~\eqref{Prob:NMSEk} is available in a closed-form expression as in Corollary~\ref{corollary:EqualPhase}.
\begin{corollary} \label{corollary:EqualPhase}
If the direct links are weak enough to be negligible and the RIS-assisted channels are spatially correlated as formulated in \eqref{eq:CovarMa}, the optimal minimizer of the optimization problem in \eqref{Prob:NMSEk} is $\theta_1 = \ldots = \theta_N$, i.e., the equal phase shift design is optimal.
\end{corollary}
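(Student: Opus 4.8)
The plan is to reduce the multi-variable fractional program~\eqref{Prob:NMSEk} to the maximization of a single scalar functional of the phase shifts, and then to maximize that functional in closed form. First I would invoke the hypothesis that the direct links are negligible, i.e.\ $\beta_{mk}\to 0$ for all $m,k$, together with the separable covariance structure in~\eqref{eq:CovarMa}. Substituting $\mathbf{R}_m = \alpha_m d_H d_V \mathbf{R}$ and $\widetilde{\mathbf{R}}_{mk} = \tilde{\alpha}_{mk} d_H d_V \mathbf{R}$ into the trace terms yields $\mathrm{tr}\big(\pmb{\Phi}^H \mathbf{R}_m \pmb{\Phi}\widetilde{\mathbf{R}}_{mk}\big) = \alpha_m \tilde{\alpha}_{mk}(d_H d_V)^2\, t$, where $t = \mathrm{tr}\big(\pmb{\Phi}^H\mathbf{R}\pmb{\Phi}\mathbf{R}\big)$ is one and the same scalar for every pair $(m,k)$. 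Consequently, each summand of~\eqref{eq:NMSEmk} depends on $\pmb{\Phi}$ only through $t$.

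Next, writing $a_{mk} = p\tau_p \alpha_m \tilde{\alpha}_{mk}(d_H d_V)^2 > 0$ and $b_{mk} = \sum_{k'\in\mathcal{P}_k} a_{mk'} > 0$, each term collapses to $\mathrm{NMSE}_{mk} = 1 - a_{mk} t/(b_{mk} t + 1)$. Since $\pmb{\Phi}^H\mathbf{R}\pmb{\Phi}$ is positive semidefinite (being congruent to the correlation matrix $\mathbf{R}\succeq 0$) and $\mathbf{R}\succeq 0$, the trace of their product satisfies $t \ge 0$; differentiating shows that $a_{mk} t/(b_{mk} t + 1)$ has derivative $a_{mk}/(b_{mk} t + 1)^2 > 0$ and is therefore strictly increasing on $t\ge 0$. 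Hence every $\mathrm{NMSE}_{mk}$ is decreasing in $t$, and minimizing the total NMSE is equivalent to maximizing the single quantity $t$ over the feasible phase shifts.

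The remaining step is to maximize $t$ explicitly. Expanding the trace and using that $\mathbf{R}$ is real and symmetric with $[\mathbf{R}]_{nn} = \mathrm{sinc}(0) = 1$, I would obtain
\[
t = \sum_{n=1}^N\sum_{n'=1}^N e^{j(\theta_{n'}-\theta_n)}\,[\mathbf{R}]_{nn'}^2 = N + 2\sum_{n<n'}\cos(\theta_{n'}-\theta_n)\,[\mathbf{R}]_{nn'}^2,
\]
where the imaginary contributions cancel pairwise and the diagonal contributes $\sum_n [\mathbf{R}]_{nn}^2 = N$. The decisive structural feature is that the weights $[\mathbf{R}]_{nn'}^2$ are nonnegative, because the quadratic double-$\mathbf{R}$ form squares each correlation entry. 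Bounding $\cos(\cdot)\le 1$ then gives $t \le N + 2\sum_{n<n'}[\mathbf{R}]_{nn'}^2$, with equality attained exactly when $\theta_{n'}-\theta_n \equiv 0$ for every pair with $[\mathbf{R}]_{nn'}\neq 0$. The feasible choice $\theta_1 = \cdots = \theta_N$ satisfies this condition for all pairs simultaneously, so the equal-phase design maximizes $t$ and therefore minimizes the total NMSE, which establishes the corollary.

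The step I expect to be the crux is recognizing that, under the separable model, the entire objective degenerates onto the scalar $t$ and that the relevant weights appear \emph{squared}. This nonnegativity is what removes any possibility of sign cancellation between off-diagonal terms and makes the cosines the only quantities to control; without the separable covariance factorization, or without the squaring induced by the two factors of $\mathbf{R}$, the equal-phase solution would not in general be optimal, so isolating and justifying this structure is the main obstacle rather than any of the subsequent routine calculus.
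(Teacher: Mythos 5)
Your proof is correct and follows essentially the same route as the paper: under the separable model \eqref{eq:CovarMa} with negligible direct links, both arguments collapse the sum-NMSE onto the single scalar $t=\mathrm{tr}\big(\pmb{\Phi}^H\mathbf{R}\pmb{\Phi}\mathbf{R}\big)$, show by the same derivative computation that the objective is monotonically decreasing in $t$, and then maximize $t$ using the nonnegativity of the squared entries $(r^{nn'})^2$ with equality at equal phases. Your explicit cosine expansion $t=N+2\sum_{n<n'}\cos(\theta_{n'}-\theta_n)[\mathbf{R}]_{nn'}^2$ is just the real form of the paper's modulus bound in \eqref{eq:Tracev1} (which the paper attributes to Cauchy--Schwarz but is really the triangle inequality), and your equality condition is if anything stated slightly more carefully.
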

\begin{proof}
See Appendix~\ref{appendix:CorMSEk}. 
\end{proof}
If the direct links are completely blocked and the spatial correlation model in \eqref{eq:CovarMa} holds, Corollary~\ref{corollary:EqualPhase} provides a simple but effective option to design the phase shifts of the RIS while ensuring the optimal estimation of the aggregated channels according to the sum-NMSE minimization criterion. Therefore, an efficient channel estimation protocol can be designed even in the presence of an RIS equipped with a large number of engineered scattering elements. The numerical results  in Section~\ref{Sec:NumRes}
show that the phase shifts design obtained in Corollary~\ref{corollary:EqualPhase} offers good gains in terms of net throughput even if the direct links cannot be completely ignored.
\begin{remark}\label{RemarkProb}
The proposed optimization method of the phase shifts of the RIS is based on the minimization of the sum-NMSE, and it is, therefore, based on improving the channel estimation quality. This is a critical objective in Massive MIMO systems, since improving the accuracy of channel estimation results in a noticeable enhancement of the uplink and downlink net throughput \cite{mai2018pilot,van2018joint}. If the direct links are not weak enough, the equal phase shift design is not optimal anymore, and the optimal solution to problem \eqref{Prob:NMSEk} may be obtained numerically. For example, one can compute the first-order derivative of the sum-NMSE in \eqref{Prob:NMSEk} with respect to each reflecting element, and the gradient descent algorithm may be utilized to obtain a locally-optimal solution of  \eqref{Prob:NMSEk} in an iterative manner. Another option would be to optimize the phase shifts of the RIS based on the maximization of the uplink or downlink ergodic net throughput. The solution of the corresponding optimization problem is, however, challenging and depends on whether the uplink or the downlink transmission phases are considered. Due to space limitations, therefore, we postpone this latter criterion for optimizing the phase shifts of the RIS to a future research work.
\end{remark}

\vspace*{-0.2cm}
\section{Uplink Data Transmission and Performance Analysis With MR Combining}\label{Sec:UL}
\vspace*{-0.1cm}
In this section, we first introduce a procedure to detect the uplink transmitted signals by capitalizing on the channel estimation method introduced in the previous section. Then, we derive an asymptotic closed-form expression of the ergodic net throughput.
\vspace*{-0.2cm}
\subsection{Uplink Data Transmission Phase}
\vspace*{-0.1cm}
In the uplink, all the $K$ users  transmit their data to the $M$ APs simultaneously. Specifically, the user~$k$ transmits a modulated symbol $s_k$ with $\mathbb{E}\{|s_k|^2\} =1$. This symbol is weighted by a power control factor $\sqrt{\eta_k}$, $0 \leq \eta_k \leq 1$, which enhances the spectral efficiency by, for example, compensating the near-far effects and mitigating the mutual interference among the users. Then, the received baseband signal, $y_{um} \in \mathbb{C},$ at the AP~$m$ is
\begin{equation} \label{eq:yum}
\begin{split}
y_{um} & = \sqrt{\rho_u} \sum_{k=1}^K \sqrt{\eta_k} \big(g_{mk} +  \mathbf{h}_{m}^H \pmb{\Phi} \mathbf{z}_{k} \big)s_k + w_{um} \\
&= \sqrt{\rho_u} \sum_{k=1}^K \sqrt{\eta_{k}} u_{mk} s_k + w_{um},
\end{split}
\end{equation}
where $\rho_u$ is the normalized uplink SNR of each data symbol, which is defined as the maximum transmit power divided by the noise variance, and $\rho_u \eta_k$ is the corresponding SNR of the user~$k$; and $w_{um} \sim \mathcal{CN}(0,1)$ is the normalized additive noise. For data detection, the MR combining method is used at the CPU, i.e., $\hat{u}_{mk}, \forall m,k,$ in \eqref{eq:ChannelEst} is employed to detect the data transmitted by the user~$k$. In mathematical terms, the corresponding decision statistic is
\begin{equation} \label{eq:ruk}
\begin{split}
r_{uk} &= \sum_{m=1}^M \hat{u}_{mk}^\ast y_{um}\\
&= \sqrt{\rho_u} \sum_{m=1}^M \sum_{k'=1}^K  \sqrt{\eta_{k'}} \hat{u}_{mk}^\ast u_{mk'} s_{k'} + \sum_{m=1}^M \hat{u}_{mk}^\ast w_{um}.
\end{split}
\end{equation}
Based on the observation $r_{uk}$, the uplink ergodic net throughput of the user~$k$ is analyzed in the next subsection. 
\vspace*{-0.2cm}
\subsection{Asymptotic Analysis of the Uplink Received Signal} \label{subsec:Asymul}
\vspace*{-0.1cm}
In the considered system model, the number of APs, $M$, and the number of tunable elements of the RIS, $N$, can be large. Therefore, we analyze the performance of two case studies: $(i)$ $N$ is fixed and $M$ is large; and $(ii)$ both $N$ and $M$ are large. The asymptotic analysis is conditioned upon a given setup of the CSI, which includes the large-scale fading coefficients, the covariance matrices, and the power utilized for the pilot and data transmission phases. To this end, the uplink weighted signal in \eqref{eq:ruk} is split into three terms based on the pilot reuse set $\mathcal{P}_k$, as follows
\begin{equation} \label{eq:rukv1}
\begin{split}
r_{uk} =&  \underbrace{\sqrt{\rho_u} \sum_{k' \in \mathcal{P}_k } \sum_{m=1}^M  \sqrt{\eta_{k'}} \hat{u}_{mk}^\ast u_{mk'} s_{k'}}_{\mathcal{T}_{k1}} + \\
& \underbrace{\sqrt{\rho_u} \sum_{k' \notin \mathcal{P}_k} \sum_{m=1}^M   \sqrt{\eta_{k'}} \hat{u}_{mk}^\ast u_{mk'} s_{k'}}_{\mathcal{T}_{k2}} + \underbrace{\sum_{m=1}^M \hat{u}_{mk}^\ast w_{um}}_{\mathcal{T}_{k3}},
\end{split}
\end{equation}
where $\mathcal{T}_{k1}$ accounts for the signals received from all the users in $\mathcal{P}_k$, and $\mathcal{T}_{k2}$ accounts for the mutual interference from the users that are assigned orthogonal pilot sequences. The impact of the additive noise obtained after applying MR combining is given by $\mathcal{T}_{k3}$. From \eqref{eq:ReceivedPilotv1}, \eqref{eq:umk}, and \eqref{eq:ChannelEst}, we obtain the following identity
\begin{equation} \label{eq:Termv1}
\begin{split}
& \sum_{m=1}^M \sqrt{\eta_{k'}} \hat{u}_{mk}^\ast u_{mk'}\\
& = \sum_{m=1}^M \sqrt{\eta_{k'}} c_{mk} u_{mk'} \left(  \sum_{k'' \in \mathcal{P}_k} \sqrt{p \tau_p} u_{mk''}^\ast + w_{pmk}^\ast \right) \\
& =  \sum_{m=1}^M \sqrt{p \tau_p \eta_{k'} }  c_{mk} |u_{mk'}|^2 + \sum_{k'' \in \mathcal{P}_k \setminus \{k' \} } \sum_{m=1}^M \sqrt{p\tau_p\eta_{k'}}   \\
& \quad \times c_{mk} u_{mk'}  u_{mk''}^{\ast} +  \sum_{m=1}^M \sqrt{\eta_{k'}} c_{mk} u_{mk'}   w_{pmk}^\ast.
\end{split}
\end{equation}

\subsubsection{Case I} $N$ is fixed and $M$ is large, i.e., $M \rightarrow \infty$. In this case, we divide both sides of \eqref{eq:Termv1} by $M$ and exploits Tchebyshev's theorem \cite{cramer2004random}\footnote{Let $X_1, \ldots, X_n$ be independent random variables such that $\mathbb{E}\{ X_i \} = \bar{x}_i$ and $\mathsf{Var}\{ X_i\} \leq c < \infty$. Then, Tchebyshev's theorem states $\frac{1}{n}\sum_{n'=1}^n X_{n'} \xrightarrow[n \rightarrow \infty]{P} \frac{1}{n} \sum_{n'} \bar{x}_{n'}.$} and \eqref{eq:2Order} to obtain
\begin{equation} \label{eq:Tchev1}
\frac{1}{M} \sum_{m=1}^M \sqrt{\eta_{k'}} \hat{u}_{mk}^\ast u_{mk'} \xrightarrow[M \rightarrow \infty ]{P} \frac{1}{M} \sum_{m=1}^M \sqrt{ p \tau_p \eta_{k'}}  c_{mk} \delta_{mk'},
\end{equation}
where $\xrightarrow{P}$ denotes the convergence in probability.\footnote{A sequence $\{ X_n \}$ of random variables converges in probability to the random variable $X$ if, for all $\epsilon > 0$, it holds that $\lim_{n \rightarrow \infty} \mathrm{Pr}(|X_n - X| > \epsilon ) = 0$, where $\mathrm{Pr}(\cdot)$ denotes the probability of an event.} Specifically, the second and third terms in \eqref{eq:Termv1} converge to zero due to the so-called favorable propagation conditions and since the aggregated channel and the noise are mutually independent \cite{van2021reconfigurable}. By inserting \eqref{eq:Tchev1} into the decision variable in \eqref{eq:rukv1}, we obtain the following deterministic value
\begin{equation} \label{eq:Asympt1}
\frac{1}{M}r_{uk} \xrightarrow[M\rightarrow \infty]{P} \frac{1}{M} \sum_{k' \in \mathcal{P}_k} \sum_{m=1}^M \sqrt{ p \tau_p \rho_u \eta_{k'} }  c_{mk} \delta_{mk'} s_{k'},
\end{equation}
because $\mathcal{T}_{k2}/M \rightarrow 0$ and $\mathcal{T}_{k3}/M \rightarrow 0 $ as $M \rightarrow \infty$. The result in \eqref{eq:Asympt1} unveils that, for a fixed $N$,  the channels become asymptotically orthogonal. In particular, the small-scale fading, the non-coherent interference, and the additive noise vanish. The only residual impairment is the pilot contamination caused by the users that employ the same pilot sequence. This result is the evidence that, due to pilot contamination, the system performance cannot be improved by adding more APs if MR combining is used. The contributions of both the direct and RIS-assisted indirect channels explicitly appear  in \eqref{eq:Asympt1} through the terms $\beta_{mk'}$ and $\mathrm{tr}(\pmb{\Theta}_{mk'})$, respectively.

\subsubsection{Case II}
Both $N$ and $M$ are large, i.e., $N\rightarrow \infty$ and $M \rightarrow \infty$. To analyze this case study, we need some assumptions on the covariance matrices $\mathbf{R}_m$ and $\widetilde{\mathbf{R}}_{k}$, as summarized as follows.
\begin{assumption} \label{Assumption1}
For $m= 1,\ldots,M$ and $k=1,\ldots,K,$  the covariance matrices $\mathbf{R}_m$ and $\widetilde{\mathbf{R}}_{k}$ are assumed to fulfill the following properties
\begin{align}
&\underset{N}{\limsup} \, \| \mathbf{R}_m\|_2 < \infty, \underset{N}{\liminf} \, \frac{1}{N} \mathrm{tr} ( \mathbf{R}_m) > 0, \label{eq:Asymp1}\\ 
&\underset{N}{\limsup} \, \| \widetilde{\mathbf{R}}_{k} \|_2 < \infty, \underset{N}{\liminf} \,  \frac{1}{N} \mathrm{tr} ( \widetilde{\mathbf{R}}_{k}) > 0. \label{eq:Asymp2v1}
\end{align}
\end{assumption}
The assumptions in \eqref{eq:Asymp1} and \eqref{eq:Asymp2v1} imply that the largest singular value and the sum of the eigenvalues (counted with their mutiplicity) of the $ N \times N $ covariance matrices that characterize the spatial correlation among the channels of the RIS elements are finite and positive. Dividing both sides of \eqref{eq:Termv1} by $MN$ and then applying Tchebyshev's theorem and \eqref{eq:2Order}, we obtain 
\begin{equation}\label{eq:AsymMNUL}
\frac{1}{MN} \sum_{m=1}^M \sqrt{\eta_{k'}} \hat{u}_{mk}^\ast u_{mk'} \xrightarrow[\substack{M \rightarrow \infty\\ N \rightarrow \infty} ]{P} \frac{1}{MN} \sum_{m=1}^M \sqrt{ p\tau_p \eta_{k'}}  c_{mk}  \mathrm{tr}( \pmb{\Theta}_{mk'}).
\end{equation}
We first observe that $\pmb{\Theta}_{mk'}$ is similar to $\widetilde{\mathbf{R}}_{k'}^{1/2} \pmb{\Phi} \mathbf{R}_{m} \pmb{\Phi}^H \widetilde{\mathbf{R}}_{k'}^{1/2}$, which is a positive semi-definite matrix.\footnote{Two matrices $\mathbf{A}$ and $\mathbf{B}$ of size $N \times N$ are similar if there exists an invertible $N \times N$ matrix $\mathbf{U}$ such that $\mathbf{B} = \mathbf{U}^{-1} \mathbf{A} \mathbf{U}$.} Since similar matrices have the same eigenvalues, it follows that $\mathrm{tr}( \pmb{\Theta}_{mk'}) >0$. Based on Assumption~\ref{Assumption1}, we obtain the following inequalities 
\begin{equation} \label{eq:traceConv}
\begin{split}
\frac{1}{N}\mathrm{tr}(\pmb{\Theta}_{mk'} ) &\stackrel{(a)}{\leq} \frac{1}{N} \| \pmb{\Phi} \|_2 \mathrm{tr}\big( \mathbf{R}_m \pmb{\Phi} \widetilde{\mathbf{R}}_{k'} \big) \\
&\stackrel{(b)}{=}  \frac{1}{N}\mathrm{tr}\big(  \pmb{\Phi} \widetilde{\mathbf{R}}_{k'} \mathbf{R}_m \big) \stackrel{(c)}{\leq} \frac{1}{N} \| \widetilde{\mathbf{R}}_{k'} \|_2 \mathrm{tr}(\mathbf{R}_{m}),
\end{split}
\end{equation}
where $(a)$ is obtained from Lemma~\ref{lemma:trace} in Appendix~\ref{Appendix:UsefulLemmas}; $(b)$ follows because $\|\pmb{\Phi}\|_2 = 1$; and  $(c)$ is obtained by applying again Lemma~\ref{lemma:trace}. Based on Assumption~\ref{Assumption1}, the last inequality in \eqref{eq:traceConv} is bounded by a positive constant. From \eqref{eq:AsymMNUL}, therefore, the decision variable in \eqref{eq:rukv1} can be formulated as
\begin{equation} \label{eq:Asymp2}
\frac{1}{MN}r_{uk} \xrightarrow[\substack{M\rightarrow \infty\\ N \rightarrow \infty}]{P} \frac{1}{MN} \sum_{k' \in \mathcal{P}_k} \sum_{m=1}^M \sqrt{p\tau_p \rho_u \eta_{k'}  }  c_{mk} \mathrm{tr}(\pmb{\Theta}_{mk'}) s_{k'},
\end{equation}
which is bounded from above thanks to \eqref{eq:traceConv}. The expression obtained in \eqref{eq:Asymp2} reveals that, as $M,N \rightarrow \infty$, the post-processed signal at the CPU consists of the desired signal of the intended user $k$ and the interference from the other users in $\mathcal{P}_k$. Compared with \eqref{eq:Asympt1}, we observe that \eqref{eq:Asymp2} is independent of the direct links and depends only on the RIS-assisted indirect links. This highlights the potentially promising contribution of the RIS, in the limiting regime $M,N \to \infty$, for enhancing the system performance.
\vspace*{-0.5cm}
\subsection{Uplink Ergodic Net Throughput with a Finite Number of APs and RIS Elements}
\vspace*{-0.15cm}
In this section, we focus our attention on the practical setup in which $M$ and $N$ are both finite. By utilizing the user-and-then forget channel capacity bounding method \cite{Marzetta2016a}, the uplink ergodic net throughput of the user $k$ can be written as follows
\begin{equation} \label{eq:ULRate}
R_{uk} = B \nu_u \left( 1 - \tau_p/\tau_c \right) \log_2 \left( 1 + \mathrm{SINR}_{uk} \right), \mbox{[Mbps]},
\end{equation}
where $B$ is the system bandwidth measured in MHz and $0\leq \nu_u \leq 1$ is the portion of each coherence time interval that is dedicated to the uplink data transmission phase. The effective uplink signal-to-noise-plus-interference ratio (SINR), which is denoted by $\mathrm{SINR}_{uk}$, is defined as follows
\begin{multline} \label{eq:ULSINR}
\mathrm{SINR}_{uk} = \\
 \frac{|\mathsf{DS}_{uk}|^2}{\mathbb{E} \{|\mathsf{BU}_{uk}|^2 \} + \sum_{k'=1, k' \neq k}^K \mathbb{E} \{|\mathsf{UI}_{uk'k}|^2 \} + \mathbb{E}\{ |\mathsf{NO}_{uk}|^2 \}},
\end{multline}
where the following definitions hold
\begin{align}
&\mathsf{DS}_{uk} = \sqrt{\rho_u\eta_k} \mathbb{E}  \left\{ \sum_{m=1}^M  \hat{u}_{mk}^\ast u_{mk} \right\}, \\
& \mathsf{BU}_{uk} = \sqrt{\rho_u\eta_k} \left( \sum_{m=1}^M  \hat{u}_{mk}^\ast u_{mk}  - \mathbb{E}  \left\{ \sum_{m=1}^M  \hat{u}_{mk}^\ast u_{mk}  \right\} \right), \\
&\mathsf{UI}_{uk'k} = \sqrt{\rho_u \eta_{k'}} \sum_{m=1}^M \hat{u}_{mk}^\ast u_{mk'}, \\
& \mathsf{NO}_{uk} = \sum_{m=1}^M \hat{u}_{mk}^\ast w_{um}.
\end{align}
In particular, $\mathsf{DS}_{uk}$ denotes the (average) strength of the desired signal, $\mathsf{BU}_{uk}$ denotes the beamforming uncertainty, which represents the randomness of the effective channel gain for a given beamforming method, $\mathsf{UI}_{uk'k}$ denotes the interference caused by the user~$k'$ to the user~$k$, and $\mathsf{NO}_{uk}$ denotes the additive noise. We emphasize that the net throughput in \eqref{eq:ULRate} is achievable since it is a lower bound of the channel capacity. A closed-form expression for \eqref{eq:ULRate} is given in Theorem~\ref{theorem:ULMR}. 
\setcounter{eqnback}{\value{equation}} \setcounter{equation}{44}
\begin{figure*}
	\begin{equation} \label{eq:SINRULv1}
		\mathrm{SINR}_{uk} = \frac{\rho_u \eta_k \left( \sum\limits_{m=1}^M \gamma_{mk}^o \right)^2}{ \sum\limits_{m=1}^M \gamma_{mk}^o +  \rho_u \sum\limits_{k'=1}^K \sum\limits_{m=1}^M \eta_{k'} \gamma_{mk}^o \delta_{mk'} + p \tau_p \rho_u \sum\limits_{k' =1 }^K\sum\limits_{m=1}^M  \sum\limits_{m'=1}^M\eta_{k'} c_{mk}^o c_{m'k}^o \mathrm{tr}( \pmb{\Theta}_{mk'} \pmb{\Theta}_{m'k}) +  p \tau_p \rho_u \eta_{k} \sum\limits_{m=1}^M   c_{mk}^2 \mathrm{tr}(\pmb{\Theta}_{mk}^2)}
	\end{equation}
	\hrule
	\vspace*{-0.4cm}
\end{figure*}
\setcounter{eqncnt}{\value{equation}}
\setcounter{equation}{\value{eqnback}}
\begin{theorem} \label{theorem:ULMR}
If the CPU utilizes the MR combining method, a lower-bound closed-form expression for the uplink net throughput of the user~$k$ is given by \eqref{eq:ULRate}, where the SINR is
\begin{equation} \label{eq:ClosedFormSINR}
\mathrm{SINR}_{uk} = \frac{\rho_u \eta_{k} \left( \sum_{m=1}^M \gamma_{mk} \right)^2}{\mathsf{MI}_{uk} + \mathsf{NO}_{uk}},
\end{equation}
where $\mathsf{MI}_{uk}$ is the mutual interference and $\mathsf{NO}_{uk}$ is the noise, which are formulated as follows
\begin{align} \label{eq:MIuk}
&\mathsf{MI}_{uk} = \rho_u \sum_{k'=1}^K \sum_{m=1}^M \eta_{k'} \gamma_{mk} \delta_{mk'} \notag \\
&+ p \tau_p \rho_u \sum_{k' =1 }^K \sum_{k'' \in \mathcal{P}_k}    \sum_{m=1}^M  \sum_{m'=1}^M\eta_{k'} c_{mk}c_{m'k} \mathrm{tr}( \pmb{\Theta}_{mk'} \pmb{\Theta}_{m'k''}) \\
 &+  p \tau_p \rho_u \sum_{k' \in  \mathcal{P}_k} \sum_{m=1}^M  \eta_{k'} c_{mk}^2 \mathrm{tr}(\pmb{\Theta}_{mk'}^2) \notag \\
 & + p \tau_p \rho_u \sum_{k' \in \mathcal{P}_k \setminus \{ k\} } \eta_{k'} \left(\sum_{m=1}^M c_{mk} \delta_{mk'} \right)^2,\notag\\
& \mathsf{NO}_{uk} =\sum_{m=1}^M \gamma_{mk},
\end{align}
with $\delta_{mk'} =\beta_{mk'} + \mathrm{tr}( \pmb{\Theta}_{mk'} )$, $c_{mk}$ given in \eqref{eq:cmk}, and $\gamma_{mk}$ given in \eqref{eq:gammamk}.
\end{theorem}
\begin{proof}
See Appendix~\ref{appendix:ULMR}.
\end{proof}
By direct inspection of the SINR in \eqref{eq:ClosedFormSINR}, we observe that the numerator increases with the square of the sum of the variances of the channel estimates, $\gamma_{mk},\forall m$ thanks to the joint signal processing, i.e., the received signals form the $M$ APs are sent to the CPU for centralized data detection. On the other hand, the first term in \eqref{eq:MIuk} represents the power of the mutual interference. The use of an RIS to support multiple users introduce the extra interference shown in the second and third terms in \eqref{eq:MIuk}. Due to the limited and finite number of orthogonal pilot sequences being used, the fourth term in \eqref{eq:MIuk} dominates the impact of pilot contamination. The second term in the denominator in \eqref{eq:ClosedFormSINR} is the additive noise. If the coherence time is sufficiently large that every user can utilize its own orthogonal pilot sequence, the uplink net throughput of the user~$k$ can still be obtained from \eqref{eq:ULRate}, but the effective SINR simplifies to \eqref{eq:SINRULv1}. The SINR in \eqref{eq:ClosedFormSINR} is a multivariate function of the matrix of phase shifts of the RIS and of the channel statistics, i.e., the channel covariance matrices. Table~\ref{Table:CompareUL} gives a comparison of the obtained uplink SINR  of the user $k$ with and without the presence of the RIS. By direct inspection of $|\mathsf{DS}_{uk}|^2$, we evince that the strength of the desired signal increases thanks to the assistance of the RIS. However, the mutual interference becomes more severe as well, due to the need of estimating both the direct and indirect links in the presence of the RIS. By assigning orthogonal pilot signals to all the $K$ users, the coherent interference can be completely suppressed. In Section~\ref{Sec:NumRes}, the performance of Cell-Free Massive MIMO and RIS-assisted Cell-Free Massive MIMO is compared with the aid of numerical simulations.

\begin{table*}[t]
	\caption{Comparison of the uplink SINR between Cell-Free Massive MIMO and RIS-Assisted Cell-Free Massive MIMO} \label{Table:CompareUL}
	\centering
\begin{tabular}{|c|c|c|c|c|c|c|c|c|}
	\hline
	\multicolumn{2}{|c|}{Uplink SINR}  & Cell-Free Massive MIMO    & RIS-Assisted Cell-Free Massive MIMO  \\ 
	\hline
	\multirow{9}{*}{\eqref{eq:ClosedFormSINR}} &  $\delta_{mk}$ & $\beta_{mk}$ & $\beta_{mk} + \mathrm{tr}\big( \pmb{\Phi}^H \mathbf{R}_{m} \pmb{\Phi} \widetilde{\mathbf{R}}_{k} \big)$      \\  
	\cline{2-4}
	&  $c_{mk}$ & $\frac{\sqrt{p\tau_p}  \beta_{mk}}{p\tau_p \sum\limits_{k' \in \mathcal{P}_k}  \beta_{mk'} + 1}$ & $\frac{\sqrt{p\tau_p} \delta_{mk} }{p\tau_p \sum\limits_{k' \in \mathcal{P}_k} \delta_{mk'} + 1}$   \\  
	\cline{2-4}
	&  $\gamma_{mk}$ & $ \sqrt{p\tau_p} \beta_{mk}c_{mk}$ & $ \sqrt{p\tau_p}  \delta_{mk} c_{mk}$       \\  
	\cline{2-4}
	& $|\mathsf{DS}_{uk}|^2$ & $\rho_u \eta_{k} \left( \sum\limits_{m=1}^M \gamma_{mk} \right)^2$ & $\rho_u \eta_{k} \left( \sum\limits_{m=1}^M \gamma_{mk} \right)^2$\\
	\cline{2-4}
	& $\mathsf{MI}_{uk}$ & \makecell{$\rho_u \sum\limits_{k'=1}^K \sum\limits_{m=1}^M \eta_{k'} \gamma_{mk} \delta_{mk'} +$ \\$ p \tau_p \rho_u \sum\limits_{k' \in \mathcal{P}_k \setminus \{ k\} } \eta_{k'} \left(\sum\limits_{m=1}^M c_{mk} \delta_{mk'} \right)^2$} & \makecell{$\rho_u \sum\limits_{k'=1}^K \sum\limits_{m=1}^M \eta_{k'} \gamma_{mk} \delta_{mk'} + $ \\ $p \tau_p \rho_u \sum\limits_{k' =1 }^K \sum\limits_{k'' \in \mathcal{P}_k}    \sum\limits_{m=1}^M  \sum\limits_{m'=1}^M\eta_{k'} c_{mk}c_{m'k} \mathrm{tr}( \pmb{\Theta}_{mk'} \pmb{\Theta}_{m'k''}) + $ \\ $ p \tau_p \rho_u \sum\limits_{k' \in  \mathcal{P}_k} \sum\limits_{m=1}^M  \eta_{k'} c_{mk}^2 \mathrm{tr}(\pmb{\Theta}_{mk'}^2) + p \tau_p \rho_u \sum_{k' \in \mathcal{P}_k \setminus \{ k\} } \eta_{k'} \left(\sum\limits_{m=1}^M c_{mk} \delta_{mk'} \right)^2$} \\
	\cline{2-4}
	& $\mathsf{NO}_{uk}$ & $ \sum\limits_{m=1}^M \gamma_{mk}$ & $ \sum\limits_{m=1}^M \gamma_{mk}$\\
	\cline{1-4}
	\multirow{6}{*}{\eqref{eq:SINRULv1}} &  $c_{mk}^o$ & $\frac{\sqrt{p\tau_p}  \beta_{mk}}{p\tau_p   \beta_{mk} + 1}$ & $\frac{\sqrt{p\tau_p} \delta_{mk} }{p\tau_p \delta_{mk} + 1}$  \\  
	\cline{2-4}
	&  $\gamma_{mk}^o$ & $ \sqrt{p\tau_p} \beta_{mk}c_{mk}^o$ & $ \sqrt{p\tau_p}  \delta_{mk} c_{mk}^o$  \\  
	\cline{2-4}
	&
	$|\mathsf{DS}_{uk}|^2$ & $\rho_u \eta_{k} \left( \sum\limits_{m=1}^M \gamma_{mk}^o \right)^2$ & $\rho_u \eta_{k} \left( \sum\limits_{m=1}^M \gamma_{mk}^o \right)^2$ \\
	\cline{2-4}
	& $\mathsf{MI}_{uk}$ & \makecell{ $ \rho_u  \sum\limits_{k'=1}^{K} \sum\limits_{m=1}^M \eta_{k'} \gamma_{mk}^o \delta_{mk'} + \sum\limits_{m=1}^M \gamma_{mk}^o$} &   \makecell{$ \rho_u \sum\limits_{k'=1}^K \sum\limits_{m=1}^M \eta_{k'} \gamma_{mk}^o \delta_{mk'} + p \tau_p \rho_u \sum\limits_{k' =1 }^K\sum\limits_{m=1}^M  \sum\limits_{m'=1}^M\eta_{k'} c_{mk}^o c_{m'k}^o \mathrm{tr}( \pmb{\Theta}_{mk'} \pmb{\Theta}_{m'k}) + $ \\ $  p \tau_p \rho_u \eta_{k} \sum\limits_{m=1}^M   c_{mk}^2 \mathrm{tr}(\pmb{\Theta}_{mk}^2)$} \\  
	\cline{2-4}
	& $\mathsf{NO}_{uk}$ & $ \sum\limits_{m=1}^M \gamma_{mk}^o$ & $ \sum\limits_{m=1}^M \gamma_{mk}^o$\\
	\cline{1-4}
\end{tabular}
\vspace*{-0.2cm}
\end{table*}

\vspace*{-0.2cm}
\section{Downlink Data Transmission and Performance Analysis With MR Precoding} \label{Sec:Downlink}
\vspace*{-0.1cm}
In this section, we consider the downlink data transmission phase and analyze the received signal at the users when the number of APs is large or when the numbers of RIS elements and APs are both large. A closed-form expression of the downlink ergodic net throughput that is attainable with MR precoding and for an arbitrary phase shift matrix of the RIS elements is provided.
\vspace*{-0.2cm}
\subsection{Downlink Data Transmission Phase}
\vspace*{-0.1cm}
By exploiting channel reciprocity, the AP~$m$ treats the channel estimates obtained in the uplink as the true channels in order to construct the beamforming coefficients. Accordingly, the downlink signal transmitted from  the AP~$m$ is\footnote{In this paper, the downlink data transmission is conducted based on the uplink channel estimates, which depend on the RIS phase shift matrix. Since the MR processing is used for downlink data transmission based on the uplink channel estimates, closed-form analytical expressions for the downlink can be obtained if the same phase shift matrix is  utilized in the uplink and in the downlink.}
\setcounter{eqnback}{\value{equation}} \setcounter{equation}{45}
\begin{equation} \label{eq:TransSig}
x_m = \sqrt{\rho_{d}}\sum_{k=1}^K \sqrt{\eta_{mk}} \hat{u}_{mk}^\ast q_k,
\end{equation}
where $\rho_d$ is the normalized SNR in the downlink; $q_k$ is the complex data symbol that is to be sent (cooperatively by virtue of the considered coherent joint transmission scheme) by all the $M$ APs to the user~$k$, with $\mathbb{E}\{ |q_k|^2 \} = 1$; and $\eta_{mk}$ is the power control coefficient of  the AP~$m$, which  satisfies the power budget constraint as follows
\begin{equation} \label{eq:PowerConst}
\mathbb{E} \{|x_m|^2\} \leq \rho_d \Rightarrow 
\sum_{k=1}^K \eta_{mk} \gamma_{mk} \leq 1.
\end{equation} 
The cooperation among the $M$ APs for jointly transmitting the same data symbol to a particular user creates the major distinction between the downlink and uplink data transmission phases. Based on \eqref{eq:TransSig}, the received signal at the user~$k$ is the superposition of the signals transmitted by the $M$ APs as
\begin{equation} \label{eq:rdk}
\begin{split}
r_{dk} &= \sum_{m=1}^M u_{mk} x_m + w_{dk} \\
&= \sqrt{\rho_{d}} \sum_{m=1}^M \sum_{k'=1}^K \sqrt{\eta_{mk'}} u_{mk} \hat{u}_{mk'}^\ast q_{k'} + w_{dk}.
\end{split}
\end{equation}
where $w_{dk} \sim \mathcal{CN}(0,1)$ is the additive noise at the user $k$. The user~$k$ decodes the desired data symbol based on the observation in \eqref{eq:rdk}.
\vspace*{-0.2cm}
\subsection{Asymptotic Analysis of the Downlink Received Signal}
\vspace*{-0.1cm}
In contrast with the uplink data processing where the CPU needs only the channel estimate $\hat{u}_{mk}$ for detecting the data of the user~$k$, as displayed in \eqref{eq:rukv1}, the received signal in \eqref{eq:rdk} depends on the channel estimates of the $K$ users in the network, since the channel estimates from the $K$ users are used for MR precoding. Therefore, the analysis of the uplink and downlink data transmission phases are different. First, we split \eqref{eq:rdk} into three terms, by virtue of the pilot reuse pattern $\mathcal{P}_k$, as follows
\begin{multline} \label{eq:rdkv1}
r_{dk} = \sqrt{\rho_{d}} \sum_{k'\in \mathcal{P}_k} \sum_{m=1}^M  \sqrt{\eta_{mk'}} u_{mk} \hat{u}_{mk'}^\ast q_{k'} + \\
\sqrt{\rho_{d}} \sum_{k'\notin \mathcal{P}_k } \sum_{m=1}^M  \sqrt{\eta_{mk'}} u_{mk} \hat{u}_{mk'}^\ast q_{k'} +w_{dk}.
\end{multline}
Then, we investigate the two asymptotic regimes for $M \to \infty$ and $M,N \to \infty$. In particular, the first term in \eqref{eq:rdkv1} can be rewritten as
\begin{equation} \label{eq:rdkAsymtotic}
\begin{split}
& \sum_{m=1}^M \sqrt{\eta_{mk'}} u_{mk} \hat{u}_{mk'}^\ast\\
& \stackrel{(a)}{=} \sum_{m=1}^M \sqrt{\eta_{mk'}} c_{mk'} u_{mk} \left(  \sum_{k'' \in \mathcal{P}_k} \sqrt{p\tau_p}  u_{mk''}^\ast + w_{pmk'}^\ast \right) \\
&\stackrel{(b)}{=} \sum_{m=1}^M \sqrt{\eta_{mk'}p\tau_p}  c_{mk'} |u_{mk}|^2 +\sum_{k'' \in \mathcal{P}_k \setminus \{k\} } \sum_{m=1}^M \sqrt{\eta_{mk'}p \tau_p}    \\
&\quad \times c_{mk'}u_{mk}  u_{mk''}^\ast + \sum_{m=1}^M \sqrt{\eta_{mk'}} c_{mk'} u_{mk} w_{pmk'}^\ast,
\end{split}
\end{equation}
where $(a)$ is obtained by utilizing the channel estimates in \eqref{eq:ChannelEst} and $(b)$ is obtained by extracting the aggregated channel of the user~$k$ from the summation. By letting $M$ and/or $N$ be large, similar to the uplink data transmission phase, we obtain the following asymptotic results
\begin{equation}
\frac{1}{M} \sum_{m=1}^M \sqrt{\eta_{mk'}} u_{mk} \hat{u}_{mk'}^\ast \xrightarrow[M \rightarrow \infty]{P} \frac{1}{M} \sum_{m=1}^M \sqrt{\eta_{mk'} p\tau_p}  c_{mk'} \delta_{mk},
\end{equation}
\begin{multline}
\frac{1}{MN} \sum_{m=1}^M \sqrt{\eta_{mk'}} u_{mk} \hat{u}_{mk'}^\ast \xrightarrow[\substack{M \rightarrow \infty \\
N \rightarrow \infty}]{P} \\ \frac{1}{MN} \sum_{m=1}^M \sqrt{\eta_{mk'} p \tau_p}  c_{mk'}  \mathrm{tr} ( \pmb{\Theta}_{mk} ),
\end{multline}
which are bounded from above based on Assumption~\ref{Assumption1}. Consequently, the received signal at the user~$k$ converges to a deterministic equivalent as the number of APs is large, i.e., $M \rightarrow \infty$, and as the number of APs and RIS elements are large, i.e., $M,N \rightarrow \infty$. More precisely, the received signal converges (asymptotically) to
\begin{align}
&\frac{1}{M} r_{dk} \xrightarrow[M \rightarrow \infty]{P} \frac{1}{M} \sum_{k' \in \mathcal{P}_k} \sum_{m=1}^M \sqrt{ p \tau_p \rho_d \eta_{mk'} }  c_{mk'} \delta_{mk}s_{k'},\label{eq:DeEquiv1}\\
&\frac{1}{MN} r_{dk} \xrightarrow[M \rightarrow \infty]{P} \frac{1}{MN} \sum_{k' \in \mathcal{P}_k} \sum_{m=1}^M \sqrt{p \tau_p \rho_d \eta_{mk'} }  c_{mk'} \mathrm{tr}( \pmb{\Theta}_{mk} )s_{k'},\label{eq:DeEquiv2}
\end{align}
which indicates the inherent coexistence of the users in $\mathcal{P}_k$. The deterministic equivalents in \eqref{eq:DeEquiv1} and \eqref{eq:DeEquiv2} unveil that the impact of the channel estimation accuracy and the channel statistics is different between the uplink and the downlink. In particular, the asymptotic received signal in the uplink only depends on the channel estimation quality of each individual user, which is manifested by the coefficient $c_{mk}$. The asymptotic received signal in the downlink depends, on the other hand, on the channel estimation quality of all the users that share the same orthogonal pilot sequences, i.e.,  $c_{m,k'}, \forall k' \in \mathcal{P}_{k}$.
\setcounter{eqnback}{\value{equation}} \setcounter{equation}{61}
\begin{figure*}
	\begin{equation} \label{eq:DLOrt}
		\fontsize{9}{9}{\mathrm{SINR}_{dk} = \frac{\rho_{d} \left(   \sum\limits_{m=1}^M  \sqrt{\eta_{mk}} \gamma_{mk} \right)^2}{1+\rho_d \sum\limits_{k' =1}^K \sum\limits_{m=1}^M \eta_{mk'}  \gamma_{mk'}^o \delta_{mk} +  p \tau_p \rho_d \sum\limits_{k' =1}^K \sum\limits_{m=1}^M \sum\limits_{m'=1}^M \sqrt{\eta_{mk'} \eta_{m'k'}} c_{mk'}  c_{m'k'} \mathrm{tr} (\pmb{\Theta}_{mk} \pmb{\Theta}_{m'k'} ) + p \tau_p \rho_d  \sum\limits_{m=1}^M \eta_{mk}  c_{mk}^2 \mathrm{tr}(\pmb{\Theta}_{mk}^2)}}
	\end{equation}
	\hrule
	\vspace{-0.3cm}
\end{figure*}
\setcounter{eqncnt}{\value{equation}}
\setcounter{equation}{\value{eqnback}}
\vspace*{-0.2cm}
\subsection{Downlink Ergodic Net Throughput with  a Finite Number of APs and RIS Elements}
\vspace*{-0.1cm} 
By utilizing the channel capacity bounding technique \cite{Marzetta2016a}, similar to the analysis of the uplink data transmission phase,  the downlink ergodic net throughput of the user $k$ can be written as follows
\begin{equation} \label{eq:DLRate}
R_{dk} = B\nu_d \left( 1- \tau_p/\tau_c \right) \log_2 \left( 1 + \mathrm{SINR}_{dk} \right),\mbox{[Mbps]},
\end{equation}
where $0\leq \nu_d \leq 1$ is the portion of each coherence time interval dedicated to the downlink data transmission phase, with $\nu_u + \nu_d = 1$, and the effective downlink SINR is defined as
\begin{equation} \label{eq:DLSINR}
\mathrm{SINR}_{dk} = \frac{|\mathsf{DS}_{dk}|^2}{\mathbb{E} \{|\mathsf{BU}_{dk}|^2 \} + \sum_{k'=1, k' \neq k}^K \mathbb{E} \{|\mathsf{UI}_{dk'k}|^2 \} + 1},
\end{equation}
where the following definitions hold
\begin{align}
& \mathsf{DS}_{dk} = \sqrt{\rho_{d}}  \mathbb{E}\left\{ \sum_{m=1}^M  \sqrt{\eta_{mk}} u_{mk}\hat{u}_{mk}^\ast \right\}, \\
& \mathsf{UI}_{dk'k} = \sqrt{\rho_{d}} \sum_{m=1}^M \sqrt{\eta_{mk'}} u_{mk}\hat{u}_{mk'}^\ast,\\
& \mathsf{BU}_{dk}  =   \sqrt{\rho_{d}} \left( \sum_{m=1}^M  \sqrt{\eta_{mk}} u_{mk}\hat{u}_{mk}^\ast -  \mathbb{E}\left\{ \sum_{m=1}^M  \sqrt{\eta_{mk}} u_{mk}\hat{u}_{mk}^\ast \right\}  \right),
\end{align}
In particular, $\mathsf{DS}_{dk}$ denotes the (average) strength of the desired signal received by the user~$k$, $\mathsf{BU}_{dk}$ denotes the beamforming uncertainty, $\mathsf{UI}_{dk'k}$ denotes the interference caused to the user~$k$ by the signal intended to the user~$k'$. The downlink ergodic net throughput in \eqref{eq:DLRate} is achievable since it is a lower bound of the channel capacity, similar to the uplink data transmission phase. In contrast to the uplink ergodic net throughput, which only depends on the combining coefficients of each individual user, the downlink net throughput of the user~$k$ depends on the precoding coefficients of all the $K$ users. A closed-form expression for \eqref{eq:DLRate} is given in Theorem~\ref{theorem:DLMR}.
\begin{theorem} \label{theorem:DLMR}
If the CPU utilizes the MR precoding method, a lower-bound closed-form expression for the downlink net throughput of the user~$k$ is given by \eqref{eq:DLRate}, where the SINR is
\begin{equation} \label{eq:DLSINRMRT}
\mathrm{SINR}_{dk} = \frac{\rho_{d} \big(   \sum_{m=1}^M  \sqrt{\eta_{mk}} \gamma_{mk} \big)^2}{\mathsf{MI}_{dk} + 1},
\end{equation}
where $\mathsf{MI}_{dk}$ is the mutual interference, which is defined as follows
\begin{equation} \label{eq:MIdk}
\begin{split}
 &  \mathsf{MI}_{dk}=  \rho_d \sum_{k' =1}^K \sum_{m=1}^M \eta_{mk'}  \gamma_{mk'} \delta_{mk} +  p \tau_p \rho_d \times \\
& \sum_{k' =1}^K \sum_{k'' \in \mathcal{P}_{k'}} \sum_{m=1}^M \sum_{m'=1}^M \sqrt{\eta_{mk'} \eta_{m'k'}} c_{mk'}  c_{m'k'} \mathrm{tr} (\pmb{\Theta}_{mk} \pmb{\Theta}_{m'k''} ) \\
&+ p \tau_p \rho_d  \sum_{k' \in \mathcal{P}_k } \sum_{m=1}^M \eta_{mk'}  c_{mk'}^2 \mathrm{tr}(\pmb{\Theta}_{mk}^2) + \\
& p \tau_p \rho_d  \sum_{k' \in \mathcal{P}_k \setminus \{ k\}} \left(\sum_{m=1}^M  \sqrt{\eta_{mk'} } c_{mk'}  \delta_{mk} \right)^2.
\end{split}
\end{equation}
\end{theorem}
\begin{proof}
The main steps of the proof are similar to those of the proof of Theorem~\ref{theorem:ULMR}. However, there are also major differences that are due to the coherent joint data transmission scheme among the APs. The details of the proof are available in Appendix~\ref{appendix:DLMR}.
\end{proof}
From Theorem~\ref{theorem:DLMR}, we observe that the effective downlink SINR has some similarities and differences as compared with its uplink counterpart in Theorem~\ref{theorem:ULMR}.  Similar to the uplink, the numerator of \eqref{eq:DLSINRMRT} is a quadratic function that depends on the channel estimation quality and the coherent joint transmission processing. Differently from the uplink, the transmit power coefficients appear explicitly in the numerator of \eqref{eq:DLSINRMRT} as a result of the cooperation among the APs. The  impact of pilot contamination in \eqref{eq:MIdk}, in contrast to the uplink, depends on all the transmit power coefficients. In addition, we observe that the impact of pilot contamination scales up with the number of APs and with the number of elements of the RIS. When the $K$ users employ orthogonal pilot sequences, the downlink SINR expression of the user~$k$ simplifies to \eqref{eq:DLOrt}.

\begin{table*}[t]
	\caption{Comparison of the Downlink SINR between Cell-Free Massive MIMO and RIS-Assisted Cell-Free Massive MIMO (some parameters are defined in Table~\ref{Table:CompareUL})} \label{Table:CompareDL}
	\centering
     \begin{tabular}{|c|c|c|c|c|c|c|c|c|}
		\hline
		\multicolumn{2}{|c|}{Downlink SINR}  & Cell-Free Massive MIMO    & RIS-Assisted Cell-Free Massive MIMO  \\ 
		\hline
		\multirow{5}{*}{\eqref{eq:DLSINRMRT}} 
		& $|\mathsf{DS}_{dk} |^2$ & $\rho_{d} \left(   \sum\limits_{m=1}^M  \sqrt{\eta_{mk}} \gamma_{mk} \right)^2$ & $\rho_{d} \left(   \sum\limits_{m=1}^M  \sqrt{\eta_{mk}} \gamma_{mk} \right)^2$ \\
		\cline{2-4}
		&  $\mathsf{MI}_{dk}$ & \makecell{$\rho_d \sum\limits_{k' =1}^K \sum\limits_{m=1}^M \eta_{mk'}  \gamma_{mk'} \delta_{mk} +$ \\ $ p \tau_p \rho_d  \sum\limits_{k' \in \mathcal{P}_k \setminus \{ k\}} \left(\sum\limits_{m=1}^M  \sqrt{\eta_{mk'} } c_{mk'}  \delta_{mk} \right)^2$} & \makecell{$\rho_d \sum\limits_{k' =1}^K \sum\limits_{m=1}^M \eta_{mk'}  \gamma_{mk'} \delta_{mk} + $ \\ $ p \tau_p \rho_d \sum\limits_{k' =1}^K \sum\limits_{k'' \in \mathcal{P}_{k'}} \sum\limits_{m=1}^M \sum\limits_{m'=1}^M \sqrt{\eta_{mk'} \eta_{m'k'}} c_{mk'}  c_{m'k'} \mathrm{tr} (\pmb{\Theta}_{mk} \pmb{\Theta}_{m'k''} ) + $\\$ p \tau_p \rho_d  \sum\limits_{k' \in \mathcal{P}_k } \sum\limits_{m=1}^M \eta_{mk'}  c_{mk'}^2 \mathrm{tr}(\pmb{\Theta}_{mk}^2) + p \tau_p \rho_d  \sum\limits_{k' \in \mathcal{P}_k \setminus \{ k\}} \left(\sum\limits_{m=1}^M  \sqrt{\eta_{mk'} } c_{mk'}  \delta_{mk} \right)^2$} \\  
		\cline{2-4}
		& $\mathsf{NO}_{dk}$ & 1 & 1 \\
		\cline{1-4}
		\multirow{5}{*}{\eqref{eq:DLOrt}}
		& $|\mathsf{DS}_{dk} |^2$ & $\rho_{d} \left(   \sum\limits_{m=1}^M  \sqrt{\eta_{mk}} \gamma_{mk}^o \right)^2$ & $\rho_{d} \left(   \sum\limits_{m=1}^M  \sqrt{\eta_{mk}} \gamma_{mk}^o \right)^2$ \\
		\cline{2-4}
		& $\mathsf{MI}_{dk}$ &  \makecell{$  \rho_d \sum\limits_{k' =1}^K \sum\limits_{m=1}^M \eta_{mk'} \gamma_{mk'}^o \delta_{mk} +1$} &  \makecell{$\rho_d \sum\limits_{k' =1}^K \sum\limits_{m=1}^M \eta_{mk'}  \gamma_{mk'}^o \delta_{mk} + $ \\ $ p \tau_p \rho_d \sum\limits_{k' =1}^K \sum\limits_{m=1}^M \sum\limits_{m'=1}^M \sqrt{\eta_{mk'} \eta_{m'k'}} c_{mk'}  c_{m'k'} \mathrm{tr} (\pmb{\Theta}_{mk} \pmb{\Theta}_{m'k'} ) +$ \\ $p \tau_p \rho_d  \sum\limits_{m=1}^M \eta_{mk}  c_{mk}^2 \mathrm{tr}(\pmb{\Theta}_{mk}^2)$}    \\  
		\cline{2-4}
		& $\mathsf{NO}_{dk}$ & $1$ & $1$\\
		\cline{1-4}
	\end{tabular}
	\vspace*{-0.2cm}
\end{table*} 
A comparison of the obtained analytical expressions of the downlink SINR for Cell-Free Massive MIMO and RIS-assisted Cell-Free Massive MIMO systems is given in Table~\ref{Table:CompareDL}. By comparing Table~\ref{Table:CompareUL} and Table~\ref{Table:CompareDL}, the difference and similarities between the uplink and downlink transmission phases can be identified as well. With the aid of numerical results, in Section~\ref{Sec:NumRes}, we will illustrate the advantages of RIS-assisted Cell-Free Massive MIMO especially if the direct links are not sufficiently reliable (e.g., they are blocked) with high probability.
\begin{remark} \label{RemarkPhase}
We observe that the ergodic net throughput in the uplink (Theorem~\ref{theorem:ULMR}) and downlink (Theorem~\ref{theorem:DLMR}) data transmission phases depend only on the large-scale fading statistics and on the channel covariance matrices, while they are independent of the instantaneous CSI. This simplifies the deployment and optimization of RIS-assisted Cell-Free Massive MIMO systems. As anticipated in Remark~\ref{RemarkProb}, in fact, the phase shifts of the RIS can be optimized based on the (long-term) analytical expressions of the ergodic net throughputs in Theorem~\ref{theorem:ULMR} and Theorem~\ref{theorem:DLMR}, which are independent of the instantaneous CSI. In this paper, we have opted for optimizing the phase shifts of the RIS in order to minimize the channel estimation error, which determines the performance of both the uplink and downlink transmission phases. The optimization of the phase shifts of the RISs based on the closed-form expressions in Theorem~\ref{theorem:ULMR} and Theorem~\ref{theorem:DLMR} is postponed to a future research work.
\end{remark}

\vspace*{-0.2cm}
\section{Numerical Results} \label{Sec:NumRes}
\vspace*{-0.1cm}
In this section, we report some numerical results in order to illustrate the performance of the RIS-assisted Cell-Free Massive MIMO system introduced in the previous sections. We consider a geographic area of size $1.5 \times 1.5$~km$^2$, where the locations of the APs and users are given in terms of $(x,y)$ coordinates. The four vertices of the considered region are $[-0.75, -0.75]$ km, $[-0.75, 0.75]$ km, $[0.75, 0.75]$ km, $[0.75, -0.75]$ km. 
To simulate a harsh communication environment, the $M$ APs are uniformly distributed in the sub-region $x,y \in [-0.75, -0.5]$~km, while the $K$ users are uniformly distributed in the sub-region  $x,y \in [0.375, 0.75]$~km. The RIS is located at the origin, i.e., $(x,y) = (0,0)$. The carrier frequency is $1.9$~GHz and the system bandwidth is $20$~MHz. Each coherence interval comprises $\tau_c = 200$ symbols, which may correspond to a coherence bandwidth equal to $B_c = 200$~KHz and a coherence time equal to $T_c = 1$~ms, except in  Fig.~\ref{FigDifferentBounds}$(b)$ where $\tau_c = 5000$ symbols. We assume $\tau_p =5$ orthonormal pilot sequences that are shared by all the users. The large-scale fading coefficients in dB are generated according to the three-slope propagation model in \cite[(51)--(53)]{ngo2017cell}, where the path loss exponent depends on the distance between the transmitter and the receiver. The shadow fading has a log-normal distribution with standard deviation equal to $8$~dB. The distance thresholds for the three slopes are $10$~m and $50$~m. The height of the APs, RIS, and users is $15$~m, $30$~m, and $1.65$~m, respectively. The direct links, $g_{mk}, \forall m,k$, are assumed to be unblocked with a given probability. More specifically, the large-scale fading coefficient $\beta_{mk}$ is formulated as follows
\setcounter{eqnback}{\value{equation}} \setcounter{equation}{62}
\begin{equation}
\beta_{mk} = \bar{\beta}_{mk}a_{mk},
\end{equation}
where $\bar{\beta}_{mk}$ accounts for the path loss due to the transmission distance and the shadow fading according to the three-slope propagation model in \cite{ngo2017cell}. The binary variables $a_{mk}$ accounts for the probability that the direct links are unblocked, and it is defined as
\begin{equation} \label{eq:amk}
a_{mk} = \begin{cases}
1, & \mbox{ with a probability } \tilde{p},\\
0, & \mbox{ with a probability } 1- \tilde{p},
\end{cases}
\end{equation}
where $\tilde{p} \in [0,1]$ is the probability that the direct link is not blocked. The noise variance is $-92$~dBm, which corresponds to a noise figure of $9$~dB. The covariance matrices are generated according to  the spatial correlation model in \eqref{eq:CovarMa}. The power of the pilot sequences is $100$~mW and the power budget of each AP is $200$~mW. The time intervals of the data transmission phase, in each coherence time, that are allocated to the uplink and downink transmissions are,  $\nu_u = \nu_d = 0.5$. The uplink and downlink power control coefficients are $\eta_k = 1, \forall k,$ and $\eta_{mk} = ( \sum_{k'=1}^K \gamma_{mk'} )^{-1},\forall m,k$, which is directly obtained from \eqref{eq:PowerConst} to satisfy the limited power budget per AP.\footnote{In the worst case, if all the direct links are blocked, we introduce a damping constant  when Cell-Free Massive MIMO systems in the absence of the RIS are considered, since in those cases we have $\sum_{m=1}^M \gamma_{mk'} = 0$.}  As far as the optimization of the phase shifts of the RIS elements are concerned, we assume that they are optimized according to the sum-NMSE minimization criterion in the absence of direct links, according to Corollary~\ref{corollary:EqualPhase} and Remark~\ref{RemarkProb}. Without loss of generality, in particular, the $N$ phase shifts in $\pmb{\Phi}$ are all set equal to $\pi/4$, except in Figs.~\ref{FigSumUserDiffPhaseShiftsUL} and \ref{FigSumUserDiffPhaseShiftsDL} where different phase shifts are considered for comparison. In order to evaluate the advantages and limitations of RIS-assisted Cell-Free Massive MIMO systems, three system configurations are considered for comparison:
\begin{itemize}
\item[$i)$] \textit{RIS-Assisted Cell-Free Massive MIMO}: This is the proposed system model in which the direct links are blocked according to \eqref{eq:amk}. This setup is denoted by ``RIS-CellFree''.
\item[$ii)$] \textit{(Conventional) Cell-Free Massive MIMO}: This setup is the same as the previous one with the only exception that the RIS is not deployed in the network. This setup is denoted by ``CellFree''.
\item[$iii)$] \textit{RIS-Assisted Cell-Free Massive MIMO with blocked direct links}: This is the worst case study in which the direct links are blocked with unit probability and the uplink and downlink transmission phases are ensured only through the RIS. This setup is denoted by ``RIS-CellFree-NoLOS''.
\end{itemize}
\begin{figure*}[t]
	\begin{minipage}{0.48\textwidth}
		\centering
		\includegraphics[trim=0.9cm 0cm 0.9cm 0.6cm, clip=true, width=3.2in]{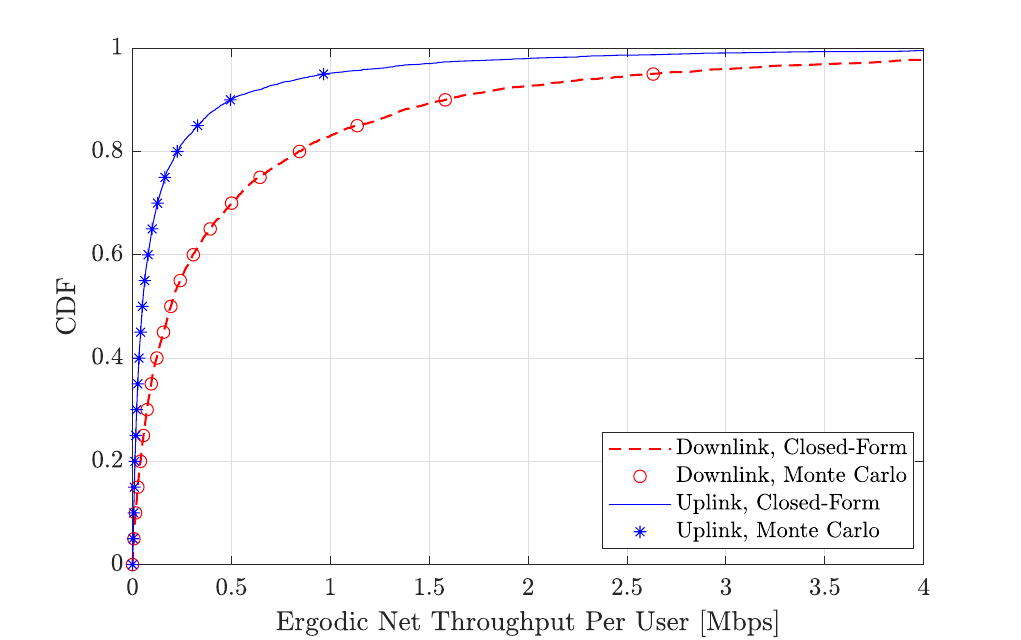} \vspace*{-0.2cm}
		\caption{Monte Carlo simulations versus the analytical frameworks with $M= 20$, $K=5$, $N= 64$, $\tau_p = 2,$ and $d_H= d_V = \lambda/4$. The unblocked probability of the direct links is $\tilde{p} = 1.0$.}
		\label{FigMonteCarloClosedForm}
		\vspace*{-0.2cm}
	\end{minipage}
	\vspace*{-0.0cm}
 \hfill
\begin{minipage}{0.48\textwidth}
	\centering
	\includegraphics[trim=0.9cm 0cm 0.9cm 0.6cm, clip=true, width=3.2in]{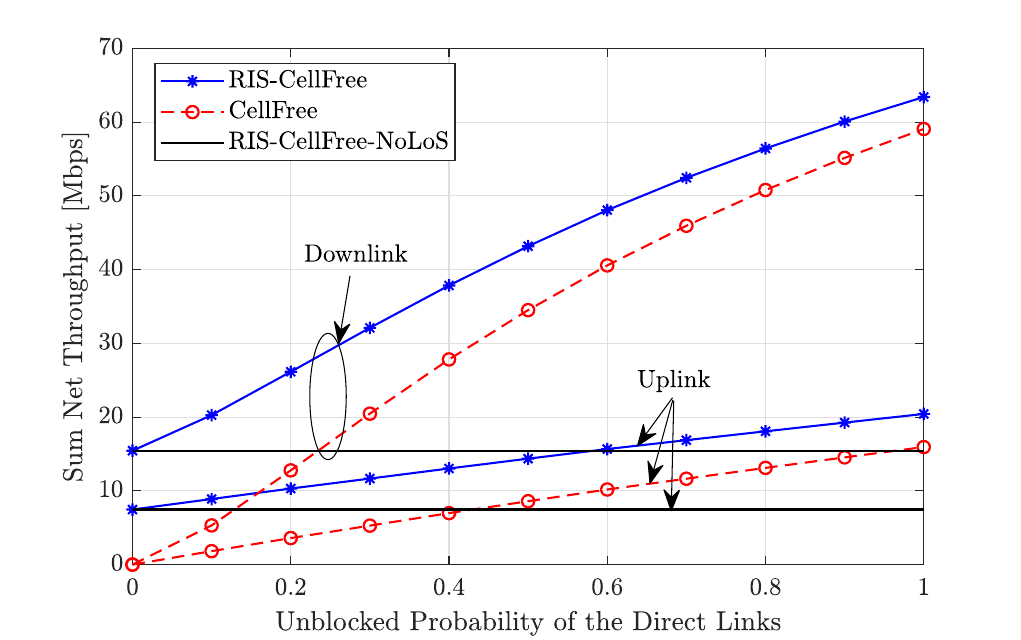} \vspace*{-0.2cm}
	\caption{Average sum net throughput [Mbps] versus the unblocked probability of the direct links $\tilde{p}$ with $M= 100$, $K=10$, $N= 900, \tau_p = 5,$ and $d_H = d_V = \lambda/4$.}
	\label{FigActiveProbSumUser}
	\vspace*{-0.2cm}
\end{minipage}
\end{figure*}

In Fig.~\ref{FigMonteCarloClosedForm}, we illustrate the cumulative distribution function (CDF) of the net throughput by using Monte Carlo simulations and the proposed analytical framework. The CDF is computed with respect to the locations of the APs and users in the considered area. The Monte Carlo simulation results are obtained by using \eqref{eq:ULRate}  and \eqref{eq:DLRate} with the SINRs given in \eqref{eq:ULSINR} and \eqref{eq:DLSINR}, while the analytical results are obtained by using Theorem~\ref{theorem:ULMR} and Theorem~\ref{theorem:DLMR}. We observe a very good overlap between the numerical simulations and the obtained analytical expressions. From Fig.~\ref{FigMonteCarloClosedForm}, we evince that the downlink net throughput per user is about $2.6\times$ better than the uplink net throughput. This is due to the higher transmission power of the APs and the gain of the joint processing of the APs. Since the Monte Carlo simulations are not simple to obtain for larger values of the simulation parameters, the rest of the figures are obtained by using the closed-form expressions of the net throughput derived in Theorem~\ref{theorem:ULMR} and Theorem~\ref{theorem:DLMR}.

\begin{figure*}[t]
	\begin{minipage}{0.48\textwidth}
		\centering
		\includegraphics[trim=0.9cm 0cm 0.9cm 0.6cm, clip=true, width=3.2in]{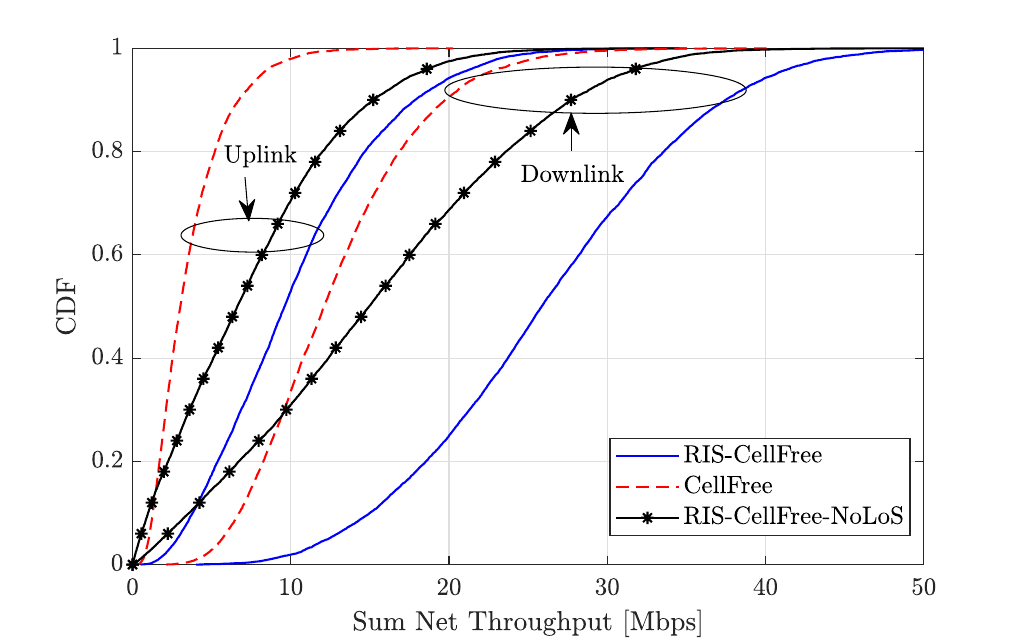} \vspace*{-0.2cm}
		\caption{CDF of the sum net throughput [Mbps] with $M= 100$, $K=10$, $N=900$, $\tau_p = 5,$ and $d_H = d_V = \lambda/4$. The unblocked probability of the direct links is $\tilde{p}=0.2$. }
		\label{FigCDFSumUser}
		\vspace*{-0.2cm}
	\end{minipage}
	\hfill
	\begin{minipage}{0.48\textwidth}
		\centering
		\includegraphics[trim=0.9cm 0cm 0.9cm 0.6cm, clip=true, width=3.2in]{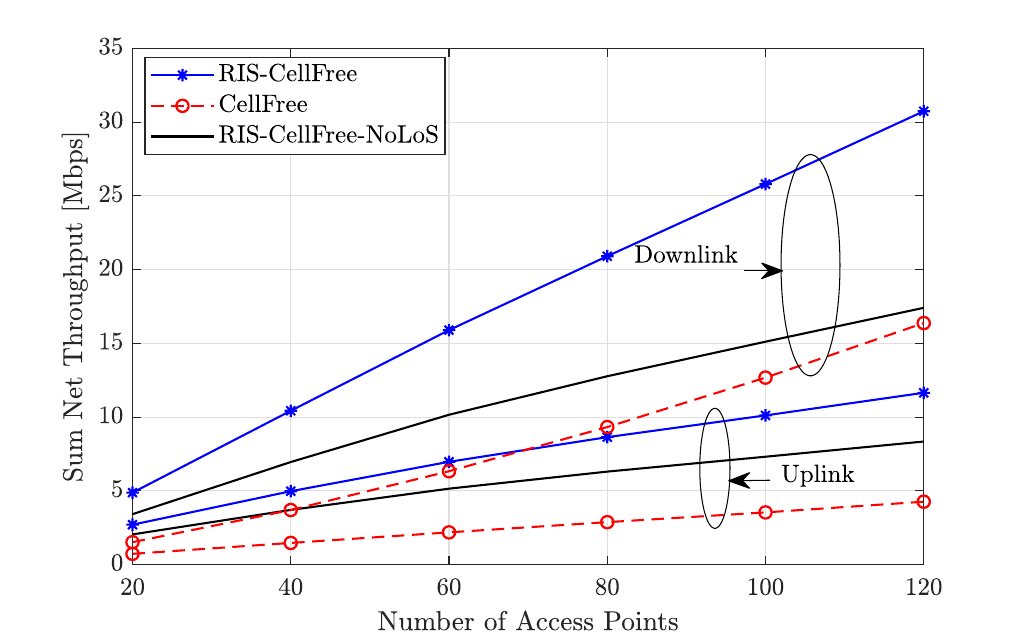} \vspace*{-0.2cm}
		\caption{Average sum net throughput [Mbps] versus the number of APs with $K= 10$, $N= 900$, $\tau_p = 5$, and $d_H = d_V = \lambda/4$. The unblocked probability of the direct links is $\tilde{p}=0.2$.}
		\label{FigDiffAPs}
		\vspace*{-0.2cm}
	\end{minipage}
\end{figure*}
\begin{figure*}[t]
	\begin{minipage}{0.48\textwidth}
		\centering
		\includegraphics[trim=0.9cm 0cm 0.9cm 0.6cm, clip=true, width=3.2in]{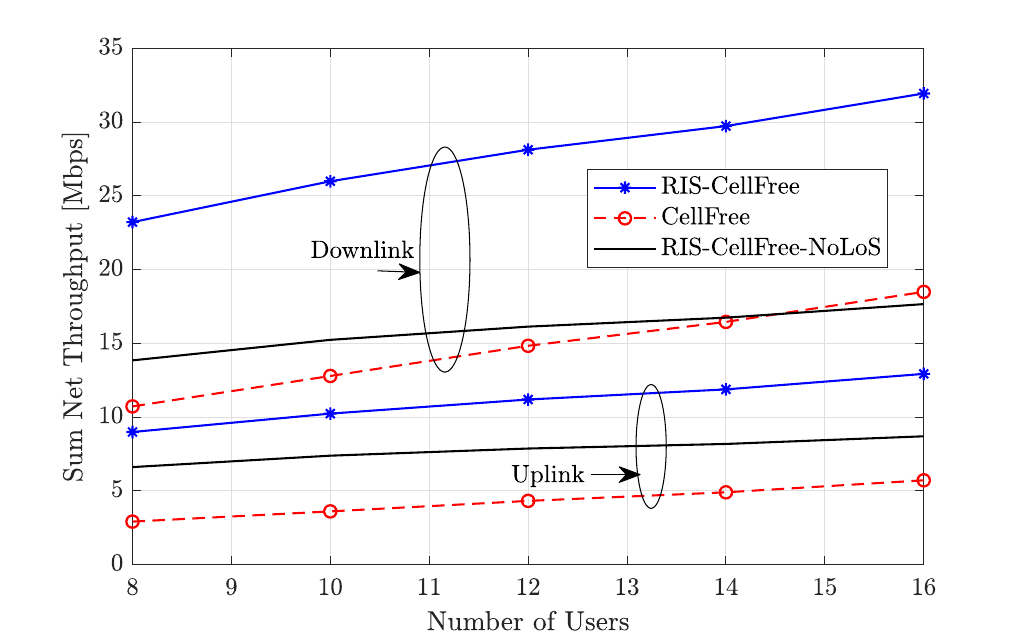} \vspace*{-0.2cm}
		\caption{Average sum net throughput [Mbps] versus the number of users with $M= 100$, $N= 900$, $\tau_p = 5$, and $d_H = d_V = \lambda/4$. The unblocked probability of the direct links is $\tilde{p}=0.2$. }
		\label{FigDiffUsersR1}
		\vspace*{-0.2cm}
	\end{minipage}
	\hfill
	\begin{minipage}{0.48\textwidth}
		\centering
		\includegraphics[trim=0.9cm 0cm 0.9cm 0.6cm, clip=true, width=3.2in]{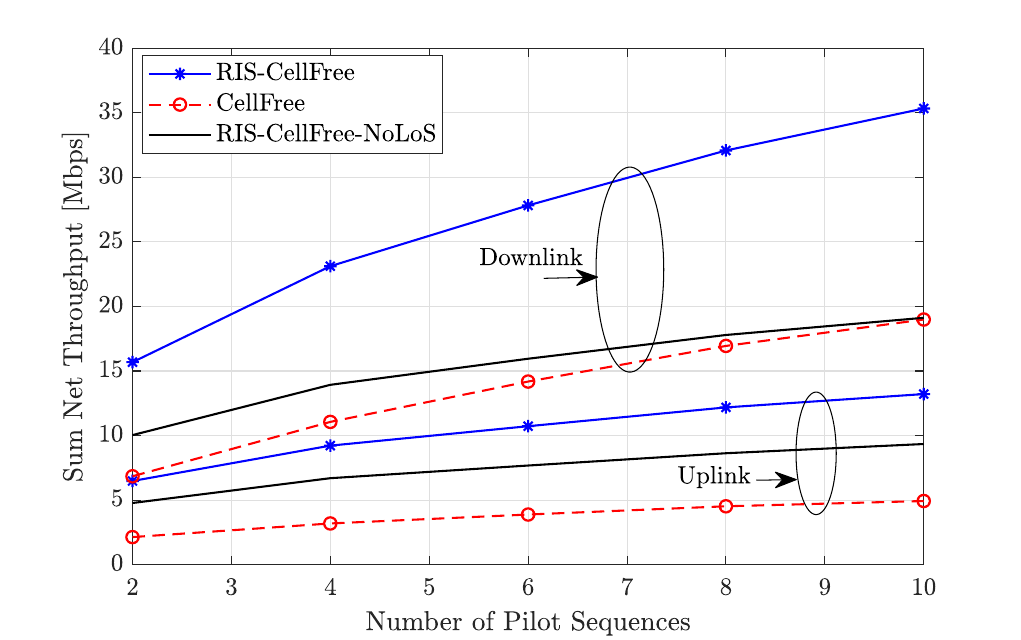} \vspace*{-0.2cm}
		\caption{Average sum net throughput [Mbps] versus the number of pilot sequences with $M= 100$, $K=10$, $N= 900$, and $d_H = d_V = \lambda/4$. The unblocked probability of the direct links is $\tilde{p}=0.2$. }
		\label{FigDiffPilotSequencesR1}
		\vspace*{-0.2cm}
	\end{minipage}
\end{figure*}
\begin{figure*}[t]
	\begin{minipage}{0.48\textwidth}
		\centering
		\includegraphics[trim=0.9cm 0cm 0.9cm 0.6cm, clip=true, width=3.2in]{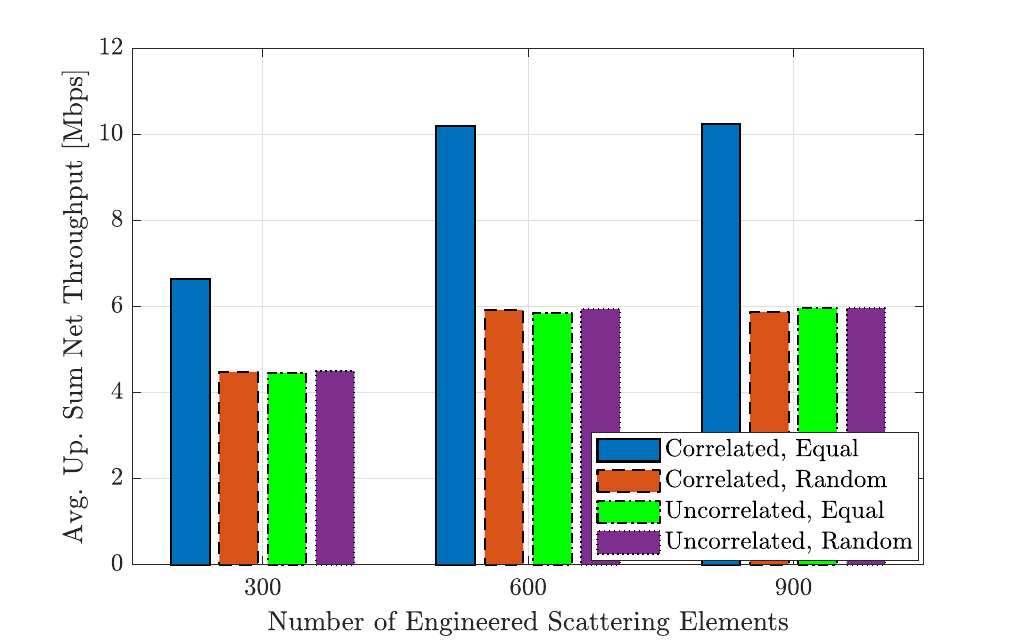} \vspace*{-0.2cm}
		\caption{Average uplink sum net throughput [Mbps] versus the number of engineered scattering elements with $M= 100$, $K= 10$, $\tau_p = 5$, and $d_H = d_V = \lambda/4$. The unblocked probability of the direct links is $\tilde{p}=0.2$. }
		\label{FigSumUserDiffPhaseShiftsUL}
		\vspace*{-0.2cm}
	\end{minipage}
	\hfill
	\begin{minipage}{0.48\textwidth}
		\centering
		\includegraphics[trim=0.9cm 0cm 0.9cm 0.6cm, clip=true, width=3.2in]{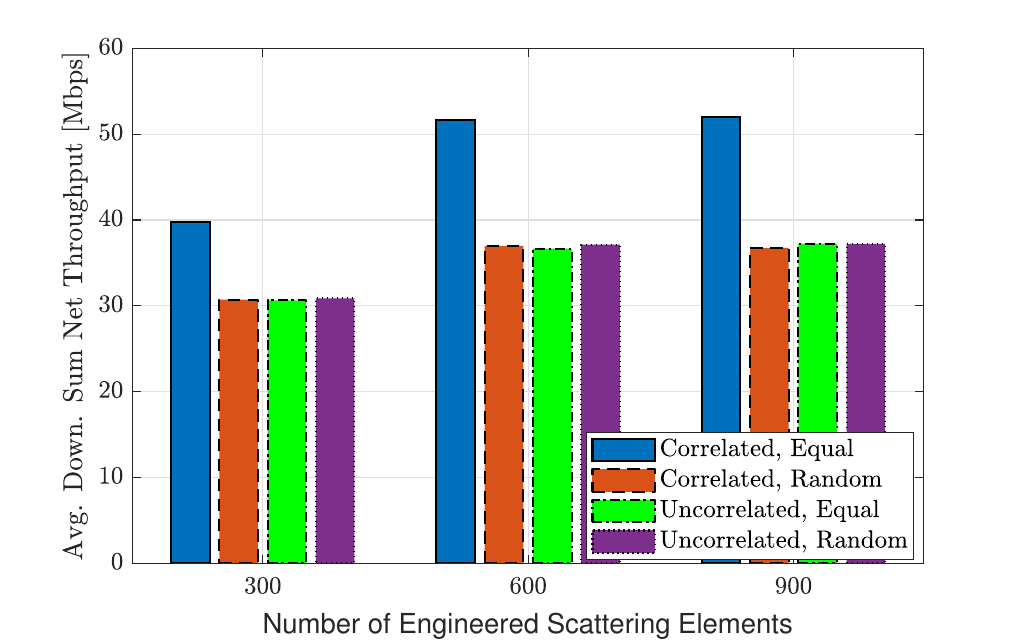} \vspace*{-0.2cm}
		\caption{Average downlink sum net throughput [Mbps] versus the number of engineered scattering elements with $M= 100$, $K=10$, $\tau_p = 5$, and $d_H = d_V = \lambda/4$. The unblocked probability of the direct links is $\tilde{p}=0.2$. }
		\label{FigSumUserDiffPhaseShiftsDL}
		\vspace*{-0.2cm}
	\end{minipage}
\end{figure*}
In Fig.~\ref{FigActiveProbSumUser}, we illustrate the average sum net throughput as a function of the probability $\tilde{p}$ in \eqref{eq:amk}. In particular, the average uplink sum net throughput is defined as $\sum_{k=1}^K \mathbb{E} \{ R_{uk} \}$ and the downlink sum net throughput is defined as $\sum_{k=1}^K \mathbb{E} \{ R_{dk} \}$, where the uplink and downlink SINRs are obtained by  using Theorem~\ref{theorem:ULMR} and Theorem~\ref{theorem:DLMR}. In particular, the expectation is computed with respect to the locations of the APs and users in the considered area. From the obtained results, we evince that Cell-Free Massive MIMO provides the worst performance if the blocking probability is large ($\tilde{p}$ is small). As expected, the average net throughput offered by Cell-Free Massive MIMO tends to zero if $\tilde{p} \to 0$ (the direct links are unreliable). For example, at $\tilde{p}=0.1$, the average sum net throughput of Cell-Free Massive MIMO is approximately $2.0\times$ and $1.4\times$ smaller, in the uplink and downlink, respectively, than the average sum net throughput of the worst-case RIS-assisted Cell-Free Massive MIMO setup (i.e., RIS-CellFree-NoLOS). In the considered case study, in addition, we note that the proposed RIS-assisted Cell-Free Massive MIMO setup offers the best average net throughput, since it can overcome the unreliability of the direct links thanks to the presence of the RIS. The presence of the RIS is particularly useful if $\tilde{p}$ is small, i.e., $\tilde{p}<0.2$ in Fig.~\ref{FigActiveProbSumUser}, since the direct links are not able to support a high throughput. In this case, the combination of Cell-Free Massive MIMO and RIS is capable of providing a high throughput and signal reliability.

In Fig.~\ref{FigCDFSumUser}, we compare the three considered systems in terms of average sum net throughput when $\tilde{p}=0.2$. We observe the net advantage of the proposed RIS-assisted Cell-Free Massive MIMO system, especially in the downlink. In the uplink, in addition, even the worst-case RIS-assisted Cell-Free Massive MIMO system setup (i.e., $\tilde{p}=0$) outperforms the Cell-Free Massive MIMO setup in the absence of an RIS. In Figs.~\ref{FigDiffAPs}--\ref{FigDiffPilotSequencesR1}, we show the average sum net throughput as a function of the number of APs, the number of users, and the number of orthonormal pilot signals, respectively. We observe that the average sum net throughput increases with the number of APs, with the number or users, and with the number of pilot sequences, and that the RIS-assisted Cell-Free Massive MIMO setup outperforms, especially in the downlink, the other benchmark schemes. Gains of the order of $1.7\times$ and $2.6\times$ are obtained in the considered setups.
 
\begin{figure*}[t]
	\begin{minipage}{0.48\textwidth}
		\centering
		\includegraphics[trim=0.9cm 0cm 0.9cm 0.6cm, clip=true, width=3.2in]{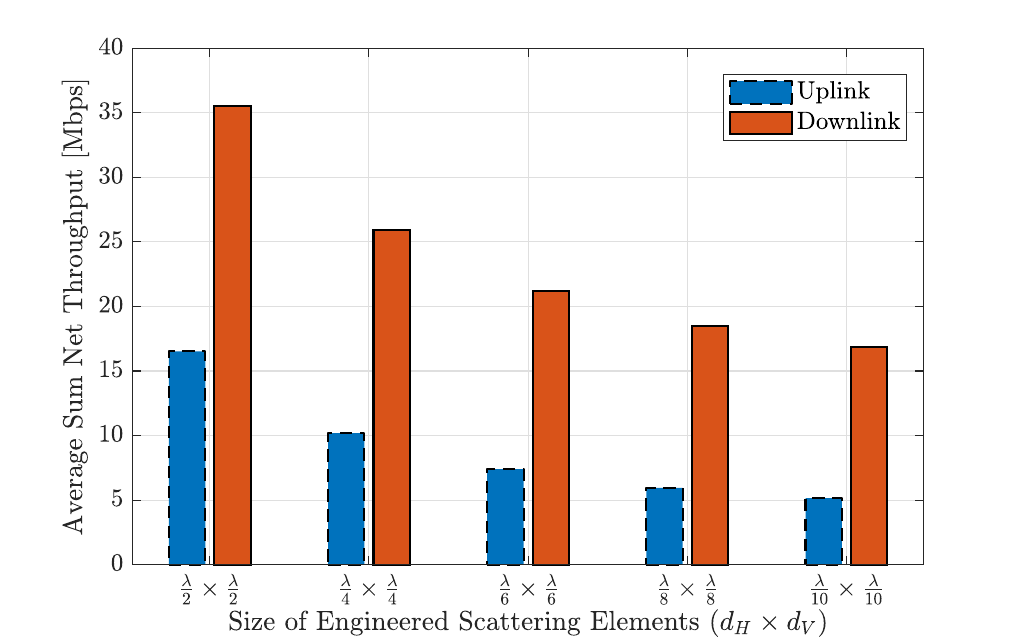} \vspace*{-0.2cm}
		\caption{Average sum net throughput [Mbps] versus the size of engineered scattering elements  $\{d_H, d_V \}$, but for a different size of the RIS with $M= 100$, $K=10$, $N= 900$, and $\tau_p = 5$. The unblocked probability of the direct links is $\tilde{p}=0.2$.}
		\label{FigDifferentdHdVDownlink}
		\vspace*{-0.2cm}
	\end{minipage}
	\hfill
	\begin{minipage}{0.48\textwidth}
		\centering
		\includegraphics[trim=0.9cm 0cm 0.9cm 0.6cm, clip=true, width=3.2in]{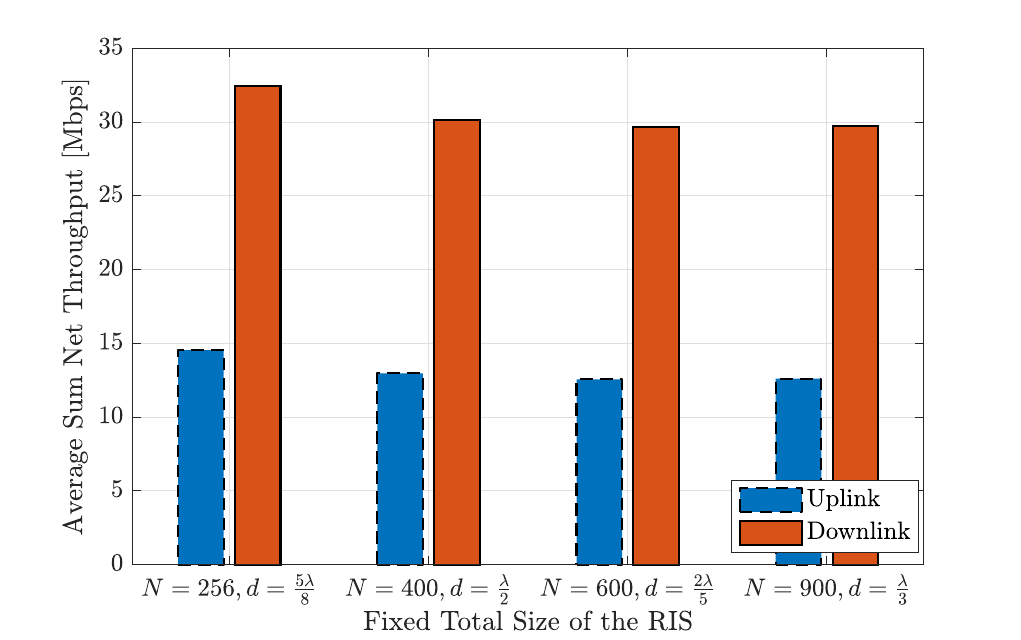} \vspace*{-0.2cm}
		\caption{Average sum net throughput [Mbps] for a fixed total size of the RIS but for a different number of RIS elements  with $M= 100 $, $K= 10$, $\tau_p = 5,$ and $d=d_H=d_V$. The unblocked probability of the direct links is $\tilde{p}=0.2$. }
		\label{FigFixedArea}
		\vspace*{-0.2cm}
	\end{minipage}
\end{figure*}

In the following figures, we focus our attention only on the RIS-assisted Cell-Free Massive MIMO setup, since it provides the best performance in the analyzed setups. In Figs.~\ref{FigSumUserDiffPhaseShiftsUL} and \ref{FigSumUserDiffPhaseShiftsDL}, we report the uplink and downlink average sum net throughput as a function of number of engineered scattering elements of the RIS. In particular, we compare the average sum net throughput when the phase shifts of the RIS are randomly chosen and are optimized according to Corollary~\ref{corollary:EqualPhase} over spatially-independent and spatially-corrrelated fading channels according to \eqref{eq:CovarMa}. In the presence of spatial correlation, the channel correlation matrices are $\mathbf{R}_m = \alpha_m d_H d_V \mathbf{I}_N$ and $\widetilde{\mathbf{R}}_{mk} = \tilde{\alpha}_{mk} d_H d_V \mathbf{I}_N, \forall m,k$. We see different performance trends over spatially-independent and spatially-correlated fading channels. If the spatial correlation is not considered, we observe that there is no significant difference between the random and uniform phase shifts setup. In the presence of spatial uncorrelation, on the other hand, the uniform phase shift design  obtained from Corollary~\ref{corollary:EqualPhase} provides a much higher average throughput. This result highlights the relevance of using even simple optimization designs for RIS-assisted communications over spatially-correlated fading channels. 

\begin{figure*}[t]
	\begin{minipage}{0.48\textwidth}
	\centering
	\includegraphics[trim=0.9cm 0cm 0.9cm 0.6cm, clip=true, width=3.2in]{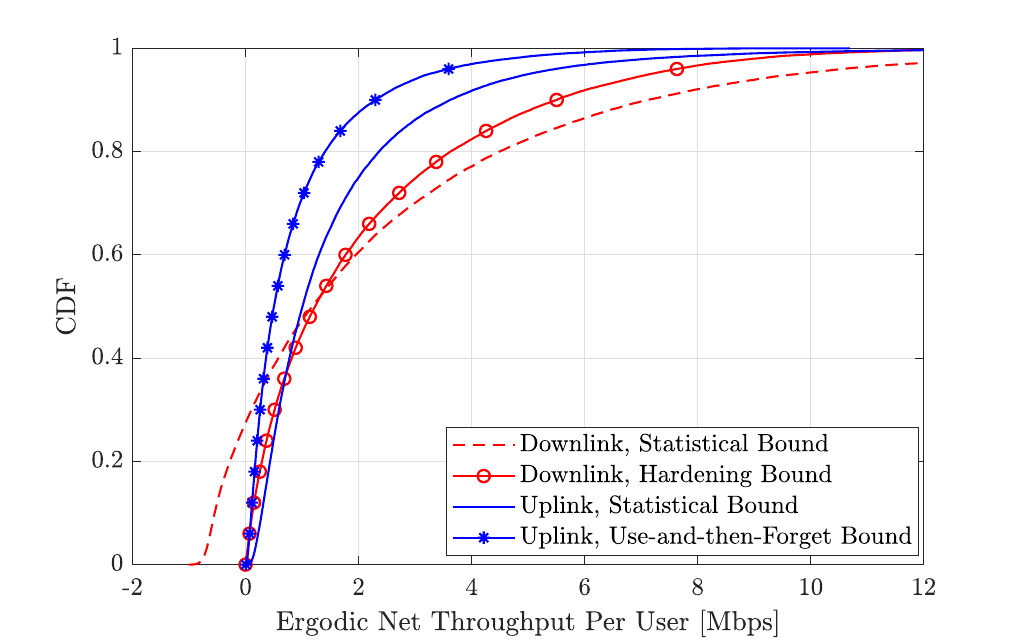} \vspace*{-0.1cm} \\
	$(a)$
	\vspace*{-0.0cm}
\end{minipage}
		\begin{minipage}{0.48\textwidth}
		\centering
		\includegraphics[trim=0.9cm 0cm 0.9cm 0.6cm, clip=true, width=3.2in]{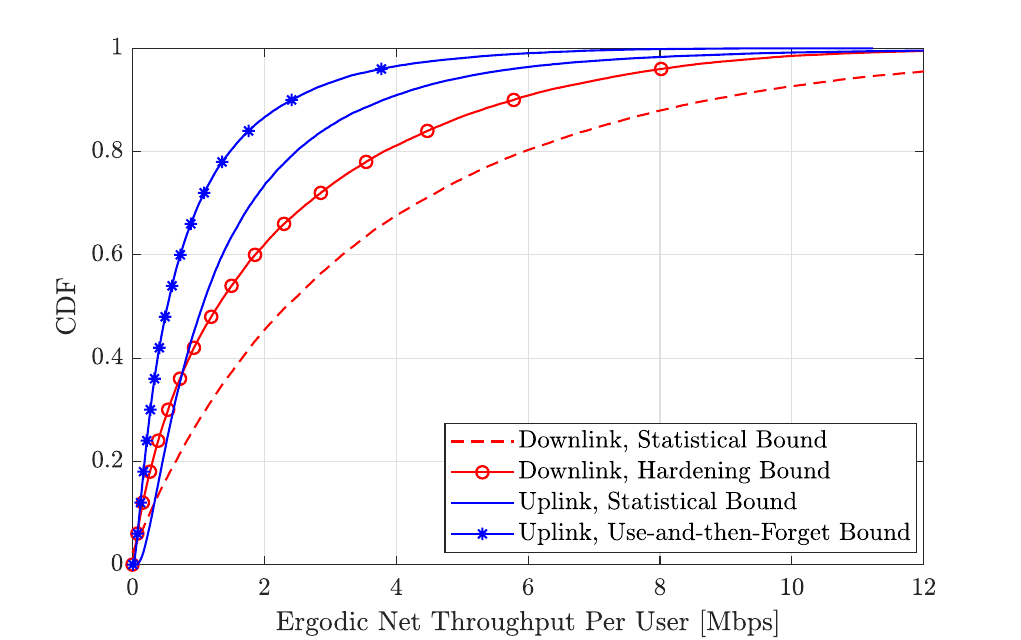} \vspace*{-0.1cm} \\
		$(b)$
		\vspace*{-0.0cm}
	\end{minipage}
\caption{A comparison of different channel capacity bounding techniques with $M= 20$, $K=10$, $N= 64$, $\tau_p = 10$, $d_H= d_V = \lambda/4$,  $\tilde{p} = 1.0$, and two coherence intervals: $(a)$ $\tau_c = 200$ symbols; $(b)$ $\tau_c =5000$ symbols.}
    \label{FigDifferentBounds}
	\vspace*{-0.2cm}
\end{figure*}
In Fig.~\ref{FigDifferentdHdVDownlink}, we analyze the impact of the size of the engineered scattering elements of the RISs on the uplink and downlink average net throughput, while keeping the total number of RIS elements $N$ fixed. The size of the considered RIS, which is a compact surface, is $N d_H d_V$, which implies that it increases as the size $d_H d_V$ of each element of the RIS increases. In this setup, we observe that the average net throughput increases as the physical size of each element of the RIS increases. In Fig.~\ref{FigFixedArea}, on the other hand, we analyze a setup in which the total size of the RIS is kept constant and equal to $N d_H d_V = 10 \lambda \times 10 \lambda$ while the triplet $(N, d=d_H=d_V)$ is changed accordingly. With the considered fading spatial correlation model and for a size of the RIS elements no smaller than $\lambda/3$, we do not observe a significant difference on the average net throughput. Further studies are, however, necessary for deep sub-wavelength RIS structures, for different optimization criteria of the phase shifts of the RIS, and in the presence of mutual coupling in addition to the fading spatial correlation \cite{gradoni2021end,qian2021mutual}.

In Fig.~\ref{FigDifferentBounds}, we plot the ergodic net throughput per user by utilizing different channel capacity bounding techniques, by assuming $\tau_c = 200$ symbols and $\tau_c = 5000$ symbols. The main benefit of the use-and-then-forget bound in \eqref{eq:ULRate} and the hardening bound in \eqref{eq:DLRate} is the possibility of obtaining a closed-form expression for the net throughput by capitalizing on  the fundamentals properties of Massive MIMO communications, as demonstrated in Theorems~\ref{theorem:ULMR} and \ref{theorem:DLMR}. However, the channel hardening capability reduces in the presence of an RIS \cite{van2021reconfigurable}. Thus, other channel capacity bounding techniques may result in a better estimate of the net throughput as compared to the actual channel capacity. The statistical bound in \cite{caire2018ergodic,interdonato2019downlink} was originally derived for application to the downlink data transmission when no instantaneous CSI is available at the users and the channel vectors may be less hardened \cite{interdonato2021enhanced}. In order to apply the same bound to the uplink data transmission, we assume that only the channel statistics are available at the CPU. Even though the statistical bound results in a better ergodic net throughput per user, some realizations of user locations and shadow fading may lead to negative values of the ergodic net throughput. This may occur in high mobility scenarios, which correspond to small values of $\tau_c$, as illustrated in Fig.~\ref{FigDifferentBounds}$(a)$. The statistical bound provides consistent values of the ergodic net throughput in  low mobility scenarios when $\tau_c$ is large, as displayed in Fig.~\ref{FigDifferentBounds}$(b)$. The numerical results unveils the need of developing different and more accurate channel bounding methods for evaluating the net throughput of RIS-assisted Cell-Free Massive MIMO systems. 
\vspace*{-0.2cm}
\section{Conclusion}\label{Sec:Conclusion}
\vspace*{-0.1cm}
Cell-Free Massive MIMO and RIS are two disruptive technologies for boosting the system performance of future wireless networks. These two technologies are not competing with each other, but have complementary features that can be integrated and leveraged for enhancing the system performance in harsh communication environments. Therefore, we have considered an RIS-assisted Cell-Free Massive MIMO system that operates according to the TDD mode. An efficient channel estimation scheme has been introduced to overcome the high overhead that may be associated with the estimation of the individual channels of the RIS elements. Based on the proposed channel estimation scheme, an optimal design for the phase shifts of the RIS that minimizes the channel estimation error has been devised and has been used for system analysis. Based on the proposed channel estimation method, closed-form expressions of the ergodic net throughput for the uplink and downlink data transmission phases have been proposed. Based on them, the performance of RIS-assisted Cell-Free Massive MIMO has been analyzed as a function of the fading spatial correlation and the blocking probability of the direct AP-user links. The numerical results have shown that the presence of an RIS is particularly useful if the AP-user links are unreliable with high probability.

Possible generalizations of the results illustrated in this paper include the optimization of the phase shifts of the RIS that maximize the uplink or downlink throughput, the analysis of the impact of the fading spatial correlation for non-compact and deep sub-wavelength RIS structures, and the analysis and optimization of RIS-assisted systems in the presence of mutual coupling.

\vspace*{-0.2cm}
\appendix
\vspace*{-0.1cm}
\subsection{Useful Lemmas} \label{Appendix:UsefulLemmas}
\vspace*{-0.1cm}
This section reports three useful lemmas that are utilized for asymptotic analysis.  
\begin{lemma}{\cite[Lemma B.$7$]{massivemimobook}} \label{lemma:trace}
For an arbitrary matrix $\mathbf{X} \in \mathbb{C}^{N\times N}$ and a positive semi-definite matrix $\mathbf{Y} \in \mathbb{C}^{N\times N}$, it holds that $|\mathrm{tr} (\mathbf{X}\mathbf{Y})| \leq \|  \mathbf{X} \|_2 \mathrm{tr}(\mathbf{Y})$. If $\mathbf{X}$ is also a positive semi-definite matrix, then $\mathrm{tr} (\mathbf{X}\mathbf{Y}) \leq \|  \mathbf{X} \|_2 \mathrm{tr}(\mathbf{Y})$.
\end{lemma}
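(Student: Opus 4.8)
The plan is to reduce this matrix inequality to a statement about scalar quadratic forms by exploiting the spectral decomposition of the positive semi-definite matrix $\mathbf{Y}$. Since $\mathbf{Y}$ is Hermitian and positive semi-definite, I would first write $\mathbf{Y} = \sum_{i=1}^N \lambda_i \mathbf{v}_i \mathbf{v}_i^H$, where $\lambda_i \geq 0$ are its (real, nonnegative) eigenvalues and $\mathbf{v}_1,\ldots,\mathbf{v}_N$ is an orthonormal set of eigenvectors. Substituting this expansion into $\mathrm{tr}(\mathbf{X}\mathbf{Y})$ and using linearity together with the cyclic invariance of the trace gives $\mathrm{tr}(\mathbf{X}\mathbf{Y}) = \sum_{i=1}^N \lambda_i\, \mathbf{v}_i^H \mathbf{X} \mathbf{v}_i$, so that the trace is a nonnegatively weighted combination of the quadratic forms $\mathbf{v}_i^H \mathbf{X}\mathbf{v}_i$.

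The next step is to bound each quadratic form by the spectral norm. By the Cauchy--Schwarz inequality and the definition of $\|\cdot\|_2$ as the largest singular value, one has $|\mathbf{v}_i^H \mathbf{X}\mathbf{v}_i| \leq \|\mathbf{v}_i\|\,\|\mathbf{X}\mathbf{v}_i\| \leq \|\mathbf{X}\|_2 \|\mathbf{v}_i\|^2 = \|\mathbf{X}\|_2$, where the final equality uses that each eigenvector has unit norm. Applying the triangle inequality to the weighted sum and using $\lambda_i \geq 0$ then yields $|\mathrm{tr}(\mathbf{X}\mathbf{Y})| \leq \sum_{i=1}^N \lambda_i |\mathbf{v}_i^H\mathbf{X}\mathbf{v}_i| \leq \|\mathbf{X}\|_2 \sum_{i=1}^N \lambda_i = \|\mathbf{X}\|_2\, \mathrm{tr}(\mathbf{Y})$, where the last equality is the fact that the trace equals the sum of the eigenvalues. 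This establishes the first claim.

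For the second claim, when $\mathbf{X}$ is additionally positive semi-definite each quadratic form $\mathbf{v}_i^H \mathbf{X} \mathbf{v}_i$ is real and nonnegative, so $\mathrm{tr}(\mathbf{X}\mathbf{Y}) = \sum_{i=1}^N \lambda_i \mathbf{v}_i^H \mathbf{X}\mathbf{v}_i$ is itself a nonnegative real number. The absolute value may then be dropped, and the same chain of bounds gives $\mathrm{tr}(\mathbf{X}\mathbf{Y}) \leq \|\mathbf{X}\|_2\, \mathrm{tr}(\mathbf{Y})$ directly.

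The argument is entirely elementary, so there is no genuine obstacle; the only points requiring a little care are conceptual rather than technical. First, the spectral decomposition must be applied to $\mathbf{Y}$ (which is guaranteed to be unitarily diagonalizable with nonnegative real spectrum precisely because it is positive semi-definite), not to the arbitrary matrix $\mathbf{X}$. Second, in the general case the quadratic forms $\mathbf{v}_i^H\mathbf{X}\mathbf{v}_i$ may be complex-valued, so passing the modulus inside the sum via $|\sum_i \lambda_i \mathbf{v}_i^H\mathbf{X}\mathbf{v}_i| \leq \sum_i \lambda_i |\mathbf{v}_i^H\mathbf{X}\mathbf{v}_i|$ is legitimate only because the weights $\lambda_i$ are nonnegative, which is exactly the hypothesis that $\mathbf{Y}$ is positive semi-definite.
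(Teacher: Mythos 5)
Your proof is correct. The paper itself gives no proof of this lemma---it simply cites \cite[Lemma B.7]{massivemimobook}---and your argument via the spectral decomposition $\mathbf{Y} = \sum_{i=1}^N \lambda_i \mathbf{v}_i \mathbf{v}_i^H$, the identity $\mathrm{tr}(\mathbf{X}\mathbf{Y}) = \sum_{i=1}^N \lambda_i \mathbf{v}_i^H \mathbf{X} \mathbf{v}_i$, and the bound $|\mathbf{v}_i^H \mathbf{X} \mathbf{v}_i| \leq \|\mathbf{X}\|_2$ is exactly the standard proof used in that reference, including the correct observation that nonnegativity of the $\lambda_i$ is what licenses moving the modulus inside the sum and that positive semi-definiteness of $\mathbf{X}$ makes the trace real and nonnegative in the second claim.
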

\begin{lemma}{\cite[Lemma~$9$]{Chien2020book}} \label{Lemma:Sup1}
For a random variable $\mathbf{x} \in \mathbb{C}^N$ distributed as $\mathcal{CN}(\mathbf{0}, \bar{\mathbf{R}})$ with $\bar{\mathbf{R}} \in \mathbb{C}^{N \times N}$ and a given deterministic matrix $\mathbf{M} \in \mathbb{C}^{N \times N}$, it holds that
$\mathbb{E} \big\{ |\mathbf{x}^H \mathbf{M} \mathbf{x} |^2 \big\} = \big| \mathrm{tr}( \bar{\mathbf{R}} \mathbf{M} ) \big|^2 + \mathrm{tr} \big( \bar{\mathbf{R}} \mathbf{M} \bar{\mathbf{R}} \mathbf{M}^H \big)$.
\end{lemma}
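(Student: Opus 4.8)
The statement is a standard complex-Gaussian fourth-moment identity, and the plan is to reduce it to a quadratic-form moment by whitening. First I would factor the covariance as $\mathbf{R} = \mathbf{R}^{1/2}\mathbf{R}^{1/2}$ with $\mathbf{R}^{1/2}$ Hermitian positive semi-definite, and write $\mathbf{x} = \mathbf{R}^{1/2}\mathbf{z}$, where $\mathbf{z} \sim \mathcal{CN}(\mathbf{0}, \mathbf{I}_N)$ has independent standard circularly-symmetric complex Gaussian entries. Setting $\mathbf{A} = \mathbf{R}^{1/2}\mathbf{M}\mathbf{R}^{1/2}$ and noting that $(\mathbf{z}^H\mathbf{A}\mathbf{z})^\ast = \mathbf{z}^H\mathbf{A}^H\mathbf{z}$, the target quantity becomes $\mathbb{E}\{|\mathbf{x}^H\mathbf{M}\mathbf{x}|^2\} = \mathbb{E}\{(\mathbf{z}^H\mathbf{A}\mathbf{z})(\mathbf{z}^H\mathbf{A}^H\mathbf{z})\}$, so everything is expressed through the white vector $\mathbf{z}$.

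The core step is the moment identity $\mathbb{E}\{(\mathbf{z}^H\mathbf{A}\mathbf{z})(\mathbf{z}^H\mathbf{B}\mathbf{z})\} = \mathrm{tr}(\mathbf{A})\,\mathrm{tr}(\mathbf{B}) + \mathrm{tr}(\mathbf{A}\mathbf{B})$, valid for any deterministic $\mathbf{A},\mathbf{B}$ and i.i.d.\ standard complex Gaussian $\mathbf{z}$. I would prove it by expanding both quadratic forms entrywise and invoking the fourth-moment (Isserlis/Wick) rule $\mathbb{E}\{z_i^\ast z_j z_k^\ast z_l\} = \delta_{ij}\delta_{kl} + \delta_{il}\delta_{jk}$, where crucially the circular symmetry of $\mathbf{z}$ forces the pseudo-covariance pairings $\mathbb{E}\{z_j z_l\}$ to vanish, leaving exactly these two contractions; resumming the first contraction yields $\mathrm{tr}(\mathbf{A})\,\mathrm{tr}(\mathbf{B})$ and the second yields $\mathrm{tr}(\mathbf{A}\mathbf{B})$. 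Applying the identity with $\mathbf{B} = \mathbf{A}^H$, and using $\mathrm{tr}(\mathbf{A}^H) = \overline{\mathrm{tr}(\mathbf{A})}$, gives $\mathbb{E}\{|\mathbf{z}^H\mathbf{A}\mathbf{z}|^2\} = |\mathrm{tr}(\mathbf{A})|^2 + \mathrm{tr}(\mathbf{A}\mathbf{A}^H)$.

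Finally I would substitute back $\mathbf{A} = \mathbf{R}^{1/2}\mathbf{M}\mathbf{R}^{1/2}$ and use the cyclic invariance of the trace together with the Hermiticity of $\mathbf{R}^{1/2}$: $\mathrm{tr}(\mathbf{A}) = \mathrm{tr}(\mathbf{R}^{1/2}\mathbf{M}\mathbf{R}^{1/2}) = \mathrm{tr}(\mathbf{R}\mathbf{M})$, and $\mathbf{A}\mathbf{A}^H = \mathbf{R}^{1/2}\mathbf{M}\mathbf{R}\mathbf{M}^H\mathbf{R}^{1/2}$, so that $\mathrm{tr}(\mathbf{A}\mathbf{A}^H) = \mathrm{tr}(\mathbf{R}\mathbf{M}\mathbf{R}\mathbf{M}^H)$. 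This reproduces the claimed expression $\mathbb{E}\{|\mathbf{x}^H\mathbf{M}\mathbf{x}|^2\} = |\mathrm{tr}(\mathbf{R}\mathbf{M})|^2 + \mathrm{tr}(\mathbf{R}\mathbf{M}\mathbf{R}\mathbf{M}^H)$.

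As for the main obstacle, the computation is essentially index bookkeeping and carries no deep difficulty; the only delicate point is the fourth-moment step, namely tracking conjugates correctly and justifying that circular symmetry annihilates the pseudo-covariance contraction. An alternative, whitening-free route would expand $\mathbb{E}\{|\mathbf{x}^H\mathbf{M}\mathbf{x}|^2\}$ directly in the entries of $\mathbf{x}$ via $\mathbb{E}\{x_i^\ast x_j x_k x_l^\ast\} = R_{ji}R_{kl} + R_{ki}R_{jl}$ and then resum the two index groups into $|\mathrm{tr}(\mathbf{R}\mathbf{M})|^2$ and $\mathrm{tr}(\mathbf{R}\mathbf{M}\mathbf{R}\mathbf{M}^H)$ respectively; I would keep the whitening version as the cleaner presentation since it isolates the single Gaussian identity being used.
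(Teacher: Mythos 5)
Your proof is correct: the whitening substitution, the Wick/Isserlis fourth-moment identity $\mathbb{E}\{z_i^\ast z_j z_k^\ast z_l\} = \delta_{ij}\delta_{kl} + \delta_{il}\delta_{jk}$ for circularly symmetric $\mathbf{z}$, and the back-substitution via cyclicity and Hermiticity of $\mathbf{R}^{1/2}$ all go through without gaps. Note, however, that the paper does not prove this lemma at all --- it is imported verbatim as \cite[Lemma~9]{Chien2020book} and used as a black box in Appendix~B and Appendix~D --- so there is no in-paper argument to compare against; your derivation is the standard one and supplies exactly what the paper delegates to the reference. One small remark: your alternative whitening-free route, resumming $\mathbb{E}\{x_i^\ast x_j x_k x_l^\ast\} = R_{ji}R_{kl} + R_{ki}R_{jl}$ directly, silently uses the Hermiticity of $\mathbf{R}$ to identify $\sum_{k,l} M_{kl}^\ast R_{kl}$ with $\overline{\mathrm{tr}(\mathbf{R}\mathbf{M})}$; that is automatic for a covariance matrix but worth flagging, and it is one reason the whitening presentation you chose is the cleaner of the two.
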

\begin{lemma}\label{lemmaMN}
For a random vector $\mathbf{x} \in \mathbb{C}^N$ distributed as $\mathbf{x} \sim \mathcal{CN}(\mathbf{0}, \bar{\mathbf{R}})$ with $\bar{\mathbf{R}} \in \mathbb{C}^{N \times N}$ and  two deterministic matrices $\mathbf{M},\mathbf{N} \in \mathbb{C}^{N \times N}$, it holds that
\begin{equation} \label{eq:xAxxBx}
\mathbb{E} \{ \mathbf{x}^H \mathbf{M} \mathbf{x} \mathbf{x}^H \mathbf{N} \mathbf{x}  \} = \mathrm{tr}(\bar{\mathbf{R}}\mathbf{M}\bar{\mathbf{R}}\mathbf{N}) + \mathrm{tr}(\bar{\mathbf{R}}\mathbf{M}) \mathrm{tr}(\bar{\mathbf{R}}\mathbf{N}).
\end{equation}
\end{lemma}
\begin{proof}
Consider $\mathbf{x} = \bar{\mathbf{R}}^{1/2} \tilde{\mathbf{x}}$ with $\tilde{\mathbf{x}} \sim \mathcal{CN}(\mathbf{0}, \mathbf{I}_N)$. Let us further denote $\widetilde{\mathbf{M}} = \bar{\mathbf{R}}^{1/2} \mathbf{M} \bar{\mathbf{R}}^{1/2}$ and $\widetilde{\mathbf{N}} = \bar{\mathbf{R}}^{1/2} \mathbf{N} \bar{\mathbf{R}}^{1/2}$,  where  $[\widetilde{\mathbf{M}}]_{mn}$ and $[\widetilde{\mathbf{N}}]_{mn}$ are the $(m,n)-$th elements of matrix $\widetilde{\mathbf{M}}$ and $\widetilde{\mathbf{N}}$, respectively.  Then, the expectation on the left-hand side of \eqref{eq:xAxxBx} is
\begin{equation} \label{eq:xMxxNx}
\begin{split}
&\mathbb{E} \{ \mathbf{x}^H \mathbf{M} \mathbf{x} \mathbf{x}^H \mathbf{N} \mathbf{x}  \} \\
&= \mathbb{E} \left\{ \left( \sum_{m=1}^N \sum_{n=1}^N \tilde{x}_m^\ast [\widetilde{\mathbf{M}}]_{mn} \tilde{x}_n \right)\left( \sum_{m=1}^N \sum_{n=1}^N \tilde{x}_m^\ast [\widetilde{\mathbf{N}}]_{mn} \tilde{x}_n \right) \right\}\\
& = \sum_{m=1}^N \sum_{n=1}^N \sum_{m'=1}^N \sum_{n'=1}^N  [\widetilde{\mathbf{M}}]_{mn}  [\widetilde{\mathbf{N}}]_{m'n'} \mathbb{E} \{ \tilde{x}_m^\ast \tilde{x}_n  \tilde{x}_{m'}^\ast \tilde{x}_{n'}\}.
\end{split}
\end{equation}
where $\tilde{x}_n$ is the $n-$th element of vector $\tilde{\mathbf{x}}$. By noting that $\mathbb{E}\{ |\tilde{x}_m|^4 \} = 2$ from Lemma~\ref{Lemma:Sup1} and $\mathbb{E}\{ |\tilde{x}_m|^2 |\tilde{x}_n|^2 \} = 1$ if $m \neq n$, we obtain the following
\begin{equation} \label{eq:Ecase}
\mathbb{E} \{ \tilde{x}_m^\ast \tilde{x}_n  \tilde{x}_{m'}^\ast \tilde{x}_{n'}\} = \begin{cases}
2, & \mbox{if } m=n=m'=n', \\
1, & \mbox{if } (m=n) \neq (m'=n'), \\
1, & \mbox{if } (m=n') \neq (m'=n), \\
0, & \mbox{otherwise}. 
\end{cases}
\end{equation}
Consequently, \eqref{eq:xMxxNx} can be further simplified as
\begin{equation}
\mathbb{E} \{ \mathbf{x}^H \mathbf{M} \mathbf{x} \mathbf{x}^H \mathbf{N} \mathbf{x}  \} = \sum_{m=1}^N \sum_{n=1}^N  [\widetilde{\mathbf{M}}]_{mm} [\widetilde{\mathbf{N}}]_{nn} + \sum_{m=1}^N \sum_{n=1}^N  [\widetilde{\mathbf{M}}]_{mn} [\widetilde{\mathbf{N}}]_{nm},
\end{equation}
which, with the aid of some algebraic manipulations, coincides with \eqref{eq:xAxxBx}.
\end{proof}
\vspace*{-0.2cm}
\subsection{Proof of Lemma~\ref{lemma:ChannelProperty}} \label{appendix:ChannelProperty}
\vspace*{-0.1cm}
We first compute the second moment of the aggregated channel $u_{mk}$ by capitalizing on the statistical independence of the direct and indirect channels, as follows
\begin{equation}
\begin{split}
\mathbb{E} \{ |u_{mk}|^2 \} &=   \mathbb{E} \{ |g_{mk}|^2 \} + \mathbb{E} \big\{ |\mathbf{h}_{m}^H \pmb{\Phi} \mathbf{z}_{k}|^2 \big\} \\
&\stackrel{(a)}{=} \beta_{mk} +  \mathbb{E} \left\{ \mathrm{tr}\big(  \pmb{\Phi}^H \mathbf{h}_{m} \mathbf{h}_{m}^H \pmb{\Phi} \mathbf{z}_{k} \mathbf{z}_{k}^H \big)  \right\}\\
&\stackrel{(b)}{=} \beta_{mk} +  \mathrm{tr}\left(  \pmb{\Phi}^H \mathbb{E} \big\{ \mathbf{h}_{m} \mathbf{h}_{m}^H  \big\}  \pmb{\Phi} \mathbb{E} \big\{ \mathbf{z}_{k} \mathbf{z}_{k}^H \big\}  \right) \\
&= \beta_{mk} + \mathrm{tr} ( \pmb{\Theta}_{mk}) = \delta_{mk},
\end{split}
\end{equation}
where $(a)$ follows by applying the trace of product property $\mathrm{tr}(\mathbf{X}\mathbf{Y})= \mathrm{tr}(\mathbf{Y}\mathbf{X})$ for some given size-matched matrices $\mathbf{X}$ and $\mathbf{Y}$; and $(b)$ is obtained thanks to the independence of the cascaded channels $\mathbf{h}_{m}$ and $\mathbf{z}_{k}$. The fourth moment of the aggregated channel can be written, from \eqref{eq:umk}, as follows
\begin{equation} \label{eq:Eumk4}
\begin{split}
&\mathbb{E} \{ |u_{mk}|^4 \} = \\ 
&\mathbb{E} \left\{ \left| | g_{mk}|^2 +  g_{mk}^\ast \mathbf{h}_{m}^H \pmb{\Phi} \mathbf{z}_{k} +  g_{mk} \mathbf{z}_{k}^H  \pmb{\Phi}^H \mathbf{h}_{m} + \big| \mathbf{h}_{m}^H \pmb{\Phi} \mathbf{z}_{k} \big|^2 \right|^2 \right\}
\end{split}
\end{equation}
By setting $a= | g_{mk}|^2$, $b= g_{mk}^\ast \mathbf{h}_{m}^H \pmb{\Phi} \mathbf{z}_{k},$ $c= g_{mk} \mathbf{z}_{k}^H  \pmb{\Phi}^H \mathbf{h}_{m},$ and $d=\big| \mathbf{h}_{m}^H \pmb{\Phi} \mathbf{z}_{k} \big|^2,$ \eqref{eq:Eumk4} can be equivalently written as follows
\begin{equation} \label{eq:Eumk4v1}
\mathbb{E} \{|u_{mk}|^4\} = \mathbb{E}\{ |a|^2 \} + \mathbb{E}\{ |b|^2 \} + \mathbb{E}\{ |c|^2 \} + 2 \mathbb{E}\{ a d \} + \mathbb{E}\{ |d|^2 \}.
\end{equation}
By applying Lemma~\ref{Lemma:Sup1} with $g_{mk} \sim \mathcal{CN}(0, \beta_{mk})$, the first expectation on the right-hand side of \eqref{eq:Eumk4v1} is equal to
\begin{equation} \label{eq:a}
\mathbb{E}\{|a|^2\} = 2 \beta_{mk}^2.
\end{equation}
By exploiting the independence between the direct and RIS-assisted links, the next three expectations on the right-hand side of \eqref{eq:Eumk4v1} are equal to
\begin{equation}
\mathbb{E}\{ |b|^2 \}= \mathbb{E}\{ |c|^2 \}= \mathbb{E}\{ ad \} = \beta_{mk} \mathrm{tr} ( \pmb{\Theta}_{mk} ).
\end{equation}
By introducing the normalized variable $\tilde{z} = \mathbf{h}_{m}^H \pmb{\Phi} \mathbf{z}_{k}/\big\| \mathbf{R}_m^{1/2} \pmb{\Phi} \mathbf{z}_{k} \big\|$ with $\tilde{z} \sim \mathcal{CN}(0,1)$, the last expectation on the right-hand side of  \eqref{eq:Eumk4v1} is equal to
\begin{equation} \label{eq:d}
\begin{split}
&\mathbb{E}\{ |d|^2 \} = \mathbb{E}\left\{ \left\| \mathbf{R}_{m}^{1/2} \pmb{\Phi} \mathbf{z}_{k}  \right\|^4 |\tilde{z}|^4 \right\} \\
&\stackrel{(a)}{=} \mathbb{E}\left\{ \left\| \mathbf{R}_{m}^{1/2} \pmb{\Phi}  \mathbf{z}_{k}  \right\|^4 \right\} \mathbb{E}\{ |\tilde{z}|^4 \} = 2 \left( \mathrm{tr} ( \pmb{\Theta}_{mk} ) \right)^2 + 2\mathrm{tr}\big( \pmb{\Theta}_{mk}^2 \big) ,
\end{split}
\end{equation}
where $(a)$ follows because $\tilde{z}$ is independent of the remaining random variables; and $(b)$ is obtained by virtue of Lemma~\ref{Lemma:Sup1}. Inserting \eqref{eq:a}--\eqref{eq:d} into \eqref{eq:Eumk4v1}, the proof follows with the aid of some algebraic manipulations.

Also, by exploiting the second moment in \eqref{eq:2Order}, the expectation of the two independent aggregated channels can be formulated as shown in \eqref{eq:2IndeAggre}. The correlation between the two aggregated channels $u_{mk}$ and $u_{m'k}$, $m \neq m'$, is, by definition, as follows
\begin{equation}
\begin{split}
&\mathbb{E} \{ u_{mk} u_{m'k}^{\ast} \} = \mathbb{E} \{ g_{mk} g_{m'k}^{\ast} \} + \mathbb{E} \{ g_{mk} (\mathbf{h}_{m'}^H \pmb{\Phi} \mathbf{z}_k)^\ast \}\\
& + \mathbb{E} \{\mathbf{h}_{m}^H \pmb{\Phi} \mathbf{z}_k  g_{m'k}^\ast \}  + \mathbb{E}\{ \mathbf{h}_{m}^H \pmb{\Phi} \mathbf{z}_k (\mathbf{h}_{m'}^H \pmb{\Phi} \mathbf{z}_k)^\ast \} \stackrel{(a)}{=} 0,
\end{split}
\end{equation}
where $(a)$ follows because the propagation channels are independent.

The expectation in \eqref{eq:2UncorreChan} can be written as follows
\begin{equation} \label{eq:product2u}
\begin{split}
&\mathbb{E} \{ | u_{mk} u_{m'k}^\ast|^2 \} =  \mathbb{E} \{ | u_{mk}|^2  |u_{m'k}|^2 \} \\
&= \mathbb{E} \{ |g_{mk}|^2 |g_{m'k}|^2\} +  \mathbb{E} \{ |g_{mk}|^2 \mathbf{z}_k^H \pmb{\Phi}^H \mathbf{h}_{m'} \mathbf{h}_{m'}^H \pmb{\Phi} \mathbf{z}_k \}\\
& \quad + \mathbb{E} \{ |g_{m'k}|^2 \mathbf{z}_k^H \pmb{\Phi}^H \mathbf{h}_m \mathbf{h}_m^H \pmb{\Phi} \mathbf{z}_k \} \\
&\quad + \mathbb{E}\{ \mathbf{z}_k^H \pmb{\Phi}^H \mathbf{h}_m \mathbf{h}_m^H \pmb{\Phi} \mathbf{z}_k \mathbf{z}_k^H \pmb{\Phi}^H \mathbf{h}_{m'} \mathbf{h}_{m'}^H \pmb{\Phi} \mathbf{z}_k \}\\
& = \beta_{mk} \beta_{m'k} + \beta_{mk} \mathrm{tr}( \pmb{\Theta}_{m'k} ) + \beta_{m'k} \mathrm{tr}( \pmb{\Theta}_{mk} ) \\
&\quad + \mathbb{E}\{ \mathbf{z}_k^H \pmb{\Phi}^H \mathbf{R}_m \pmb{\Phi} \mathbf{z}_k \mathbf{z}_k^H \pmb{\Phi}^H \mathbf{R}_{m'} \pmb{\Phi} \mathbf{z}_k \} \\
&\stackrel{(a)}{=} \big(\beta_{mk} + \mathrm{tr}( \pmb{\Theta}_{mk}) \big)\big( \beta_{m'k} + \mathrm{tr}( \pmb{\Theta}_{m'k} )  \big) + \mathrm{tr}( \pmb{\Theta}_{mk} \pmb{\Theta}_{m'k} ),
\end{split}
\end{equation}
where $(a)$ is obtained by utilizing Lemma~\ref{lemmaMN}. Similar steps can be applied to the two aggregated channels $u_{mk}$ and $u_{mk'}$ with $k \neq k'$. 

The expectation in \eqref{eq:4Chan} can be written as follows
\begin{equation}
\begin{split}
&\mathbb{E}\{ u_{mk}^\ast u_{mk'} u_{m'k'}^{\ast} u_{m'k}  \} \\
& \stackrel{(a)}{=} \mathbb{E} \{ \mathbf{z}_k^H \pmb{\Phi}^H \mathbf{h}_m \mathbf{h}_m^H \pmb{\Phi} \mathbf{z}_{k'} \mathbf{z}_{k'}^H \pmb{\Phi}^H \mathbf{h}_{m'} \mathbf{h}_{m'}^H \pmb{\Phi} \mathbf{z}_k \} \\
& = \mathrm{tr}(\pmb{\Phi}^H \mathbf{R}_m \pmb{\Phi} \widetilde{\mathbf{R}}_{k'} \pmb{\Phi}^H \mathbf{R}_{m'}\pmb{\Phi} \widetilde{\mathbf{R}}_k ),
\end{split}
\end{equation}
where $(a)$ follows from the independence of the direct links. 
\vspace*{-0.2cm}
\subsection{Proof of Corollary~\ref{CorollaryReal}} \label{AppendixReal}
\vspace*{-0.1cm}
By utilizing the identities $\mathbb{E} \{ (X - \mathbb{E}\{X\}) (Y - \mathbb{E}\{Y\}) \} = \mathbb{E}\{XY\} - \mathbb{E}\{X\} \mathbb{E}\{Y\}$ and $\mathbb{E}\{ | X - \mathbb{E}\{X \}|^2 \} = \mathbb{E}\{ |X|^2 \} - |\mathbb{E}\{ X \}|^2$, the expectation in \eqref{eq:omkmk} can be formulated as follows
\begin{multline} \label{eq:omkomk}
\mathbb{E} \{ o_{mk} o_{m'k}^{\ast}  \} = \sqrt{\alpha_{mk} \alpha_{m'k}} \underbrace{\mathbb{E}\{ \hat{u}_{mk}^\ast u_{mk} \hat{u}_{m'k} u_{m'k}^{\ast}  \}}_{= Q_{mm'k}} \\
  - \sqrt{\alpha_{mk} \alpha_{m'k}} \gamma_{mk} \gamma_{m'k}.
\end{multline}
By using the analytical expressions of the projected training signal in \eqref{eq:ReceivedPilotv1} and the channel estimate in \eqref{eq:ChannelEst}, $Q_{mm'k}$ defined in \eqref{eq:omkomk} can be formulated in a closed-form expression, as follows
\begin{equation} \label{eq:Q}
\begin{split}
& Q_{mm'k} \mathop  = \limits^{\left( a \right)} c_{mk} c_{m'k}  \mathbb{E} \left\{ \left( \sqrt{p\tau_p} \sum_{k' \in \mathcal{P}_k} u_{mk'}^{\ast} + w_{pmk}  \right) u_{mk} \times \right.  \\
&\quad  \left.  \left( \sqrt{p\tau_p} \sum_{k' \in \mathcal{P}_k} u_{m'k'} + w_{pm'k}  \right) u_{m'k}^{\ast} \right\} \\
&\mathop  = \limits^{\left( b \right)} c_{mk} c_{m'k} p\tau_p \mathbb{E}\big\{ |u_{mk}|^2 |u_{m'k}|^2 \big\}  +   c_{mk} c_{m'k} p\tau_p   \times \\
& \sum_{k' \in \mathcal{P}_k \setminus \{ k\}} \mathbb{E} \{ u_{mk'}^\ast u_{mk} u_{m'k}^\ast u_{m'k'} \}\\
& = c_{mk} c_{m'k} p\tau_p \delta_{mk} \delta_{m'k} + c_{mk} c_{m'k} p\tau_p \sum_{k' \in \mathcal{P}_k} \mathrm{tr}( \pmb{\Theta}_{mk} \pmb{\Theta}_{m'k'} ),
\end{split}
\end{equation}
where $(a)$ is obtained by retaining only the terms whose expectation is not zero based on \eqref{eq:Uncorrelated} and $(b)$ follows by utilizing \eqref{eq:2UncorreChan}. From \eqref{eq:gammamk}, finally, we obtain
\begin{equation} \label{eq:Qv1}
	Q_{mm'k} = \gamma_{mk} \gamma_{m'k} + c_{mk} c_{m'k} p \tau_p \sum_{k' \in \mathcal{P}_k} \mathrm{tr}( \pmb{\Theta}_{mk} \pmb{\Theta}_{m'k'}  ). 
\end{equation}
The proof follows by inserting \eqref{eq:omkomk} in \eqref{eq:Qv1}. 

\vspace*{-0.2cm}
\subsection{Proof of Corollary~\ref{corollary:EqualPhase}} \label{appendix:CorMSEk}
\vspace*{-0.1cm}
We introduce the shorthand notation
$b_{mk} = p \tau_p \alpha_m \tilde{\alpha}_{k} d_H^2 d_V^2 $ and
$d_{mk} = p \tau_p d_H^2 d_V^2 \alpha_m \sum_{k' \in \mathcal{P}_k} \tilde{\alpha}_{k'}$. 
When the direct links are weak enough to be negligible, the NMSE of the channel estimate of the user~$k$ at the AP~$m$ can be reformulated as follows
\begin{equation}
\mathrm{NMSE}_{mk} = 1  - \frac{ b_{mk} \mathrm{tr}\big(\pmb{\Phi}^H \mathbf{R} \pmb{\Phi} \mathbf{R} \big)}{1 + d_{mk} \mathrm{tr}\big(\pmb{\Phi}^H \mathbf{R} \pmb{\Phi} \mathbf{R} \big)}.
\end{equation}
Let us denote by $f \big( \mathrm{tr}\big(\pmb{\Phi}^H \mathbf{R} \pmb{\Phi} \mathbf{R} \big) \big)= \sum_{m=1}^M \sum_{k=1}^K \mathrm{NMSE}_{mk}$ the objective function of the problem in \eqref{Prob:NMSEk}, the first-order derivative of $\mathrm{NMSE}_{mk}$ with respect to $\mathrm{tr}\big(\pmb{\Phi}^H \mathbf{R} \pmb{\Phi} \mathbf{R} \big)$ is
\begin{equation} \label{eq:1stDerivativev1}
\frac{ d f \big( \mathrm{tr}\big(\pmb{\Phi}^H \mathbf{R} \pmb{\Phi} \mathbf{R} \big) \big) }{d \mathrm{tr}\big(\pmb{\Phi}^H \mathbf{R} \pmb{\Phi} \mathbf{R} \big)} = - \sum_{m=1}^M \sum_{k=1}^K \frac{b_{mk}}{\left(1 + d_{mk} \mathrm{tr}\big(\pmb{\Phi}^H \mathbf{R} \pmb{\Phi} \mathbf{R} \big) \right)^2} < 0,
\end{equation}
which implies that the objective function is a monotonically decreasing function of $\mathrm{tr}\big(\pmb{\Phi}^H \mathbf{R} \pmb{\Phi} \mathbf{R} \big)$ since $b_{mk} \geq 0, \forall m,k$. Moreover, $\pmb{\Phi}^H \mathbf{R} \pmb{\Phi} \mathbf{R} $ is similar to $\mathbf{R}^{1/2} \pmb{\Phi}^H \mathbf{R} \pmb{\Phi} \mathbf{R}^{1/2} $, which is a positive semidefinite matrix. Thus, we obtain
\begin{equation} \label{eq:Tracev1}
\begin{split}
\mathrm{tr}\big(\pmb{\Phi}^H \mathbf{R} \pmb{\Phi} \mathbf{R} \big) &= \left| \mathrm{tr}\big(\pmb{\Phi}^H \mathbf{R} \pmb{\Phi} \mathbf{R} \big)  \right| =\left| \sum_{n=1}^N \sum_{n'=1}^N (r^{nn'})^2 e^{j(\theta_n - \theta_{n'})} \right|.
\end{split}
\end{equation}
Let us introduce the two vectors $\mathbf{a}, \mathbf{b} \in \mathbb{C}^{N^2}$ defined as follows
\begin{align} 
\mathbf{a} &= [r^{11}, \ldots, r^{nn'}, \ldots, r^{NN}]^T, \label{eq:avec} \\
\mathbf{b} &= [r^{11}e^{j (\theta_1 - \theta_1)}, \ldots, r^{nn'}e^{j (\theta_n - \theta_{n'})}, \ldots,  r^{NN}e^{j (\theta_N - \theta_N)}]^T. \label{eq:bvec}
\end{align}
With the aid of Cauchy-Schwarz's inequality, we obtain the following upper bound for \eqref{eq:Tracev1}
\begin{equation} \label{eq:CauchySwatz}
\begin{split}
\mathrm{tr}\big(\pmb{\Phi}^H \mathbf{R} \pmb{\Phi} \mathbf{R} \big) &= |\mathbf{a}^H \mathbf{b}| \leq \|\mathbf{a}\|  \|\mathbf{b}\| =  \sum_{n=1}^N \sum_{n'=1}^N (r^{nn'})^2 ,
\end{split}
\end{equation}
which holds with equality  if and only if the two vectors $\mathbf{a}$  and $\mathbf{b}$ in \eqref{eq:avec} and \eqref{eq:bvec} are parallel. This implies $\theta_{n} = \theta_{n'}, \forall n,n'$. By combining \eqref{eq:1stDerivativev1} and \eqref{eq:CauchySwatz}, the proof is concluded.
\vspace*{-0.2cm}
\subsection{Proof of Theorem~\ref{theorem:ULMR}} \label{appendix:ULMR}
\vspace*{-0.1cm}
To obtain the closed-form expression of the uplink SINR in \eqref{eq:ULSINR}, we first compute $|\mathsf{DS}_k|^2$ by using the definition of $u_{mk}$ in \eqref{eq:umk} as
\begin{equation} \label{eq:DSkv1}
\begin{split}
|\mathsf{DS}_{uk}|^2 & \stackrel{(a)}{=}  \rho_u\eta_k \left| \mathbb{E}  \left\{ \sum_{m=1}^M  \hat{u}_{mk}^\ast (\hat{u}_{mk} + e_{mk} )\right\} \right|^2 \\
& \stackrel{(b)}{=} \rho_u\eta_k \left|  \sum_{m=1}^M \mathbb{E}  \left\{ | \hat{u}_{mk}|^2 \right\} \right|^2=\rho_u \eta_{k} \left( \sum_{m=1}^M \gamma_{mk} \right)^2,
\end{split}
\end{equation}
where $(a)$ is obtained by expressing the original channel $u_{mk}$ into the summation of its channel estimate and its estimation error as stated in Lemma~\ref{lemma:ChannelEst}; and $(b)$ follows because the channel estimate and the channel estimation error are uncorrelated. Since the aggregated channels sharing the same AP index are correlated, the first expectation in the denominator of \eqref{eq:ULSINR} is equal to
\begin{equation} \label{eq:1Deno}
\begin{split}
&\mathbb{E} \{|\mathsf{BU}_{uk}|^2 \}
=  \rho_u \eta_k \mathbb{E} \left\{ \left| \sum_{m=1}^M o_{umk} \right|^2 \right\} \\
&= \underbrace{\rho_u \eta_k  \sum_{m=1}^M \sum_{m'=1, m' \neq m}^M \mathbb{E} \{ o_{umk} o_{um'k}^\ast\}}_{= \bar{T}_{u0} } +  \underbrace{\rho_u \eta_k  \sum_{m=1}^M \mathbb{E} \{ | o_{umk} |^2 \}}_{= \bar{T}_{u1}},
\end{split}
\end{equation}
where $o_{umk} = \hat{u}_{mk}^\ast u_{mk} - \mathbb{E}\{ \hat{u}_{mk}^\ast u_{mk}\}$ with $\omega_{mk} =1$.  The closed-form expression of $\bar{T}_{u0}$ defined in \eqref{eq:1Deno} is as follows
\begin{equation}
\bar{T}_{u0} = p \tau_p \rho_u \eta_k  \sum_{k' \in \mathcal{P}_k} \sum_{m=1}^M \sum_{m'=1, m' \neq m}^M c_{mk} c_{m'k} \mathrm{tr}(\pmb{\Theta}_{mk}\pmb{\Theta}_{m'k'}),
\end{equation}
thanks to \eqref{eq:omkmk} in Corollary~\ref{CorollaryReal}. The expectation of $\bar{T}_{u1}$ defined in \eqref{eq:1Deno} can be rewritten as follows
\begin{equation}
\begin{split}
&\bar{T}_{u1} = \rho_u\eta_k   \sum_{m=1}^M   \mathbb{E} \left\{ \left| \hat{u}_{mk}^\ast u_{mk}  -  \mathbb{E}  \left\{  \hat{u}_{mk}^\ast u_{mk}   \right\}\right|^2 \right\} \\
& \stackrel{(a)}{=} \rho_u\eta_k   \sum_{m=1}^M   \mathbb{E}  \left\{ \left| \hat{u}_{mk}^\ast u_{mk} \right|^2 \right\}  - \rho_u\eta_k  \sum_{m=1}^M  \left| \mathbb{E}  \left\{  \hat{u}_{mk}^\ast u_{mk}   \right\}\right|^2 \\
& \stackrel{(b)}{=}  \rho_u\eta_k   \sum_{m=1}^M   \mathbb{E}  \left\{ \left| \hat{u}_{mk}^\ast u_{mk} \right|^2 \right\}  - \rho_u\eta_k \sum_{m=1}^M  \gamma_{mk}^2,
\end{split}
\end{equation}
where $(a)$ is obtained by using the identity $\mathbb{E}\{|X - \mathbb{E} \{X\} |^2 \} = \mathbb{E}\{|X|^2\} - |\mathbb{E}\{X\}|^2$; and $(b)$ is obtained from $u_{mk}^\ast = \hat{u}_{mk}^\ast + e_{mk}^\ast$, by taking into account that the channel estimate and the channel estimation error are uncorrelated random variables as stated in Lemma~\ref{lemma:ChannelEst}. By replacing $\hat{u}_{mk}^\ast = c_{mk} y_{pmk}^\ast$ in the first expectation of \eqref{eq:1Deno}, we obtain
\begin{equation} \label{eq:BUkv1}
\begin{split}
&\mathbb{E}  \left\{ \left| \hat{u}_{mk}^\ast u_{mk} \right|^2 \right\} = c_{mk}^2  \mathbb{E}  \left\{ \left| y_{pmk}^\ast u_{mk} \right|^2 \right\} \\
&= c_{mk}^2  \mathbb{E}  \left\{ \left| \left( \sum_{k' \in \mathcal{P}_k} \sqrt{p\tau_p} u_{mk'}^\ast + w_{pmk}^\ast   \right)  u_{mk} \right|^2 \right\} \\
&=\underbrace{c_{mk}^2 p \tau_p \mathbb{E} \{ |u_{mk}|^4 \}}_{=T_{11}} + \underbrace{c_{mk}^2 p \tau_p \sum_{k' \in \mathcal{P}_k  \setminus \{ k\}} \mathbb{E} \{ |u_{mk'}^\ast u_{mk} |^2 \}}_{=T_{12}} \\
&\quad +  \underbrace{c_{mk}^2  \mathbb{E} \big\{ |w_{pmk}^\ast u_{mk} |^2 \big\}}_{= T_{13}}.
\end{split}
\end{equation}
Let us analyze the three terms, $T_{11}$, $T_{12}$, and $T_{13}$, in the last equality of \eqref{eq:BUkv1}. The first term  can be computed by exploiting the forth moment given in \eqref{eq:4Order}, as follows
\begin{equation} \label{eq:T1}
\begin{split}
& T_{11} = 2c_{mk}^2 p\tau_p \delta_{mk}^2 + 2 c_{mk}^2 p\tau_p \mathrm{tr}(\pmb{\Theta}_{mk}^2) \\
&\stackrel{(a)}{=} \gamma_{mk}^2+ c_{mk}^2 p\tau_p \delta_{mk}^2 +  2 c_{mk}^2 p\tau_p \mathrm{tr}(\pmb{\Theta}_{mk}^2),
\end{split}
\end{equation}
where $(a)$ is obtained by using the variance of the channel estimate in \eqref{eq:gammamk}, with $\delta_{mk}$ and $\xi_{mk}$ that are defined in the statement of the theorem. The second term can be computed by exploiting the uncorrelation of the two cascaded channels, as follows
\begin{equation}
\begin{split}
& T_{12} = c_{mk}^2 p \tau_p \sum_{k' \in \mathcal{P}_k  \setminus \{ k\}} \mathbb{E} \{ |u_{mk'}|^2  |u_{mk} |^2 \} \\
& = c_{mk}^2 p \tau_p \sum_{k' \in \mathcal{P}_k  \setminus \{ k\}} \delta_{mk'} \delta_{mk} +  c_{mk}^2 p \tau_p \sum_{k' \in \mathcal{P}_k  \setminus \{ k\}} \mathrm{tr}(\pmb{\Theta}_{mk'} \pmb{\Theta}_{mk}).
\end{split}
\end{equation}
The last term can be computed by exploiting the independence of the channel and noise, as follows
\begin{equation}\label{eq:T3}
\begin{split}
T_{13} &= c_{mk}^2 \mathbb{E} \{ |w_{pmk}|^2\}  \mathbb{E} \{ | u_{mk} |^2 \} \\
&= c_{mk}^2 \delta_{mk}.
\end{split}
\end{equation}
By inserting \eqref{eq:T1}--\eqref{eq:T3} into \eqref{eq:BUkv1} and with the aid of some algebraic steps, we obtain
\begin{equation} \label{eq:uhatu}
\begin{split}
&\mathbb{E}  \{ | \hat{u}_{mk}^\ast u_{mk} |^2 \} =  c_{mk}^2 p\tau_p \mathrm{tr}(\pmb{\Theta}_{mk}^2) + \gamma_{mk}^2  \\
&\quad + c_{mk} \sqrt{p\tau_p}  \delta_{mk}^2 + c_{mk}^2 p \tau_p \sum_{k' \in \mathcal{P}_k } \mathrm{tr}(\pmb{\Theta}_{mk'} \pmb{\Theta}_{mk}) \\
&   = c_{mk}^2 p\tau_p \mathrm{tr}(\pmb{\Theta}_{mk}^2) + \gamma_{mk}^2 + \gamma_{mk} \delta_{mk} \\
&\quad +  c_{mk}^2 p \tau_p \sum_{k' \in \mathcal{P}_k } \mathrm{tr}(\pmb{\Theta}_{mk'} \pmb{\Theta}_{mk}),
\end{split}
\end{equation}
where the final identity is obtained by using the relationship between $\gamma_{mk}$ and $c_{mk}$ in \eqref{eq:gammamk}. Combining \eqref{eq:1Deno} and \eqref{eq:uhatu}, the first term in the denominator of \eqref{eq:ULSINR} simplifies to
\begin{equation}\label{eq:BUkvc2}
\begin{split}
& \mathbb{E} \{|\mathsf{BU}_{uk}|^2 \} =   \rho_u \eta_k \sum_{m=1}^M \gamma_{mk} \delta_{mk} +  p \tau_p \rho_u \eta_k \sum_{k'' \in \mathcal{P}_k}\sum_{m=1}^M \sum_{m'=1}^M c_{mk} \\
& \times  c_{m'k} \mathrm{tr}(\pmb{\Theta}_{mk}\pmb{\Theta}_{m'k''}) +   p \tau_p \rho_u \eta_k \sum_{m=1}^M c_{mk}^2 \mathrm{tr}(\pmb{\Theta}_{mk}^2).
\end{split}
\end{equation}
The second term in the denominator of \eqref{eq:ULSINR} can be  split into the two terms based on the pilot reuse pattern defined in $\mathcal{P}_k$, as follows
\begin{multline} \label{eq:2Termv1}
\sum_{k'=1, k' \neq k}^K \mathbb{E} \{|\mathsf{UI}_{uk'k}|^2 \} = \sum_{k'\notin \mathcal{P}_k } \mathbb{E} \{|\mathsf{UI}_{uk'k}|^2 \} \\
 + \sum_{k' \in \mathcal{P}_k \setminus \{k\}} \mathbb{E} \{|\mathsf{UI}_{uk'k}|^2\}.
\end{multline}
The first term on the right-hand side of \eqref{eq:2Termv1} is the non-coherent interference, which is equal to
\begin{equation} \label{eq:UIkkv1}
\begin{split}
& \sum_{k'\notin \mathcal{P}_k } \mathbb{E} \{|\mathsf{UI}_{uk'k}|^2 \} = \rho_u \sum_{k'\notin \mathcal{P}_k } \eta_{k'} \mathbb{E} \left\{\left| \sum_{m=1}^M \hat{u}_{mk}^\ast u_{mk'} \right|^2 \right\} \\
&= \rho_u \sum_{k'\notin \mathcal{P}_k } \sum_{m=1}^M \sum_{m'=1}^M \eta_{k'} \underbrace{\mathbb{E} \left\{  \hat{u}_{mk}^\ast u_{mk'} \hat{u}_{m'k} u_{m'k'}^\ast  \right\}}_{= T_{mk'm'k}}.
\end{split}
\end{equation}
We compute each expectation $T_{mk'm'k}$ by utilizing the channel estimate in \eqref{eq:ChannelEst} with the aid of some algebraic manipulations, as follows
\begin{equation} \label{eq:Tmkprmprk}
\begin{split}
T_{mk'm'k} =& p \tau_p c_{mk} c_{m'k} \sum_{k'' \in \mathcal{P}_k}   \mathbb{E} \{u_{mk''}^{\ast} u_{mk'} u_{m' k''} u_{m' k'}^{\ast} \} \\
&+ c_{mk} c_{m'k}\mathbb{E} \{w_{pmk}^{\ast} u_{mk'} w_{pm' k} u_{m' k'}^{\ast} \}\\
=& \begin{cases}
  p \tau_p c_{mk} c_{m'k} \sum\limits_{k'' \in \mathcal{P}_k}  \bar{T}_{mk'm'k''}, & \mbox{if } m \neq m',\\
   \bar{T}_{mk'mk}, & \mbox{if } m= m'.
\end{cases}
\end{split}
\end{equation}
where the following definitions hold
\begin{align}
& \bar{T}_{mk'm'k''} = \mathbb{E} \{u_{mk''}^{\ast} u_{mk'} u_{m' k''} u_{m' k'}^{\ast} \},\\
& \bar{T}_{mk'mk} = p \tau_p c_{mk}^2 \sum_{k'' \in \mathcal{P}_k}   \mathbb{E} \{|u_{mk''}^{\ast} u_{mk'}|^2 \} + c_{mk}^2 \mathbb{E} \{ | u_{mk'}|^2 \} .
\end{align}
The closed-form expression of $\bar{T}_{mk'm'k''}$ is obtained by utilizing the uncorrelated property of the quadruple aggregated channels in \eqref{eq:4Chan}, as follows
\begin{equation} \label{eq:Tmkprmk}
\bar{T}_{mk'm'k''} = \mathrm{tr}(\pmb{\Theta}_{mk'} \pmb{\Theta}_{m'k''}).
\end{equation}
In addition, the closed-form expression of $\bar{T}_{mk'mk}$ can be computed by utilizing the results in Lemma~\ref{lemma:ChannelProperty}, as follows
\begin{equation} \label{eq:Tmkk}
\begin{split}
&\bar{T}_{mk'mk} = c_{mk}^2  p \tau_p \sum_{k'' \in \mathcal{P}_k} \delta_{mk''}\delta_{mk'}  + c_{mk}^2  p \tau_p \times \\
&\quad \sum_{k'' \in \mathcal{P}_k} \mathrm{tr}(\pmb{\Theta}_{mk''} \pmb{\Theta}_{mk'})+ c_{mk}^2 \delta_{mk'} \\
&  = \gamma_{mk} \delta_{mk'} + c_{mk}^2  p \tau_p \sum_{k'' \in \mathcal{P}_k} \mathrm{tr}( \pmb{\Theta}_{mk'} \pmb{\Theta}_{mk''}).
\end{split}
\end{equation}
Inserting \eqref{eq:Tmkprmk} and \eqref{eq:Tmkk} into \eqref{eq:Tmkprmprk}, and with the aid of some algebraic manipulations,  \eqref{eq:UIkkv1} can be equivalently formulated as follows
\begin{multline} \label{eq:EUIv1}
\sum_{k'\notin \mathcal{P}_k } \mathbb{E} \{|\mathsf{UI}_{uk'k}|^2 \} = \rho_u \sum_{k'\notin \mathcal{P}_k }\sum_{m=1}^M \eta_{k'} \gamma_{mk} \delta_{mk'} +   p \tau_p \rho_u \times \\  \sum_{k'\notin \mathcal{P}_k } \sum_{k'' \in \mathcal{P}_k}\sum_{m=1}^M  \sum_{m'=1}^M \eta_{k'} c_{mk} c_{m'k} \mathrm{tr}(\pmb{\Theta}_{mk'} \pmb{\Theta}_{m'k''} ).
\end{multline}
The second term on the right-hand side of \eqref{eq:2Termv1} is the coherent interference. By using \eqref{eq:ReceivedPilotv1} and  \eqref{eq:ChannelEst}, it simplifies as follows
\begin{equation} \label{eq:UIv1}
\begin{split}
&\mathbb{E} \{|\mathsf{UI}_{uk'k}|^2\} =  \mathbb{E} \left\{\left| \sum_{m=1}^M c_{mk} y_{pmk}^\ast u_{mk'} \right|^2 \right\} \\
&=   \mathbb{E}  \left\{ \left|\sum_{m=1}^M c_{mk} \left( \sum_{k'' \in \mathcal{P}_k} \sqrt{p\tau_p} u_{mk''}^\ast + w_{pmk}^\ast   \right)  u_{mk'} \right|^2 \right\} \\
&= \rho_u \eta_{k'} \mathbb{E}  \left\{ \left|\sum_{m=1}^M c_{mk} w_{pmk}^\ast u_{mk'} \right|^2 \right\} \\
&\quad + \rho_u \eta_{k'} p\tau_p \underbrace{\mathbb{E}  \left\{ \left|\sum_{m=1}^M c_{mk} \left( \sum_{k'' \in \mathcal{P}_k \setminus \{k'\}}  u_{mk''}^\ast \right)  u_{mk'} \right|^2 \right\}}_{=T_{u21}}\\
&\quad +  \rho_u \eta_{k'} p\tau_p \underbrace{ \mathbb{E}  \left\{ \left|\sum_{m=1}^M c_{mk}  |u_{mk'}|^2 \right|^2 \right\}}_{=T_{u22}}\\
& = \rho_u \eta_{k'} \sum_{m=1}^M c_{mk}^2 \delta_{mk'} + \rho_u \eta_{k'} p\tau_p (T_{u21} + T_{u22}) ,
\end{split} 
\end{equation}
which is obtained by using the identity $\mathbb{E}\{ |X+Y|^2 \} =\mathbb{E}\{ |X|^2 \} + \mathbb{E}\{ |Y|^2 \}$ that holds true for zero-mean and uncorrelated random variables. The expectation $T_{u21}$ can be simplified as follows
\begin{equation} \label{eq:T21}
\begin{split}
 & T_{u21} = \sum_{k'' \in \mathcal{P}_k \setminus \{k'\} } \sum_{m=1}^M  c_{mk}^2 \mathbb{E}\{ |u_{mk''}|^2 |u_{mk'}|^2 \} + \\
 &\sum_{k'' \in \mathcal{P}_k \setminus \{k'\} } \sum_{m=1}^M  \sum_{m'=1, m' \neq m }^M c_{mk}  c_{m'k} \mathbb{E}\{ u_{mk''}^\ast u_{mk'}  u_{m'k''} u_{m'k'}^\ast \} \\
&= \sum_{k'' \in \mathcal{P}_k \setminus \{k'\} } \sum_{m=1}^M  c_{mk}^2 \delta_{mk''} \delta_{mk'} + \\
& \sum_{k'' \in \mathcal{P}_k \setminus \{k'\} } \sum_{m=1}^M \sum_{m'=1}^M c_{mk} c_{m'k} \mathrm{tr}(\pmb{\Theta}_{mk'} \pmb{\Theta}_{m'k''} ).
\end{split}
\end{equation}
Similarly, the expectation $T_{u22}$ in \eqref{eq:UIv1} can be simplified as follows 
\begin{equation*} 
\begin{split}
&T_{u22} =  \sum_{m=1}^M \sum_{m'=1}^M c_{mk} c_{m'k} \mathbb{E} \{ |u_{mk'}|^2 |u_{m'k'}|^2\} \\
&= \sum_{m=1}^M  c_{mk}^2 \mathbb{E} \{ |u_{mk'}|^4\} + \sum_{m=1}^M \sum_{m'=1, m' \neq m}^M c_{mk} c_{m'k} \mathbb{E} \{ |u_{mk'}|^2 |u_{m'k'}|^2\} \\
&= 2 \sum_{m=1}^M c_{mk}^2 \delta_{mk'}^2 + 2 \sum_{m=1}^M c_{mk}^2 \mathrm{tr}(\pmb{\Theta}_{mk'}^2) + \sum_{m=1}^M \sum_{m'=1, m' \neq m}^M c_{mk} \times\\
&  c_{m'k} \delta_{mk'} \delta_{m'k'}+ \sum_{m=1}^M \sum_{m'=1, m' \neq m}^M c_{mk} c_{m'k} \mathrm{tr} (\pmb{\Theta}_{mk'} \pmb{\Theta}_{m'k'})
\end{split}
\end{equation*}
\begin{equation} \label{eq:T22}
\begin{split}
&= \sum_{m=1}^M c_{mk}^2 \delta_{mk'}^2 + \sum_{m=1}^M c_{mk}^2\mathrm{tr}(\pmb{\Theta}_{mk'}^2)  + \left(\sum_{m=1}^M c_{mk} \delta_{mk'} \right)^2\\
& + \sum_{m=1}^M \sum_{m'=1}^M c_{mk} c_{m'k} \mathrm{tr} (\pmb{\Theta}_{mk'} \pmb{\Theta}_{m'k'}).
\end{split}
\end{equation}
By inserting \eqref{eq:T21} and \eqref{eq:T22} into \eqref{eq:UIv1}, and with the aid of some algebraic manipulations, we obtain
\begin{equation}
\begin{split}
& \rho_{u} \eta_{k'} \sum_{m=1}^M c_{mk}^2 \delta_{mk'} + \rho_u \eta_{k'} p \tau_p \sum_{k'' \in \mathcal{P}_{k}} \sum_{m=1}^M c_{mk}^2 \delta_{mk''} \delta_{mk'} \\
& = \rho_{u} \eta_{k'} \sum_{m=1}^M c_{mk}^2 \delta_{mk'} \left(1 + p \tau_p \sum_{k'' \in \mathcal{P}_{k}} \delta_{m''k} \right)\\
& \stackrel{(a)}{=} \rho_u \eta_{k'} \sum_{m=1}^M c_{mk} \sqrt{p\tau_p} \delta_{mk} \delta_{mk'} \stackrel{(b)}{=} \rho_u \eta_{k'} \sum_{m=1}^M \gamma_{mk} \delta_{mk'},
\end{split}
\end{equation}
 where $(a)$ is obtained by using \eqref{eq:cmk} and $(b)$ by using \eqref{eq:gammamk}. Therefore, the total mutual interference between the user~$k'$ and the user~$k$ who share the same pilot sequence can be written as follows
\begin{equation} \label{eq:UIkprk}
\begin{split}
&\sum_{k' \in \mathcal{P}_k \setminus \{k \} } \mathbb{E} \{|\mathsf{UI}_{uk'k}|^2\} = \rho_u \sum_{k' \in  \mathcal{P}_k \setminus \{k \} } \sum_{m=1}^M \eta_{k'} \gamma_{mk} \delta_{mk'}   \\
&+ p \tau_p \rho_u \sum_{k' \in  \mathcal{P}_k \setminus \{k \} } \sum_{m=1}^M  \eta_{k'} c_{mk}^2 \mathrm{tr}(\pmb{\Theta}_{mk'}^2) \\
&+ p \tau_p \rho_u \sum_{k' \in  \mathcal{P}_k \setminus \{k \} } \sum_{k'' \in \mathcal{P}_k}  \sum_{m=1}^M  \sum_{m'=1}^M \eta_{k'} c_{mk} c_{m'k} \times \\
&\mathrm{tr}( \pmb{\Theta}_{mk'} \pmb{\Theta}_{m'k''})+ p \tau_p \rho_u \sum_{k' \in \mathcal{P}_k \setminus \{ k\} } \eta_{k'} \left(\sum_{m=1}^M c_{mk} \delta_{mk'} \right)^2.
\end{split}
\end{equation}
Let us denote $\mathsf{I}_{uk} = \sum_{k'=1, k' \neq k}^K \mathbb{E} \{ |\mathsf{UI}_{uk'k}|^2 \}  $. By combing \eqref{eq:EUIv1} and \eqref{eq:UIkprk}, the mutual interference at the user~$k$ can be formulated in a closed-form expression as follows
\begin{equation}
\begin{split}
&\mathsf{I}_{uk} = \rho_u \sum_{k'=1, k' \neq k}^K \sum_{m=1}^M \eta_{k'} \gamma_{mk} \delta_{mk'} + p \tau_p \rho_u \times \\
& \sum_{k' =1, k' \neq k }^K \sum_{k'' \in \mathcal{P}_k}    \sum_{m=1}^M  \sum_{m'=1}^M\eta_{k'} c_{mk}c_{m'k} \mathrm{tr}( \pmb{\Theta}_{mk'} \pmb{\Theta}_{m'k''}) +\\
&+  p \tau_p \rho_u \sum_{k' \in  \mathcal{P}_k \setminus \{ k \} } \sum_{m=1}^M  \eta_{k'} c_{mk}^2 \mathrm{tr}(\pmb{\Theta}_{mk'}^2) + p \tau_p \rho_u \times \\
& \sum_{k' \in \mathcal{P}_k \setminus \{ k\} } \eta_{k'} \left(\sum_{m=1}^M c_{mk} \delta_{mk'} \right)^2.
\end{split}
\end{equation}
Finally, the expectation of the additive noise after MR processing can be written as follows
\begin{equation} \label{eq:ULNoise}
\begin{split}
&\mathbb{E}\{|\mathsf{NO}_{uk}|^2 \}= \sum_{m=1}^M \mathbb{E} \{|\hat{u}_{mk}^\ast w_{um}|^2 \}\\
& = \sum_{m=1}^M \mathbb{E} \{|\hat{u}_{mk}|^2\} \mathbb{E} \{|w_{um}|^2 \} = \sum_{m=1}^M \gamma_{mk},
\end{split}
\end{equation}
thanks to the independence between the channel estimate and the noise. The proof follows by inserting \eqref{eq:DSkv1}, \eqref{eq:BUkv1}, \eqref{eq:UIkprk}, and \eqref{eq:ULNoise} into the  SINR in \eqref{eq:ULSINR} with the aid of some algebraic manipulations.
\vspace*{-0.2cm}
\subsection{Proof of Theorem~\ref{theorem:DLMR}}\label{appendix:DLMR}
\vspace*{-0.1cm}
Consider the downlink SINR in \eqref{eq:DLSINR}. Thanks to the uncorrelation between the channel estimate and the channel estimation error, the numerator simplifies to
\begin{equation} \label{eq:DSdk}
|\mathsf{DS}_{dk}|^2 = \rho_{d} \left|   \sum_{m=1}^M  \sqrt{\eta_{mk}} \mathbb{E}\{ |\hat{u}_{mk}|^2 \} \right|^2 = \rho_{d} \left(   \sum_{m=1}^M  \sqrt{\eta_{mk}} \gamma_{mk} \right)^2.
\end{equation}
The beamforming uncertainty term in the denominator of \eqref{eq:DLSINR} can be simplified by using \eqref{eq:ChannelEst}, as follows
\begin{equation} \label{eq:BUdk}
\begin{split}
& \mathbb{E} \{|\mathsf{BU}_{dk}|^2 \} = \rho_d \mathbb{E} \left\{\left| \sum_{m=1}^M o_{dmk} \right|^2 \right\} \\
& = \underbrace{\rho_d \sum_{m=1}^M \sum_{m'=1, m' \neq m}^M \mathbb{E} \{ o_{dmk} o_{dm'k}^{\ast} \}}_{= \bar{T}_{d0}} + \underbrace{\rho_d \sum_{m=1}^M \mathbb{E} \{ |o_{dmk}|^2 \}}_{= \bar{T}_{d1}},
\end{split}
\end{equation} 
where $o_{dmk} = \sqrt{\eta_{mk}} u_{mk} u_{mk}^{\ast} - \sqrt{\eta_{mk}} \mathbb{E} \{ u_{mk} u_{mk}^{\ast} \}$ with $\omega_{mk} = \eta_{mk}$ (see Corollary~\ref{CorollaryReal} for  $\omega_{mk}$). The closed-form expression of $\bar{T}_{d0}$ in \eqref{eq:BUdk} can be formulated as follows
\begin{equation} \label{eq:Td0}
\begin{split}
&\bar{T}_{d0} = \\
& p \tau_p \rho_d  \sum_{k'' \in \mathcal{P}_k} \sum_{m=1}^M \sum_{m'=1, m' \neq m}^M \sqrt{\eta_{mk} \eta_{m'k}}  c_{mk} c_{m'k} \mathrm{tr}(\pmb{\Theta}_{mk} \pmb{\Theta}_{m'k''}),
 \end{split}
\end{equation}
by utilizing \eqref{eq:omkmk} in Corollary~\ref{CorollaryReal}. The expectation $\bar{T}_{d1}$ in \eqref{eq:Td0} can be rewritten as follows
\begin{equation} \label{eq:Td1bar}
\begin{split}
\bar{T}_{d1} &= \underbrace{\rho_d   \sum_{m=1}^M \eta_{mk}  \mathbb{E}\big\{ |u_{mk} \hat{u}_{mk}^\ast |^2 \big\}}_{=T_{d1}}  - \rho_{d}   \sum_{m=1}^M  \eta_{mk} \big|\mathbb{E}\{ |\hat{u}_{mk}|^2\}\big|^2 \\
&=  T_{d1} - \rho_d \sum_{m=1}^M \eta_{mk} \gamma_{mk}^2,
\end{split}
\end{equation}
where we have used the identities $\mathbb{E}\{ |X - \mathbb{E}\{ X\} |^2 \} = \mathbb{E}\{|X|^2 \} - |\mathbb{E}\{ X\}|^2$ and $\mathbb{E}\{ |X+Y|^2 \} = \mathbb{E}\{ |X|^2\} + \mathbb{E}\{ Y|^2 \}$ for zero-mean uncorrelated random variables. By using the identities in \eqref{eq:BUkv1} and \eqref{eq:uhatu}, $T_{d1}$ in \eqref{eq:Td1bar} can be formulated as follows
\begin{equation} \label{eq:Td1}
\begin{split}
& T_{d1} = p\tau_p\rho_d \sum_{m=1}^M \eta_{mk} c_{mk}^2 \mathrm{tr}(\pmb{\Theta}_{mk}^2)  + \rho_d \sum_{m=1}^M \eta_{mk}\gamma_{mk}^2 +  \rho_d \times\\
&  \sum_{m=1}^M \eta_{mk} \gamma_{mk} \delta_{mk}  + p \tau_p \rho_d \sum_{k'' \in \mathcal{P}_k} \sum_{m=1}^M \eta_{mk} c_{mk}^2 \mathrm{tr}(\pmb{\Theta}_{mk}\pmb{\Theta}_{mk''} ) ,
  \end{split}
\end{equation}
by using \eqref{eq:ReceivedPilotv1}. Inserting \eqref{eq:Td0} and \eqref{eq:Td1} into \eqref{eq:BUdk}, we obtain
\begin{equation}
\begin{split}
&\mathbb{E} \{|\mathsf{BU}_{dk}|^2 \} = \rho_d \sum_{m=1}^M \eta_{mk} \gamma_{mk} \delta_{mk} + p \tau_p \rho_d \times \\
& \sum_{k'' \in \mathcal{P}_k} \sum_{m=1}^M \sum_{m'=1}^M \sqrt{\eta_{mk} \eta_{m'k}}  c_{mk} c_{m'k} \mathrm{tr}(\pmb{\Theta}_{mk} \pmb{\Theta}_{m'k''})\\
& + p \tau_p \rho_d \sum_{m=1}^M \eta_{mk} c_{mk}^2 \mathrm{tr}(\pmb{\Theta}_{mk}^2).
\end{split}
\end{equation}
The mutual interference term in the denominator of \eqref{eq:DLSINR} can be rewritten as follows
\begin{equation} \label{eq:UIdkprk}
\begin{split}
&\sum_{k'=1, k' \neq k}^K \mathbb{E} \{|\mathsf{UI}_{dk'k}|^2 \} =  \sum_{k'\in \mathcal{P}_k \setminus \{k\}} \underbrace{\rho_d \mathbb{E} \{|\mathsf{UI}_{dk'k}|^2 \}}_{=T_d} + \\
&\rho_d \sum_{k' \notin \mathcal{P}_k } \sum_{m=1}^M \eta_{mk'} \gamma_{mk'}\delta_{mk} +  p \tau_p \rho_d \times \\
& \sum_{k' \notin \mathcal{P}_k } \sum_{k'' \in \mathcal{P}_{k'}} \sum_{m=1}^M \sum_{m'=1}^M \sqrt{\eta_{mk'} \eta_{m'k'} } c_{mk'} c_{m'k'} \mathrm{tr} ( \pmb{\Theta}_{mk}\pmb{\Theta}_{m'k''}),
\end{split}
\end{equation}
where the first term on the right-hand side of \eqref{eq:UIdkprk} is obtained by exploiting the orthogonality of the pilot sequences. In the second summation of \eqref{eq:UIdkprk}, $T_d = \mathbb{E} \{ |\mathsf{UI}_{dkk'}|^2 \}$ can be rewritten as follows
\begin{equation} \label{eq:Tdv1}
\begin{split}
&T_d =  \rho_{d} \mathbb{E} \left\{\left|\sum_{m=1}^M \sqrt{\eta_{mk'}} u_{mk}\hat{u}_{mk'}^\ast \right|^2 \right\} \\
&= \rho_{d} \mathbb{E} \left\{\left|\sum_{m=1}^M \sqrt{\eta_{mk'}} c_{mk'} u_{mk}y_{pmk'}^\ast \right|^2 \right\}\\
&= \rho_{d}\mathbb{E} \left\{\left|\sum_{m=1}^M \sqrt{\eta_{mk'}} c_{mk'} u_{mk} \left( \sum_{k'' \in \mathcal{P}_{k'}} \sqrt{p\tau_p} u_{mk''}^\ast + w_{pmk'}^\ast   \right) \right|^2 \right\} \\
&= \rho_d \sum_{m=1}^M \eta_{mk'} c_{mk'}^2 \mathbb{E}\{|u_{mk} w_{pmk'}^\ast |^2 \} 
\\
&+ \rho_d p \tau_p  \underbrace{\mathbb{E}\left\{\left|\sum_{m=1}^M \sqrt{\eta_{mk'}} c_{mk'} u_{mk} \left( \sum_{k'' \in \mathcal{P}_{k'} \setminus \{ k\}}  u_{mk''}^\ast   \right) \right|^2 \right\}}_{=T_{d21}} + \\
& \rho_d p \tau_p \underbrace{\mathbb{E}\left\{\left|\sum_{m=1}^M \sqrt{\eta_{mk'}} c_{mk'} |u_{mk}|^2 \right|^2 \right\}}_{=T_{d22}}\\
&= \rho_d \sum_{m=1}^M \eta_{mk'} c_{mk'}^2 \delta_{mk}+ p \tau_p \rho_d  ( T_{d21} + T_{d22}),
\end{split}
\end{equation}
which is obtained by taking into account that the noise is circularly symmetric. The term $T_{d21}$ depends on the non-coherent interference and can be simplified as follows
\begin{equation} \label{eq:Td21}
\begin{split}
& T_{d21} = \sum_{k'' \in \mathcal{P}_{k'} \setminus \{ k\}}  \sum_{m=1}^M  \eta_{mk'} c_{mk'}^2 \mathbb{E}\{|u_{mk}   u_{mk''}^\ast  |^2 \} \\
&+  \sum_{k'' \in \mathcal{P}_{k'} \setminus \{ k\}}  \sum_{m=1}^M \sum_{m'=1, m' \neq m}^M  \sqrt{\eta_{mk'} \eta_{m'k'}} c_{mk'}  c_{m'k'}  \\
& \times \mathbb{E}\{   u_{mk''}^\ast u_{mk}  u_{m'k}^\ast  u_{m'k''}  \} \\
 = & \sum_{k'' \in \mathcal{P}_{k'} \setminus \{ k\}}  \sum_{m=1}^M  \eta_{mk'} c_{mk'}^2 \delta_{mk}\delta_{mk''} + \\
 &\sum_{k'' \in \mathcal{P}_{k'} \setminus \{ k\}}  \sum_{m=1}^M \sum_{m'=1}^M  \sqrt{\eta_{mk} \eta_{m'k'}} c_{mk'} c_{m'k'} \mathrm{tr}(\pmb{\Theta}_{mk}\pmb{\Theta}_{m'k''}),
\end{split}
\end{equation}
by using the second moment in \eqref{eq:2Order} and the uncorrelation among the aggregated channels. The last term $T_{d22}$ in \eqref{eq:Tdv1} can be simplified as follows
\begin{equation} \label{eq:Td22}
\begin{split}
& T_{d22} = \sum_{m=1}^M \sum_{m'=1}^M \sqrt{\eta_{mk'} \eta_{m'k'}} c_{mk'}c_{m'k'} \mathbb{E}\left\{|u_{mk}|^2 |u_{m'k}|^2   \right\}\\
& = \sum_{m=1}^M \eta_{mk'} c_{mk'}^2 \mathbb{E}\left\{|u_{mk}|^4  \right\} + \\
&\sum_{m=1}^M \sum_{m'=1, m' \neq m}^M \sqrt{\eta_{mk'} \eta_{m'k'}} c_{mk'}c_{m'k'} \mathbb{E}\{|u_{mk}|^2 |u_{m'k}|^2\}\\
&=  \sum_{m=1}^M \eta_{mk'} c_{mk'}^2 \delta_{mk}^2 + \sum_{m=1}^M \eta_{mk'} c_{mk'}^2 \mathrm{tr}(\pmb{\Theta}_{mk}^2)\\
& + \left(\sum_{m=1}^M  \sqrt{\eta_{mk'} } c_{mk'}  \delta_{mk} \right)^2
\\
&+ \sum_{m=1}^M \sum_{m'=1}^M \sqrt{\eta_{mk'} \eta_{m'k'}} c_{mk'}c_{m'k'} \mathrm{tr}(\pmb{\Theta}_{mk}\pmb{\Theta}_{m'k})
\end{split}
\end{equation}
where the last equality follows from Lemma~\ref{lemma:ChannelProperty}. By inserting \eqref{eq:Td21} and \eqref{eq:Td22} into \eqref{eq:Tdv1}, and by denoting $\mathsf{I}_{dk} = \sum_{k'=1, k' \neq k}^K \mathbb{E} \{|\mathsf{UI}_{dk'k}|^2 \} $, \eqref{eq:UIdkprk} can be rewritten as follows
\begin{equation} \label{eq:UIdkprkv1}
\begin{split}
&\mathsf{I}_{dk} = \rho_d \sum_{k' =1, k' \neq k}^K \sum_{m=1}^M \eta_{mk'}  \gamma_{mk'} \delta_{mk} + p \tau_p \rho_d \times \\
&   \sum_{k' =1, k' \neq k }^K \sum_{k'' \in \mathcal{P}_{k'}} \sum_{m=1}^M \sum_{m'=1}^M \sqrt{\eta_{mk'} \eta_{m'k'}} c_{mk'}  c_{m'k'} \mathrm{tr} (\pmb{\Theta}_{mk} \pmb{\Theta}_{mk''} ) \\
&+ p \tau_p \rho_d  \sum_{k' \in \mathcal{P}_k \setminus \{ k \}} \sum_{m=1}^M \eta_{mk'}  c_{mk'}^2 \mathrm{tr}(\pmb{\Theta}_{mk}^2) + \\
& p \tau_p \rho_d  \sum_{k' \in \mathcal{P}_k \setminus \{ k\}} \left(\sum_{m=1}^M  \sqrt{\eta_{mk'} } c_{mk'}  \delta_{mk} \right)^2,
\end{split}
\end{equation}
The proof follows, by inserting \eqref{eq:DSdk}, \eqref{eq:BUdk}, and \eqref{eq:UIdkprkv1} into \eqref{eq:DLSINR} and by using some algebraic manipulations.
\vspace*{-0.2cm}
 \bibliographystyle{IEEEtran}
 \bibliography{IEEEabrv,refs}

\begin{thebibliography}{10}
\providecommand{\url}[1]{#1}
\csname url@samestyle\endcsname
\providecommand{\newblock}{\relax}
\providecommand{\bibinfo}[2]{#2}
\providecommand{\BIBentrySTDinterwordspacing}{\spaceskip=0pt\relax}
\providecommand{\BIBentryALTinterwordstretchfactor}{4}
\providecommand{\BIBentryALTinterwordspacing}{\spaceskip=\fontdimen2\font plus
\BIBentryALTinterwordstretchfactor\fontdimen3\font minus
  \fontdimen4\font\relax}
\providecommand{\BIBforeignlanguage}[2]{{%
\expandafter\ifx\csname l@#1\endcsname\relax
\typeout{** WARNING: IEEEtran.bst: No hyphenation pattern has been}%
\typeout{** loaded for the language `#1'. Using the pattern for}%
\typeout{** the default language instead.}%
\else
\language=\csname l@#1\endcsname
\fi
#2}}
\providecommand{\BIBdecl}{\relax}
\BIBdecl

\bibitem{Chien2021a}
T.~V. Chien, H.~Q. Ngo, S.~Chatzinotas, M.~D. Renzo, and B.~Ottersten, ``{RIS}
  and {C}ell-{F}ree {M}assive {MIMO}: {A} marriage for harsh propagation
  environments,'' in \emph{Proc.~IEEE GLOBECOM}, 2021.

\bibitem{giordani2020toward}
M.~Giordani, M.~Polese, M.~Mezzavilla, S.~Rangan, and M.~Zorzi, ``Toward {6G}
  networks: {U}se cases and technologies,'' \emph{{IEEE} Commun. Mag.},
  vol.~58, no.~3, pp. 55--61, 2020.

\bibitem{Chien2017a}
T.~V. Chien and E.~Bj{\"o}rnson, \emph{Massive MIMO Communications}.\hskip 1em
  plus 0.5em minus 0.4em\relax Springer International Publishing, 2017, pp.
  77--116.

\bibitem{rappaport2017overview}
T.~S. Rappaport, Y.~Xing, G.~R. MacCartney, A.~F. Molisch, E.~Mellios, and
  J.~Zhang, ``Overview of millimeter wave communications for fifth-generation
  ({5G}) wireless networks—{W}ith a focus on propagation models,''
  \emph{{IEEE} Trans. Antennas Propag.}, vol.~65, no.~12, pp. 6213--6230, 2017.

\bibitem{Bjornson2016c}
E.~Bj{\"o}rnson, L.~Sanguinetti, and M.~Kountouris, ``Deploying dense networks
  for maximal energy efficiency: {S}mall cells meet massive {MIMO},''
  \emph{{IEEE} J. Sel. Areas Commun.}, vol.~34, no.~4, pp. 832--847, 2016.

\bibitem{van2020power}
T.~Van~Chien, T.~N. Canh, E.~Bj{\"o}rnson, and E.~G. Larsson, ``Power control
  in cellular massive {MIMO} with varying user activity: {A} deep learning
  solution,'' \emph{{IEEE} Trans. Wireless Commun.}, vol.~19, no.~9, pp. 5732
  -- 5748, 2020.

\bibitem{interdonato2019ubiquitous}
G.~Interdonato, E.~Bj{\"o}rnson, H.~Q. Ngo, P.~Frenger, and E.~G. Larsson,
  ``Ubiquitous cell-free massive {MIMO} communications,'' \emph{EURASIP Journal
  on Wireless Communications and Networking}, vol. 2019, no.~1, pp. 1--13,
  2019.

\bibitem{ngo2017cell}
H.~Q. Ngo, A.~Ashikhmin, H.~Yang, E.~G. Larsson, and T.~L. Marzetta,
  ``Cell-free massive {MIMO} versus small cells,'' \emph{{IEEE} Trans. Wireless
  Commun.}, vol.~16, no.~3, pp. 1834--1850, 2017.

\bibitem{9136914}
T.~{Van Chien}, E.~{Björnson}, and E.~G. {Larsson}, ``Joint power allocation
  and load balancing optimization for energy-efficient cell-free {M}assive
  {MIMO} networks,'' \emph{{IEEE} Trans. Wireless Commun.}, vol.~19, no.~10,
  pp. 6798--6812, 2020.

\bibitem{interdonato2021enhanced}
G.~Interdonato, H.~Q. Ngo, and E.~G. Larsson, ``Enhanced normalized conjugate
  beamforming for {C}ell-{F}ree {M}assive {MIMO},'' \emph{{IEEE} Trans.
  Commun.}, vol.~69, no.~5, pp. 2863--2877, 2021.

\bibitem{9064545}
E.~{Björnson} and L.~{Sanguinetti}, ``Scalable cell-free massive {MIMO}
  systems,'' \emph{{IEEE} Trans. Commun.}, vol.~68, no.~7, pp. 4247--4261,
  2020.

\bibitem{wu2019intelligent}
Q.~Wu and R.~Zhang, ``Intelligent reflecting surface enhanced wireless network
  via joint active and passive beamforming,'' \emph{{IEEE} Trans. Wireless
  Commun.}, vol.~18, no.~11, pp. 5394--5409, 2019.

\bibitem{le2020robust}
T.~A. Le, T.~Van~Chien, and M.~Di~Renzo, ``Robust probabilistic-constrained
  optimization for {IRS}-aided {MISO} communication systems,'' \emph{{IEEE}
  Wireless Commun. Lett.}, vol.~10, no.~1, pp. 1--5, 2021.

\bibitem{9140329}
M.~{Di Renzo}, A.~{Zappone}, M.~{Debbah}, M.~S. {Alouini}, C.~{Yuen}, J.~{de
  Rosny}, and S.~{Tretyakov}, ``Smart radio environments empowered by
  reconfigurable intelligent surfaces: {H}ow it works, state of research, and
  the road ahead,'' \emph{{IEEE} J. Sel. Areas Commun.}, vol.~38, no.~11, pp.
  2450--2525, 2020.

\bibitem{9326394}
Q.~{Wu}, S.~{Zhang}, B.~{Zheng}, C.~{You}, and R.~{Zhang}, ``Intelligent
  reflecting surface aided wireless communications: {A} tutorial,''
  \emph{{IEEE} Trans. Commun.}, vol.~69, no.~5, pp. 3313--3351, 2021.

\bibitem{9198125}
M.~M. {Zhao}, Q.~{Wu}, M.~J. {Zhao}, and R.~{Zhang}, ``Intelligent reflecting
  surface enhanced wireless networks: {T}wo-timescale beamforming
  optimization,'' \emph{{IEEE} Trans. Wireless Commun.}, vol.~20, no.~1, pp.
  2--17, 2021.

\bibitem{9200578}
A.~{Zappone}, M.~{Di Renzo}, F.~{Shams}, X.~{Qian}, and M.~{Debbah},
  ``Overhead-aware design of reconfigurable intelligent surfaces in smart radio
  environments,'' \emph{{IEEE} Trans. Wireless Commun.}, vol.~20, no.~1, pp.
  126--141, 2021.

\bibitem{abrardo2020intelligent}
A.~Abrardo, D.~Dardari, and M.~Di~Renzo, ``Intelligent reflecting surfaces:
  {S}um-rate optimization based on statistical position information,''
  \emph{{IEEE} Trans. Commun.}, vol.~69, no.~10, pp. 7121--7136, 2021.

\bibitem{8937491}
B.~{Zheng} and R.~{Zhang}, ``Intelligent reflecting surface-enhanced {OFDM}:
  {C}hannel estimation and reflection optimization,'' \emph{{IEEE} Wireless
  Commun. Lett.}, vol.~9, no.~4, pp. 518--522, 2020.

\bibitem{wei2021channel}
X.~Wei, D.~Shen, and L.~Dai, ``Channel estimation for {RIS} assisted wireless
  communications: {P}art {I}-{F}undamentals, solutions, and future
  opportunities,'' \emph{{IEEE} Commun. Lett.}, vol.~25, no.~5, pp. 1398--1402,
  2021.

\bibitem{perovic2021achievable}
N.~S. Perovi{\'c}, L.-N. Tran, M.~Di~Renzo, and M.~F. Flanagan, ``Achievable
  rate optimization for {MIMO} systems with reconfigurable intelligent
  surfaces,'' \emph{{IEEE} Trans. Wireless Commun.}, vol.~20, no.~6, pp. 3865
  -- 3882, 2021.

\bibitem{zhou2020achievable}
T.~Zhou, K.~Xu, X.~Xia, W.~Xie, and J.~Xu, ``Achievable rate optimization for
  aerial intelligent reflecting surface-aided {C}ell-{F}ree {M}assive {MIMO}
  system,'' \emph{IEEE Access}, vol.~9, pp. 3828 -- 3837, 2020.

\bibitem{zhang2020capacity}
Z.~Zhang and L.~Dai, ``Capacity improvement in wideband reconfigurable
  intelligent surface-aided cell-free network,'' in \emph{Proc.~IEEE SPAWC},
  2020, pp. 1--5.

\bibitem{bashar2020performance}
M.~Bashar, K.~Cumanan, A.~G. Burr, P.~Xiao, and M.~Di~Renzo, ``On the
  performance of reconfigurable intelligent surface-aided cell-free massive
  {MIMO} uplink,'' in \emph{Proc.~IEEE GLOBECOM}, 2020, pp. 1--6.

\bibitem{9286726}
Y.~{Zhang}, B.~{Di}, H.~{Zhang}, J.~{Lin}, Y.~{Li}, and L.~{Song},
  ``Reconfigurable intelligent surface aided cell-free {MIMO} communications,''
  \emph{{IEEE} Wireless Commun. Lett.}, vol.~10, no.~4, pp. 775 -- 779, 2021.

\bibitem{zhang2021beyond}
Y.~Zhang, B.~Di, H.~Zhang, J.~Lin, C.~Xu, D.~Zhang, Y.~Li, and L.~Song,
  ``Beyond cell-free {MIMO}: {E}nergy efficient reconfigurable intelligent
  surface aided cell-free {MIMO} communications,'' \emph{IEEE Trans. on Cog.
  Comm. and Net.}, vol.~7, no.~2, pp. 412--426, 2021.

\bibitem{9195523}
T.~{Van Chien}, L.~T. {Tu}, S.~{Chatzinotas}, and B.~{Ottersten}, ``Coverage
  probability and ergodic capacity of intelligent reflecting surface-enhanced
  communication systems,'' \emph{{IEEE} Commun. Lett.}, vol.~25, no.~1, pp.
  69--73, 2021.

\bibitem{hou2020reconfigurable}
T.~Hou, Y.~Liu, Z.~Song, X.~Sun, Y.~Chen, and L.~Hanzo, ``Reconfigurable
  intelligent surface aided {NOMA} networks,'' \emph{{IEEE} J. Sel. Areas
  Commun.}, vol.~38, no.~11, pp. 2575--2588, 2020.

\bibitem{van2021outage}
T.~Van~Chien, A.~K. Papazafeiropoulos, L.~T. Tu, R.~Chopra, S.~Chatzinotas, and
  B.~Ottersten, ``Outage probability analysis of {IRS}-assisted systems under
  spatially correlated channels,'' \emph{{IEEE} Wireless Commun. Lett.},
  vol.~10, no.~8, pp. 1815 -- 1819, 2021.

\bibitem{di2019smart}
M.~Di~Renzo, M.~Debbah, D.-T. Phan-Huy, A.~Zappone, M.-S. Alouini, C.~Yuen,
  V.~Sciancalepore, G.~C. Alexandropoulos, J.~Hoydis, H.~Gacanin \emph{et~al.},
  ``Smart radio environments empowered by reconfigurable {AI} meta-surfaces:
  {A}n idea whose time has come,'' \emph{EURASIP Journal on Wireless
  Communications and Networking}, vol. 2019, no.~1, pp. 1--20, 2019.

\bibitem{alwazani2020intelligent}
H.~Alwazani, A.~Kammoun, A.~Chaaban, M.~Debbah, and M.-S. Alouini,
  ``Intelligent reflecting surface-assisted multi-user {MISO} communication:
  {C}hannel estimation and beamforming design,'' \emph{IEEE Open Jour. of the
  Commun. Soc.}, vol.~1, pp. 661--680, 2020.

\bibitem{bjornson2020rayleigh}
E.~Bj{\"o}rnson and L.~Sanguinetti, ``Rayleigh fading modeling and channel
  hardening for reconfigurable intelligent surfaces,'' \emph{{IEEE} Wireless
  Commun. Lett.}, vol.~10, no.~4, pp. 830 -- 834, 2021.

\bibitem{clerckx2008correlated}
B.~Clerckx, G.~Kim, and S.~Kim, ``Correlated fading in broadcast {MIMO}
  channels: {C}urse or blessing?'' in \emph{Proc.~IEEE GLOBECOM}, 2008.

\bibitem{Kay1993a}
S.~Kay, \emph{Fundamentals of Statistical Signal Processing: Estimation
  Theory}.\hskip 1em plus 0.5em minus 0.4em\relax Prentice Hall, 1993.

\bibitem{mai2018pilot}
T.~C. Mai, H.~Q. Ngo, M.~Egan, and T.~Q. Duong, ``Pilot power control for
  {C}ell-{F}ree {M}assive {MIMO},'' \emph{{IEEE} Trans. Veh. Technol.},
  vol.~67, no.~11, pp. 11\,264--11\,268, 2018.

\bibitem{van2018joint}
T.~Van~Chien, E.~Bj{\"o}rnson, and E.~G. Larsson, ``Joint pilot design and
  uplink power allocation in multi-cell {M}assive {MIMO} systems,''
  \emph{{IEEE} Trans. Wireless Commun.}, vol.~17, no.~3, pp. 2000--2015, 2018.

\bibitem{cramer2004random}
H.~Cram{\'e}r, \emph{Random variables and probability distributions}.\hskip 1em
  plus 0.5em minus 0.4em\relax Cambridge University Press, 2004, vol.~36.

\bibitem{van2021reconfigurable}
\BIBentryALTinterwordspacing
T.~Van~Chien, H.~Q. Ngo, S.~Chatzinotas, and B.~Ottersten, ``Reconfigurable
  intelligent surface-assisted {M}assive {MIMO}: {F}avorable propagation,
  channel hardening, and rank deficiency,'' \emph{{IEEE} Signal Process. Mag.},
  2021, accepted for publication. [Online]. Available: \url{arXiv preprint
  arXiv:2107.03434}
\BIBentrySTDinterwordspacing

\bibitem{Marzetta2016a}
T.~L. Marzetta, E.~G. Larsson, H.~Yang, and H.~Q. Ngo, \emph{Fundamentals of
  {M}assive {MIMO}}.\hskip 1em plus 0.5em minus 0.4em\relax Cambridge
  University Press, 2016.

\bibitem{gradoni2021end}
G.~Gradoni and M.~Di~Renzo, ``End-to-end mutual coupling aware communication
  model for reconfigurable intelligent surfaces: {A}n electromagnetic-compliant
  approach based on mutual impedances,'' \emph{{IEEE} Wireless Commun. Lett.},
  vol.~10, no.~5, pp. 938 -- 942, 2021.

\bibitem{qian2021mutual}
X.~Qian and M.~Di~Renzo, ``Mutual coupling and unit cell aware optimization for
  reconfigurable intelligent surfaces,'' \emph{{IEEE} Wireless Commun. Lett.},
  vol.~10, no.~6, pp. 1183 -- 1187, 2021.

\bibitem{caire2018ergodic}
G.~Caire, ``On the ergodic rate lower bounds with applications to {M}assive
  {MIMO},'' \emph{{IEEE} Trans. Wireless Commun.}, vol.~17, no.~5, pp.
  3258--3268, 2018.

\bibitem{interdonato2019downlink}
G.~Interdonato, H.~Q. Ngo, P.~Frenger, and E.~G. Larsson, ``Downlink training
  in cell-free {M}assive {MIMO}: {A} blessing in disguise,'' \emph{{IEEE}
  Trans. Wireless Commun.}, vol.~18, no.~11, pp. 5153--5169, 2019.

\bibitem{massivemimobook}
E.~Bj\"{o}rnson, J.~Hoydis, and L.~Sanguinetti, ``Massive {MIMO} networks:
  {Spectral}, energy, and hardware efficiency,'' \emph{Foundations and
  Trends{\textregistered} in Signal Processing}, vol.~11, no. 3-4, pp.
  154--655, 2017.

\bibitem{Chien2020book}
T.~V. Chien and H.~Q. Ngo, \emph{Massive {MIMO} Channels}.\hskip 1em plus 0.5em
  minus 0.4em\relax IET Publishers, 2020, ch.~11.

\end{thebibliography}
 \begin{IEEEbiographynophoto} 
 	{Trinh Van Chien} (S'16-M'20) received the B.S. degree in Electronics and Telecommunications from Hanoi University of Science and Technology (HUST), Vietnam, in 2012. He then received the M.S. degree in Electrical and Computer Enginneering from Sungkyunkwan University (SKKU), Korea, in 2014 and the Ph.D. degree in Communication Systems from Link\"oping University (LiU), Sweden, in 2020. He is now a research associate at University of Luxembourg. His interest lies in convex optimization problems and machine learning applications for wireless communications and image \& video processing. He was an IEEE wireless communications letters exemplary reviewer for 2016 and 2017. He also received the award of scientific excellence in the first year of the 5Gwireless project funded by European Union Horizon's 2020.
 \end{IEEEbiographynophoto}

\begin{IEEEbiographynophoto}
	{Hien Quoc Ngo}  received the B.S. degree in electrical engineering from the Ho Chi Minh City University of Technology, Vietnam, in 2007, the M.S. degree in electronics and radio engineering from Kyung Hee University, South Korea, in 2010, and the Ph.D. degree in communication systems from Link\"oping University (LiU), Sweden, in 2015. In 2014, he visited the Nokia Bell Labs, Murray Hill, New Jersey, USA. From January 2016 to April 2017, Hien Quoc Ngo was a VR researcher at the Department of Electrical Engineering (ISY), LiU. He was also a Visiting Research Fellow at the School of Electronics, Electrical Engineering and Computer Science, Queen's University Belfast, UK, funded by the Swedish Research Council.
	
	Hien Quoc Ngo is currently a Reader (Associate Professor) at Queen's University Belfast, UK. His main research interests include massive (large-scale) MIMO systems, cell-free massive MIMO, physical layer security, and cooperative communications. He has co-authored many research papers in wireless communications and co-authored the Cambridge University Press textbook \emph{Fundamentals of Massive MIMO} (2016).
	
	Dr. Hien Quoc Ngo received the IEEE ComSoc Stephen O. Rice Prize in Communications Theory in 2015, the IEEE ComSoc Leonard G. Abraham Prize in 2017, and the Best PhD Award from EURASIP in 2018. He also received the IEEE Sweden VT-COM-IT Joint Chapter Best Student Journal Paper Award in 2015. He was an \emph{IEEE Communications Letters} exemplary reviewer for 2014, an \emph{IEEE Transactions on Communications} exemplary reviewer for 2015, and an \emph{IEEE Wireless Communications Letters} exemplary reviewer for 2016.  He was awarded the UKRI Future Leaders Fellowship in 2019.
	Dr. Hien Quoc Ngo currently serves as an Editor for the IEEE Transactions on Wireless Communications, IEEE Wireless Communications Letters, Digital Signal Processing, Elsevier Physical Communication (PHYCOM), and IEICE Transactions on Fundamentals of Electronics, Communications and Computer Sciences. He was a Guest Editor of IET Communications, special issue on ``Recent Advances on 5G Communications'' and a Guest Editor of  IEEE Access, special issue on ``Modelling, Analysis, and Design of 5G Ultra-Dense Networks'', in 2017. He has been a member of Technical Program Committees for several IEEE conferences such as ICC, GLOBECOM, WCNC, and VTC.
\end{IEEEbiographynophoto}

 \begin{IEEEbiographynophoto} 
{Symeon Chatzinotas} is Full Professor and Head of the SIGCOM Research Group at SnT, University of Luxembourg. He is coordinating the research activities on communications and networking, acting as a PI for more than 20 projects and main representative for 3GPP, ETSI, DVB. He is currently serving in the editorial board of the IEEE Transactions on Communications, IEEE Open Journal of Vehicular Technology and the International Journal of Satellite Communications and Networking.

In the past, he has been a Visiting Professor at the University of Parma, Italy and was involved in numerous R\&D projects for NCSR Demokritos, CERTH Hellas and CCSR, University of Surrey.

He was the co-recipient of the 2014 IEEE Distinguished Contributions to Satellite Communications Award and Best Paper Awards at EURASIP JWCN, CROWNCOM, ICSSC. He has (co-)authored more than 500 technical papers in refereed international journals, conferences and scientific books. 
\end{IEEEbiographynophoto}

\begin{IEEEbiographynophoto} 
	{Marco Di Renzo} (Fellow, IEEE) received the Laurea (cum laude) and
	Ph.D. degrees in electrical engineering from the University of
	L’Aquila, Italy, in 2003 and 2007, respectively, and the Habilitation
	à Diriger des Recherches (Doctor of Science) degree from University
	Paris-Sud (now Paris-Saclay University), France, in 2013. Since 2010,
	he has been with the French National Center for Scientific Research
	(CNRS), where he is a CNRS Research Director (Professor) with the
	Laboratory of Signals and Systems (L2S) of Paris-Saclay University –
	CNRS and CentraleSupelec, Paris, France. In Paris-Saclay University,
	he serves as the Coordinator of the Communications and Networks
	Research Area of the Laboratory of Excellence DigiCosme, and as a
	Member of the Admission and Evaluation Committee of the Ph.D. School
	on Information and Communication Technologies. He is the
	Editor-in-Chief of IEEE Communications Letters and a Distinguished
	Speaker of the IEEE Vehicular Technology Society. In 2017-2020, he was
	a Distinguished Lecturer of the IEEE Vehicular Technology Society and
	IEEE Communications Society. He has received several research
	distinctions, which include the SEE-IEEE Alain Glavieux Award, the
	IEEE Jack Neubauer Memorial Best Systems Paper Award, the Royal
	Academy of Engineering Distinguished Visiting Fellowship, the Nokia
	Foundation Visiting Professorship, the Fulbright Fellowship, and the
	2021 EURASIP Journal on Wireless Communications and Networking Best
	Paper Award. He is a Fellow of the UK Institution of Engineering and
	Technology (IET), a Fellow of the Asia-Pacific Artificial Intelligence
	Association (AAIA), an Ordinary Member of the European Academy of
	Sciences and Arts (EASA), and an Ordinary Member of the Academia
	Europaea (AE). Also, he is a Highly Cited Researcher.
\end{IEEEbiographynophoto}

\begin{IEEEbiographynophoto} 
{Bj\"orn Ottersten} (S'87–M'89–SM'99–F'04) received the M.S. degree in electrical engineering and applied physics from Linköping University, Linköping, Sweden, in 1986, and the Ph.D. degree in electrical engineering from Stanford University, Stanford, CA, USA, in 1990. He has held research positions with the Department of Electrical Engineering, Linköping University, the Information Systems Laboratory, Stanford University, the Katholieke Universiteit Leuven, Leuven, Belgium, and the University of Luxembourg, Luxembourg. From 1996 to 1997, he was the Director of Research with ArrayComm, Inc., a start-up in San Jose, CA, USA, based on his patented technology. In 1991, he was appointed Professor of signal processing with the Royal Institute of Technology (KTH), Stockholm, Sweden. Dr. Ottersten has been Head of the Department for Signals, Sensors, and Systems, KTH, and Dean of the School of Electrical Engineering, KTH. He is currently the Director for the Interdisciplinary Centre for Security, Reliability and Trust, University of Luxembourg. He is a recipient of the IEEE Signal Processing Society Technical Achievement Award, the EURASIP Group Technical Achievement Award, and the European Research Council advanced research grant twice. He has co-authored journal papers that received the IEEE Signal Processing Society Best Paper Award in 1993, 2001, 2006, 2013, and 2019, and 8 IEEE conference papers best paper awards. He has been a board member of IEEE Signal Processing Society, the Swedish Research Council and currently serves of the boards of EURASIP and the Swedish Foundation for Strategic Research. Dr. Ottersten has served as Editor in Chief of EURASIP Signal Processing, and acted on the editorial boards of IEEE Transactions on Signal Processing, IEEE Signal Processing Magazine, IEEE Open Journal for Signal Processing, EURASIP Journal of Advances in Signal Processing and Foundations and Trends in Signal Processing. He is a fellow of EURASIP.

\end{IEEEbiographynophoto}

\end{document}